\documentclass[journal]{IEEEtran}

\usepackage[english]{babel}
\usepackage[utf8x]{inputenc}
\usepackage[T1]{fontenc}


\usepackage{hyperref}       
\usepackage{url}            
\usepackage{booktabs}       
\usepackage{amsfonts}       
\usepackage{nicefrac}       
\usepackage{microtype}      
\usepackage{xcolor}         
\usepackage{algorithm}
\usepackage{algorithmic}

\usepackage{amsmath}
\usepackage{amssymb}
\usepackage{mathtools}
\usepackage{amsthm}

\usepackage[font=footnotesize]{caption}

\usepackage[capitalize,noabbrev]{cleveref}

\theoremstyle{plain}
\newtheorem{theorem}{Theorem}

\newtheorem{lemma}{Lemma}

\theoremstyle{definition}
\newtheorem{definition}{Definition}
\newtheorem{assumption}{Assumption}
\theoremstyle{remark}

\usepackage[textsize=tiny]{todonotes}

\usepackage{psfrag,epsfig,graphics}
\usepackage{amsmath,amsthm,amssymb,multirow}
\usepackage{mathbbol}
\usepackage{amssymb}   
\usepackage{threeparttable,balance}
\usepackage{subcaption}
\usepackage{cuted}

\usepackage{empheq} 
\newcommand{\boxedeq}[2]{\begin{empheq}[box={\fboxsep=6pt\fbox}]{align}\label{#1}#2\end{empheq}}

\DeclareSymbolFontAlphabet{\amsmathbb}{AMSb}%

\newcommand{\bA}{\boldsymbol{A}}
\newcommand{\bB}{\boldsymbol{B}}
\newcommand{\bC}{\boldsymbol{C}}

\newcommand{\bG}{\boldsymbol{G}}

\newcommand{\bI}{\boldsymbol{I}}

\newcommand{\bM}{\boldsymbol{M}}

\newcommand{\bO}{\boldsymbol{O}}
\newcommand{\bP}{\boldsymbol{P}}

\newcommand{\bS}{\boldsymbol{S}}

\newcommand{\bU}{\boldsymbol{U}}
\newcommand{\bV}{\boldsymbol{V}}

\newcommand{\bX}{\boldsymbol{X}}
\newcommand{\bY}{\boldsymbol{Y}}

\newcommand{\bone}{\boldsymbol{1}}

\newcommand{\bs}{\boldsymbol{s}}
\newcommand{\bu}{\boldsymbol{u}}

\newcommand{\bx}{\boldsymbol{x}}
\newcommand{\bw}{\boldsymbol{w}}

\newcommand{\calB}{\mathcal{B}}

\newcommand{\calF}{\mathcal{F}}
\newcommand{\calG}{\mathcal{G}}

\newcommand{\calO}{\mathcal{O}}

\newcommand{\calS}{\mathcal{S}}

\newcommand{\calM}{\mathcal{M}}
\newcommand{\calN}{\mathcal{N}}

\newcommand{\bbE}{\amsmathbb{E}}

\newcommand{\bbR}{\amsmathbb{R}}

\newcommand{\GR}{\calG_n^p}
\newcommand{\ST}{\calS_n^p}
\newcommand{\OG}{\calO_p}
\newcommand{\bzero}{\boldsymbol{0}}

\newcommand{\bdelta}{\boldsymbol{\delta}}

\newcommand{\bphi}{\boldsymbol{\phi}}
\newcommand{\bvarphi}{\boldsymbol{\varphi}}

\newcommand{\btheta}{\boldsymbol{\theta}}

\newcommand{\bxi}{\boldsymbol{\xi}}

\newcommand{\bLambda}{\boldsymbol{\Lambda}}

\newcommand{\bSigma}{\boldsymbol{\Sigma}}

\newcommand{\cb}[1]{\boldsymbol{#1}}
\newcommand{\tr}{\emph{tr}}

\newcommand{\diag}{\mathrm{diag}}
\newcommand{\col}{\mathrm{col}}
\newcommand{\inj}{\mathrm{inj}}

\usepackage{color}  
\def\cred{\textcolor{black}}
\def\ccred{\textcolor{black}}

\def\cblue{\textcolor{black}}

\definecolor{darkgreen}{rgb}{0., 0.4, 0.}

\begin{document}

\title{Distributed Riemannian Optimization in Geodesically Non-convex Environments}

\author{Xiuheng~Wang,
        Ricardo~Borsoi,
         C\'edric~Richard, 
         and Ali~H.~Sayed
}

\maketitle

\begin{abstract}
This paper studies the problem of distributed Riemannian optimization over a network of agents whose cost functions are geodesically smooth but possibly geodesically non-convex. 
Extending a well-known distributed optimization strategy called diffusion adaptation to Riemannian manifolds, we show that the resulting algorithm, the \textit{Riemannian diffusion adaptation}, 
provably exhibits several desirable behaviors when minimizing a sum of geodesically smooth non-convex functions over manifolds of bounded curvature.
More specifically, we establish that the algorithm can approximately achieve network agreement in the sense that Fr\'echet variance of the iterates among the agents is small. 
Moreover, the algorithm is guaranteed to converge to a first-order stationary point for general geodesically non-convex cost functions. 
When the global cost function additionally satisfies the \cred{local} Riemannian Polyak-Lojasiewicz (PL) condition, we also show that it converges linearly under a constant step size up to a steady-state error.
Finally, we apply this algorithm to decentralized robust principal component analysis (PCA) \cred{and low-rank matrix completion problems} and illustrate its convergence and performance through numerical simulations.
\end{abstract}

\begin{IEEEkeywords}
Riemannian optimization, distributed optimization, diffusion adaptation, geodesically non-convex, robust PCA.
\end{IEEEkeywords}

\section{Introduction}
In the decentralized setting, this work considers geodesically non-convex (g-non-convex) problems where $K$ agents cooperate to solve the following optimization problem over a Riemannian manifold $\calM$:
\begin{equation}
    \label{eq:optimization}
	\min_{w\in \calM} \frac{1}{K}\sum_{k=1}^K J_k(w)\,,
\end{equation}
where $J_k: \calM \to \bbR$ is a local cost function defined for each agent by $J_k(w) = \bbE_{\bx_k}\big\{Q(w;\bx_k)\big\}$ in terms of the expectation of some loss function $Q(w;\bx_k)$. 
The expectation in $J_k(w)$ is computed over the unknown distribution of the data $\bx_k$, which makes it necessary to use a stochastic approximation based on a set of independent realizations $\bx_{k,t}$, observed sequentially over time.
A wide range of applications in machine learning, signal processing, and control can be written in the form of~\eqref{eq:optimization}. For instance, principal component analysis (PCA) can be formulated as minimizing the negative projected variance over the Grassmann manifold~\cite{cunningham2015linear}. Gaussian mixture model inference involves optimizing the log-likelihood function over the manifold of symmetric positive definite matrices~\cite{hosseini2015matrix, collas2023riemannian}. Similarly, low-rank matrix completion seeks to minimize the reconstruction error on the manifold of fixed-rank matrices~\cite{boumal2011rtrmc, vandereycken2013low}. In each of these decentralized settings, the local cost function $J_k$ is defined based on the data available to agent $k$.

\subsection{Related work}
\begin{table*}[t]
\caption{Comparison of modeling assumptions and results for stochastic gradient-based methods. Statements marked with $*$ are for extrinsic methods, based on specific embeddings of the manifolds in Euclidean space. The works marked with $\dag$ establish dynamic regret results.}
\centering	
\begin{tabular}{cccccccc}
		\hline
		{} & Manifold & Convexity & Step size & Results \\ \hline
        \textbf{Centralized} \\
        \cite{bonnabel2013stochasticGradRiemannian} & Riemannian & g-convex & diminishing & asymptotic\\
        \cite{zhang2016firstOrderGeodesicallyConvex} & Hadamard & \cred{g-convex/g-strongly-convex} & diminishing & non-asymptotic \\
        \cite{tripuraneni2018averaging} & Riemannian & g-strongly-convex & diminishing & asymptotic\\
        \cite{wang2021no} & Hadamard & g-convex & diminishing & non-asymptotic$^\dag$ \\
        \cite{hsieh2023riemannian} & Riemannian & g-non-convex & diminishing & asymptotic\\
        \cite{wang2024nonparametric} & Hadamard & g-strongly-convex & constant & non-asymptotic\\
        
        \textbf{Decentralized} \\
        \cite{wang2023incremental} & \cred{unit sphere} & non-convex$^*$ & constant & non-asymptotic \\
        \cite{chen2021decentralized} & Stiefel & non-convex$^*$ & diminishing & non-asymptotic \\
        \cite{sun2024retraction} & \cred{Stiefel} & \cred{non-convex$^*$} & \cred{constant} & \cred{non-asymptotic} \\
        \cite{deng2025decentralized} & compact submanifolds & non-convex$^*$ & diminishing & non-asymptotic \\
        \cite{shah2017distributed} & \cred{Riemannian} & \cred{g-convex} & \cred{diminishing} & \cred{asymptotic} \\
        \cite{wang2025riemannian} & Riemannian & g-convex & constant & non-asymptotic \\
        \cite{chen2024decentralized_dynamic} & Hadamard & g-convex & constant & non-asymptotic$^\dag$ \\
        \cite{sahinoglu2025decentralized} & Riemannian & g-convex & diminishing & non-asymptotic$^\dag$ \\
        \textbf{This work} & \textbf{Riemannian} & \textbf{g-non-convex} & \textbf{constant} & \textbf{non-asymptotic}\\
		\hline
	\end{tabular}
\label{tab:comparison}
\end{table*}

Motivated by these applications, recent works have pursued the study of stochastic optimization algorithms on Riemannian manifolds, both in the centralized and decentralized settings~\cite{bonnabel2013stochasticGradRiemannian}-\cite{sahinoglu2025decentralized}.
The first asymptotic convergence analysis for Riemannian stochastic gradient descent (R-SGD) was presented in~\cite{bonnabel2013stochasticGradRiemannian} under diminishing step sizes. This was followed by the first non-asymptotic convergence guarantees for first-order Riemannian optimization in both geodesically convex (g-convex) and g-non-convex environments in~\cite{zhang2016firstOrderGeodesicallyConvex}. A variant of R-SGD~\cite{tripuraneni2018averaging} generalizing the classical Polyak-Ruppert iterate-averaging scheme was then developed and analyzed for geodesically strongly convex (g-strongly-convex) cases. The Riemannian online optimization was considered in~\cite{wang2021no}, studying the dynamic regret for g-convex functions on Hadamard manifolds. More recently,~\cite{hsieh2023riemannian} investigated the behavior of stochastic algorithms around saddle points in g-non-convex settings. Finally,~\cite{wang2024nonparametric} established the non-asymptotic convergence of R-SGD with a constant step size in the g-strongly-convex setting, applying the findings to change point detection on manifolds.
 
The literature on decentralized Riemannian optimization is broadly categorized into \textit{extrinsic} and \textit{intrinsic} methods.
Extrinsic methods, which are based on the \textit{induced arithmetic mean}~\cite{sarlette2009consensus}, rely on embedding the manifold in a Euclidean space. This dependency often restricts their application to specific manifolds. \cred{Early work considered network agreement problems on Stiefel manifolds~\cite{chen2021local}. For stochastic optimization, various strategies have been adapted to decentralized R-SGD-type algorithms on the unit sphere~\cite{wang2023incremental}, Stiefel manifolds~\cite{chen2021decentralized, sun2024retraction}, and compact submanifolds~\cite{deng2025decentralized}, with non-convex settings in the Euclidean space.}
In contrast, intrinsic methods are developed using the manifold's inherent geometry, leveraging tools like the \textit{Fréchet mean}~\cite{tron2012riemannian} (see Definition~\ref{definition:frechet_mean_variance}), geodesic distance, and exponential mapping. This approach allows them to be applied to a more general class of manifolds.
Several distributed strategies have been developed within this framework. Early work focused on achieving network agreement~\cite{tron2012riemannian,kraisler2023distributed}, while another approach solved g-convex optimization problems using a diminishing step size~\cite{shah2017distributed}.

A more recent line of work has focused on the diffusion adaptation strategy~\cite{chen2012diffusion, sayed2013diffusion}.
This strategy was first extended to general Riemannian manifolds in~\cite{wang2024riemannian}, and a more efficient algorithm was subsequently proposed in~\cite{wang2025riemannian} with convergence guarantees for g-convex costs. 
Concurrent work established a dynamic regret bound for this algorithm on Hadamard manifolds~\cite{chen2024decentralized_dynamic}, which was very recently extended beyond the Hadamard setting in~\cite{sahinoglu2025decentralized}. However, both results only work for g-convex costs.
For ease of reference, we summarize the modeling conditions and results from related works in Table~\ref{tab:comparison}.

\subsection{Contributions}
To the best of our knowledge, this work marks the first study into the analysis of distributed Riemannian optimization methods for g-non-convex costs \cred{on general manifolds}. The key contributions of this work are threefold. 
First, we establish that the Riemannian diffusion adaptation algorithm approximately reaches consensus over the network in sufficient iterations (Theorem~\ref{theorem:network_agreement}).
Second, we show that the iterates of networked agents converge to a first-order stationary point in general g-non-convex cases (Theorem~\ref{theorem:convergence}). Furthermore, under an additional \cred{local} Riemannian Polyak-Lojasiewicz (PL) condition, we show the iterates converge linearly to a \ccred{local} optimum up to a steady-state error (Theorem~\ref{theorem:convergence_pl}). 
Finally, we formulate decentralized robust PCA \cred{and low-rank matrix completion problems on manifolds as examples} of g-non-convex optimization and illustrate the convergence and performance of the algorithm.

The rest of this paper is organized as follows. Section~\ref{sec:background} introduces Riemannian geometry and optimization tools. Section~\ref{sec:algorithm_assumptions} presents the algorithm and the modeling conditions. Sections~\ref {sec:agreement} and~\ref {sec:convergence} present the main theoretical results on network agreement and non-asymptotic convergence, respectively, followed by \cred{applications in Section~\ref{sec:applications}} and numerical experiments in Section~\ref{sec:simulations}. Finally, Section~\ref{sec:conclusion} concludes the paper.

\section{Background}
\label{sec:background}
This section briefly introduces some basic concepts of Riemannian geometry~\cite{lee2006riemannian, do2016differential}, focusing on the essential tools for manifold optimization~\cite{absil2009manoptBook, boumal2023introduction}.

A \textit{Riemannian manifold} $(\calM, g)$ is a constrained set $\calM$ endowed with a \textit{Riemannian metric} $g_x(\cdot, \cdot):T_x\calM\times T_x\calM\to\bbR$, defined for every point $x\in\calM$, with $T_x\calM$ denoting the so-called \textit{tangent space} of $\calM$ at $x$. 
A \textit{geodesic} $\gamma_v:[0,1]\to\calM$ is the curve of minimal length linking two points $x,y\in\calM$ such that $x=\gamma(0)$ and $y=\gamma(1)$, with $v\in T_x\calM$ the velocity of $\gamma_v$ at $0$ denoted by $\dot{\gamma}_v(0)$. The \textit{geodesic distance} $d_{\calM}(\cdot\,,\cdot):\calM\times\calM\to\bbR$ is defined as the length of the geodesic linking two points $x,y\in\calM$. 

The \textit{exponential map} $w=\exp_{x}(v)$ is defined as the point $w\in\calM$ located on the unique geodesic $\gamma_v(t)$ with endpoints $x=\gamma_v(0)$, $w=\gamma_v(1)$ and velocity $v=\dot{\gamma}_v(0)$. 
Consider a smooth function $f:\calM \to \bbR$. The \textit{Riemannian gradient} of $f$ at $x\in\calM$ is defined as the unique tangent vector $\nabla f(x)\in T_x\calM$ satisfying
	$\frac{d}{dt}\big|_{t=0}f(\exp_{x}(tv)) = \langle\nabla f(x), v\rangle_x$,
for all $v\in T_x\calM$. 
For a smooth map $F:\calM\to\calN$ between two manifolds, the 
\textit{differential} of $F$ at $x\in\calM$ is the linear map
$DF(x):T_x\calM \to T_{F(x)}\calN$, defined as $
DF(x)[v] = \tfrac{d}{dt}\big|_{t=0} F(\exp_x(tv))$
for any $v\in T_x\calM$.  
The \textit{second differential} of $F$ at $x$ is the bilinear map $D^2F(x):T_x\calM\times T_x\calM \to T_{F(x)}\calN$ defined as
$D^2F(x)[u,v] = \tfrac{d}{dt}\big|_{t=0} DF(\exp_x(tu))[v]$ for any $u,v\in T_x\calM$.
We further define \textit{parallel transport} $\Gamma_x^y: T_x\calM \to T_y\calM$ as the map transporting a vector $v\in T_x\calM$ to $T_y\calM$ along a path $\exp_x(v)$ connecting $x$ to $y = \exp_x(v)$ such that the induced vector fields are parallel. 
The map $\Gamma_x^y$ is an isometry.
We also consider a \textit{vector transport} map $\Lambda_x^y: T_x\calM \to T_y\calM$ defined as the differential $D\exp_x(\exp_x^{-1}(y))$ of the exponential mapping. Note that $\exp_x(\cdot)$ can be regarded as one particular \textit{retraction}.
\cred{The definition of \textit{sectional curvature} $\kappa$ is the Gauss curvature of a two-dimensional submanifold formed as the
image of a two-dimensional subspace of a tangent space after exponential mapping~\cite{zhang2016firstOrderGeodesicallyConvex}.}

\section{The algorithm and assumptions}
\label{sec:algorithm_assumptions}

\subsection{Riemannian diffusion adaptation}
\label{ssec:algorithm}

For the case of the optimization problem~\eqref{eq:optimization} when $\calM$ is the Euclidean space, the well-known diffusion adaptation strategy has been proposed in~\cite{chen2012diffusion, sayed2013diffusion} and demonstrated in~\cite{sayed2014adaptationLearningOptimization,chen2015learning} to offer improved performance and stability guarantees under constant step size learning and adaptive scenarios.
Recently, this strategy has been extended to Riemannian manifolds in~\cite{wang2024riemannian}, and an efficient algorithm with convergence guarantees was proposed in~\cite{wang2025riemannian} as follows: 
\boxedeq{eq:diffusion}{
\begin{split}
    \bphi_{k, t} &= \exp_{\bw_{k, t-1}}\big( - \mu \widehat{\nabla J}_k(\bw_{k, t-1})\big) \,, \\
    \bw_{k, t} & = \exp_{\bphi_{k, t}}\left(\alpha \sum_{\ell=1}^K c_{\ell k} \exp^{-1}_{\bphi_{k, t}}(\bphi_{\ell, t})\right)\,, 
\end{split}}
where $\mu$ and $\alpha$ are constant step sizes, $\widehat{\nabla J}_k$ is the Riemannian stochastic gradient, where the expectation is approximated by the independent realization $\bx_{k,t}$. The Riemannian diffusion adaptation algorithm in~\eqref{eq:diffusion} contains two steps: an \textit{adaptation step} where agent $k$ uses R-SGD to update its solution $\bphi_{k, t}$ and a \textit{combination step} where the intermediate estimates $\{\bphi_{\ell, t}\}$ are combined, on the tangent space of $\bphi_{k, t}$, according to the weighting coefficients $\{c_{\ell k}\}$ to obtain the estimate $\bw_{k, t}$. 
However, the behavior of the algorithm~\eqref{eq:diffusion} has only been theoretically studied under g-convex costs~\cite{wang2025riemannian, chen2024decentralized_dynamic, sahinoglu2025decentralized}. The main contribution of this work is the analysis of this algorithm in g-non-convex environments.

\cred{The key difficulty in the g-non-convex case is that the intrinsic combination step is nonlinear, so neither linear consensus contractions in extrinsic schemes~\cite{wang2023incremental, chen2021decentralized, sun2024retraction, deng2025decentralized} nor g-convexity-based decoupling in~\cite{shah2017distributed, wang2025riemannian} applies. We therefore derive curvature-dependent bounds that track the Fr\'echet variance through both adaptation and combination steps in Lemma~\ref{lemma:frechet_variance_descent}, and control the consensus-bias term to close the non-asymptotic recursions via a Taylor expansion of composite exponential maps as in Lemma~\ref{lemma:taylor_expansion} and~\ref{lemma:consensus_bias_bound}.}

\cred{To solve~\eqref{eq:optimization}, constrained diffusion strategies with projection or proximal steps have been developed to handle explicit constraints~\cite{nassif2017diffusion,alghunaim2019proximal}. In contrast, Riemannian diffusion adaptation operates directly on the manifold, where the update direction is computed on the tangent space, and the exponential map is used to map back to the manifold. As a result, the analysis of Riemannian diffusion adaptation must explicitly account for curvature and related geometric effects.}

\subsection{Modeling conditions}
\label{ssec:assumptions}

Following standard assumptions in the literature on distributed optimization~\cite{duchi2011dual, sayed2014adaptationLearningOptimization, yuan2016convergence, alghunaim2022unified}, we impose certain properties on the weighted adjacency matrix $C \triangleq [c_{\ell k}]$, which governs the interactions among agents represented as vertices of the graph $\calG$. 
\begin{assumption}[\textbf{Regularization on graph}]
\label{assumption:graph}
Assume that the undirected graph $\calG$ is strongly connected and its adjacency matrix $C$ is symmetric and doubly stochastic, i.e. $c_{\ell k}\geq 0, \sum_{\ell=1}^K  c_{\ell k} = \sum_{k=1}^K  c_{\ell k} = 1$.
\end{assumption}
Under Assumption~\ref{assumption:graph}, we can recall the following lemma about the spectral properties of $C$ as in~\cite{duchi2011dual, sayed2014adaptationLearningOptimization}.
\begin{lemma}
\label{lemma:graph_mixing_rate}
Under Assumption~\ref{assumption:graph}, the adjacency matrix $C$ has a single eigenvalue at one, denoted by $\lambda_1=1$. Moreover, all other eigenvalues, denoted by $\{\lambda_k\}_{k=1}^K$, are strictly less than one in magnitude. The mixing rate $\lambda$ of the network is defined by:
\begin{align}
    \lambda \triangleq \rho\Big(C-\frac{1}{K}\bone\bone^T\Big) =\max_{k\in\{2, \cdots, K\}}|\lambda_k| < 1\,,
    \label{eq:graph_mixing_rate}
\end{align}
where $\rho(\cdot)$ denotes the spectral radius and $\bone$ represents the all-ones vector.
\end{lemma}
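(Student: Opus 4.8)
The plan is to combine elementary facts about symmetric matrices with the Perron--Frobenius theory for nonnegative irreducible matrices. First, double stochasticity gives $C\bone=\bone$, so $\lambda=1$ is an eigenvalue with eigenvector $\bone$. Since $C$ is symmetric it is orthogonally diagonalizable with real eigenvalues, and since $C$ is stochastic with nonnegative entries its maximum row sum and column sum both equal $1$, hence $\rho(C)\le\|C\|_2\le\sqrt{\|C\|_1\|C\|_\infty}=1$. Thus every eigenvalue of $C$ lies in $[-1,1]$ and $\lambda_1=1$ attains the spectral radius.

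Next I would establish that the eigenvalue $1$ is simple. Strong connectivity of the undirected graph $\calG$ means the matrix $C$ (which has the zero/nonzero pattern of the graph's adjacency) is irreducible, so the Perron--Frobenius theorem applies: the spectral radius $\rho(C)=1$ is a simple eigenvalue with a one-dimensional eigenspace, necessarily spanned by $\bone$. To upgrade this to the statement that \emph{all} other eigenvalues are strictly less than one in magnitude, it remains to exclude the value $-1$, the only other boundary point of $[-1,1]$. This is the step that genuinely requires the Markov chain associated with $C$ to be aperiodic; under the common strengthening that $\calG$ is not bipartite (equivalently, $c_{kk}>0$ for some $k$), Perron--Frobenius guarantees that $1$ is the unique eigenvalue of modulus one, so $|\lambda_k|<1$ for $k=2,\dots,K$, as asserted in the references~\cite{duchi2011dual, sayed2014adaptationLearningOptimization}.

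Finally, for the mixing-rate identity I would use the spectral decomposition $C=\tfrac{1}{K}\bone\bone^T+\sum_{k=2}^{K}\lambda_k\bv_k\bv_k^T$ in an orthonormal eigenbasis $\{\bone/\sqrt{K},\bv_2,\dots,\bv_K\}$. Subtracting the rank-one term gives $C-\tfrac{1}{K}\bone\bone^T=\sum_{k=2}^{K}\lambda_k\bv_k\bv_k^T$, which is symmetric with eigenvalues $\{\lambda_k\}_{k=2}^{K}$ together with $0$, so $\rho\!\big(C-\tfrac{1}{K}\bone\bone^T\big)=\max_{k\in\{2,\dots,K\}}|\lambda_k|$, and this is $<1$ by the preceding step. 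The main obstacle is precisely the aperiodicity issue in the middle step: irreducibility alone only yields $|\lambda_k|\le 1$, and ruling out $\lambda_k=-1$ requires the graph to be non-bipartite (or, as is standard in the diffusion literature, a positive self-weight at some agent); the remaining computations are routine.
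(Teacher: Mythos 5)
Your proof is correct and is essentially the standard argument; the paper itself offers no proof of this lemma, merely recalling it from~\cite{duchi2011dual, sayed2014adaptationLearningOptimization}, so there is no alternative route in the paper to compare against. The chain double stochasticity $\Rightarrow$ $C\bone=\bone$, symmetry $\Rightarrow$ real spectrum, $\rho(C)\le 1$ via the row/column sums, simplicity of the eigenvalue $1$ via irreducibility and Perron--Frobenius, and the rank-one deflation $C-\tfrac{1}{K}\bone\bone^T=\sum_{k\ge 2}\lambda_k\bv_k\bv_k^T$ giving $\rho(C-\tfrac{1}{K}\bone\bone^T)=\max_{k\ge 2}|\lambda_k|$ are all sound.

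Your remark about the middle step is the substantive point, and you are right to flag it: Assumption~\ref{assumption:graph} as literally stated (strongly connected, symmetric, doubly stochastic) does not exclude $-1$ from the spectrum. The two-agent example $C=\bigl(\begin{smallmatrix}0&1\\1&0\end{smallmatrix}\bigr)$ satisfies every clause of the assumption yet has eigenvalues $\{1,-1\}$, so $\lambda=1$ and the lemma's strict inequality fails. The claim $|\lambda_k|<1$ for $k\ge 2$ requires primitivity of $C$, i.e.\ aperiodicity of the associated chain, which for an undirected graph amounts to non-bipartiteness and is guaranteed by a positive self-weight $c_{kk}>0$ at some agent. This is implicitly assumed throughout the diffusion-adaptation literature the paper cites (combination rules such as Metropolis weights always place positive mass on the diagonal), but it is not a consequence of the stated assumption. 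Your proof is therefore complete under that standard additional hypothesis, and your identification of the missing hypothesis is itself a correct and useful observation about the paper.
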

Let $\calB\subseteq\calM^K$ denote the \cred{g-convex subset} of the product manifold $\calM^K$, \cred{which is $K$-fold cartesian product of $\calM$ with itself.} We introduce the following standard assumptions \cred{on $\calB$} in the literature on Riemannian optimization~\cite{afsari2011riemannianCenterOfMass, tron2012riemannian, bonnabel2013stochasticGradRiemannian, zhang2016fastRiemannianStochasticOptim, tripuraneni2018averaging, wang2025riemannian}.
\begin{assumption}[\textbf{Regularization on manifold}]
    \label{assumption:manifold}
   (a) The sequences $\{\bphi_{\ell,t}\}_{t\geq 0}$ and $\{\bw_{\ell,t}\}_{t\geq 0}$ generated by the algorithm stay continuously in $\calB$, and $J$ attains its optimum $\bw^*$ in $\calB$; (b) the sectional curvature in $\mathcal{B}$ is \emph{upper} bounded by $\kappa_{\max}$; (c) the sectional curvature in $\mathcal{B}$ is \emph{lower} bounded by $\kappa_{\min}$; and (d) $\mathcal{B}$ is compact, and the diameter of $\mathcal{B}$ is bounded by $B$, that is, $\max_{x,y\in\mathcal{B}} d(x,y) \le B$; (e) $B < B^*$, where $B^*$ is defined as $B^* \triangleq \min(\inj(\calM), \frac{\pi}{2\sqrt{\kappa_{max}}})$ with $\inj(\calM)$ the injectivity radius of $\calM$ \cred{(see Definition 10.19 in~\cite{boumal2023introduction})}, which implies that the exponential map is invertible within $\mathcal{B}$.
\end{assumption}
Under Assumption~\ref{assumption:manifold}, we can recall the following lemma about trigonometric distance bounds, which is essential in the analysis of Riemannian optimization algorithms. This lemma is adapted from Proposition 2.1 of~\cite{jordan2022first}, which is composed of Lemma 5 of~\cite{zhang2016firstOrderGeodesicallyConvex} and Corollary 2.1 of~\cite{alimisis2020continuous}.
\begin{lemma}\label{lemma:trigonometric_distance}
Suppose that $a,b,c$ are the \cred{sides (i.e., side lengths)} of a geodesic triangle in a Riemannian manifold with sectional curvature $\kappa$, and $A$ is the angle between sides $b$ and $c$ (defined through the inverse exponential map and inner product in tangent space). Then, we have:

(i) If  $\kappa$ is lower bounded by $\kappa_{\min}$, then 
\begin{equation}
\label{eq:trigonometric_upper_bound}
a^2 \leq \zeta_1 \cdot b^2 + c^2 - 2bc\,\cos(A),
\end{equation}
where $\zeta_1 \triangleq B\sqrt{-\kappa_{\min}}\coth(B\sqrt{-\kappa_{\min}}) > 1$ if $\kappa_{\min} < 0$ and $\zeta_1 \triangleq 1$ if $\kappa_{\min} \geq 0$. 

(ii) If  $\kappa$ is upper bounded by $\kappa_{\max} > 0$ and the diameter of $\calM$ is bounded by $\frac{\pi}{2\sqrt{\kappa_{\max}}}$, then 
\begin{equation}
\label{eq:trigonometric_lower_bound}
a^2 \geq \zeta_2\cdot b^2 + c^2 - 2bc\cos(A), 
\end{equation}
where $\zeta_2 \triangleq 1$ for $\kappa_{\max} \leq 0$ and $0 <\zeta_2 \triangleq B\sqrt{\kappa_{\max}}\cot(B\sqrt{\kappa_{\max}}) < 1$ for $\kappa_{\max} > 0$.
\end{lemma}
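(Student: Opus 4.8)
The plan is to reduce each inequality to the exact law of cosines in the model space of constant curvature and then replace the resulting side-length-dependent constants by the uniform ones using the diameter bound in Assumption~\ref{assumption:manifold}.

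For part~(i), I would first invoke the Toponogov comparison theorem: since the sectional curvature along the geodesic triangle is bounded below by $\kappa_{\min}$, the triangle is no thinner than the comparison triangle with the same sides $b,c$ and included angle $A$ in the simply connected space form of constant curvature $\kappa_{\min}$, so $a$ is at most the corresponding side there. When $\kappa_{\min}<0$, setting $c_0=\sqrt{-\kappa_{\min}}$, the hyperbolic law of cosines $\cosh(c_0 a)=\cosh(c_0 b)\cosh(c_0 c)-\sinh(c_0 b)\sinh(c_0 c)\cos A$ rearranges (exactly as in Lemma~5 of~\cite{zhang2016firstOrderGeodesicallyConvex}) into $a^2\le \big(c_0 c\coth(c_0 c)\big)\,b^2+c^2-2bc\cos A$. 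Because $t\mapsto t\coth t$ is increasing on $(0,\infty)$ and the coefficient of $b^2$ is positive, and because $c\le B$ by Assumption~\ref{assumption:manifold}(d), replacing $c$ by $B$ only enlarges the right-hand side; this yields~\eqref{eq:trigonometric_upper_bound} with $\zeta_1=B\sqrt{-\kappa_{\min}}\coth(B\sqrt{-\kappa_{\min}})>1$. For $\kappa_{\min}\ge 0$ the model space is Euclidean or spherical, the comparison collapses to the ordinary law of cosines, and $\zeta_1=1$ suffices.

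For part~(ii), the argument is dual. The upper curvature bound $\kappa_{\max}>0$ together with the diameter bound $B<\tfrac{\pi}{2\sqrt{\kappa_{\max}}}$ (Assumption~\ref{assumption:manifold}(e)) places the triangle in a region where the spherical (Alexandrov-type) comparison of Corollary~2.1 of~\cite{alimisis2020continuous} applies, so now $a$ is at least the corresponding side in the sphere of constant curvature $\kappa_{\max}$. With $c_1=\sqrt{\kappa_{\max}}$, the spherical law of cosines rearranges into $a^2\ge \big(c_1 c\cot(c_1 c)\big)\,b^2+c^2-2bc\cos A$ with $c_1 c<c_1 B<\tfrac{\pi}{2}$. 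Since $t\mapsto t\cot t$ is strictly decreasing on $(0,\tfrac{\pi}{2})$, we get $c_1 c\cot(c_1 c)\ge Bc_1\cot(Bc_1)>0$, and because this coefficient multiplies $b^2\ge 0$, lowering it preserves the inequality, giving~\eqref{eq:trigonometric_lower_bound} with $\zeta_2=B\sqrt{\kappa_{\max}}\cot(B\sqrt{\kappa_{\max}})\in(0,1)$. The case $\kappa_{\max}\le 0$ is nonpositively curved, reducing once more to the Euclidean law of cosines with $\zeta_2=1$.

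The statement is essentially Proposition~2.1 of~\cite{jordan2022first}, so the cleanest route is to cite that result and spell out only the monotonicity step that re-expresses the constants via the diameter. The points that need care are: (a) verifying the comparison theorems' hypotheses --- geodesic convexity of $\calB$, $B$ below the injectivity radius, and $B<\tfrac{\pi}{2\sqrt{\kappa_{\max}}}$ --- all of which are exactly Assumption~\ref{assumption:manifold}(a)--(e); and (b) getting the direction of the monotonicity right, namely that inflating the side length to $B$ keeps the ``$\le$'' in part~(i) (via $t\coth t$ increasing) while deflating it to $B$ keeps the ``$\ge$'' in part~(ii) (via $t\cot t$ decreasing), together with confirming the signs of the coefficients of $b^2$. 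This sign bookkeeping is the only real subtlety; the geometric comparison inputs are standard.
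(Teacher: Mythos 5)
Your argument is correct and matches the paper's treatment: the paper does not prove this lemma but simply recalls it from Proposition~2.1 of~\cite{jordan2022first} (built on Lemma~5 of~\cite{zhang2016firstOrderGeodesicallyConvex} and Corollary~2.1 of~\cite{alimisis2020continuous}), which is exactly the comparison-theorem route you describe. Your reconstruction of the underlying argument (hyperbolic/spherical law of cosines plus the monotonicity of $t\coth t$ and $t\cot t$ to replace the side length by the diameter bound $B$) is sound, and your suggestion to cite the result directly is precisely what the authors do.
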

We also need the following lemma showing that both the exponential map and its inverse are Lipschitz.
\begin{lemma}\cite{sun2019escaping}
    \label{lemma:lipschitz_exponential_map} 
    Let $x,y,z \in \calM$. Under Assumption~\ref{assumption:manifold}, we have:
\begin{equation}
    \frac{d(y,z)}{1+C_\kappa B^2}\leq \cred{\|}\exp_x^{-1}(y)-\exp_x^{-1}(z) \cred{\|}\leq(1+C_\kappa B^2)d(y,z)\,,
\end{equation}
where $C_\kappa$ is a constant depending on the curvature of the manifold $\calM$.
\end{lemma}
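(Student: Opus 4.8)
The plan is to prove the two inequalities by relating the geodesic distance $d(y,z)$ to the Euclidean distance $\|\exp_x^{-1}(y)-\exp_x^{-1}(z)\|$ between the corresponding tangent vectors in $T_x\calM$, with two-sided control of the differential $D\exp_x(v)$ for $\|v\|\le B$ as the essential input. First I would set up the geometry: by Assumption~\ref{assumption:manifold}, $x,y,z$ lie in the (geodesically convex) convexity submanifold $\calB$, whose diameter is at most $B<B^*\le\inj(\calM)$, so $\exp_x$ restricts to a diffeomorphism from the ball $\{v\in T_x\calM:\|v\|< B'\}$ onto its image for any $B'\in(B,B^*)$; consequently $\|\exp_x^{-1}(p)\|=d(x,p)\le B$ for every $p\in\calB$, the minimizing geodesic from $y$ to $z$ stays inside $\calB$, and curves through $\calB$ lift smoothly through $\exp_x^{-1}$.

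The analytic core is a Rauch-type estimate: for $v\in T_x\calM$ with $\|v\|=r\le B$, the differential $D\exp_x(v)$ acts isometrically on the direction radial to $v$ (Gauss's lemma), while on directions orthogonal to $v$ Jacobi-field comparison along the radial geodesic $t\mapsto\exp_x(tv)$ confines all singular values of $D\exp_x(v)$ to the interval
\begin{equation*}
\left[\;\frac{\sin(\sqrt{\kappa_{\max}}\,r)}{\sqrt{\kappa_{\max}}\,r}\;,\;\frac{\sinh(\sqrt{-\kappa_{\min}}\,r)}{\sqrt{-\kappa_{\min}}\,r}\;\right],
\end{equation*}
with the usual conventions when $\kappa_{\max}\le 0$ or $\kappa_{\min}\ge 0$; the bound $B\sqrt{\kappa_{\max}}<\pi/2$ guarantees these Jacobi fields do not vanish. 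Since $B$ is fixed and both $s\mapsto s/\sin(s)-1$ and $s\mapsto \sinh(s)/s-1$ are dominated by a constant multiple of $s^2$ on the relevant compact interval $[0,B\sqrt{\kappa_{\max}}]$ resp.\ $[0,B\sqrt{-\kappa_{\min}}]$, I can absorb both endpoints into a single constant: there exists $C_\kappa>0$, depending only on the curvature bounds (and $B$), such that $\|D\exp_x(v)\|_{\mathrm{op}}\le 1+C_\kappa B^2$ and $\|D\exp_x(v)^{-1}\|_{\mathrm{op}}\le 1+C_\kappa B^2$ for all $\|v\|\le B$.

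The upper bound then follows by lifting a geodesic. Let $\sigma:[0,d(y,z)]\to\calB$ be the unit-speed minimizing geodesic from $y$ to $z$ and put $\tilde\sigma(s)=\exp_x^{-1}(\sigma(s))$, so that $\sigma=\exp_x\circ\tilde\sigma$ yields $\dot\sigma(s)=D\exp_x(\tilde\sigma(s))[\dot{\tilde\sigma}(s)]$ and hence $\|\dot{\tilde\sigma}(s)\|\le\|D\exp_x(\tilde\sigma(s))^{-1}\|_{\mathrm{op}}\le 1+C_\kappa B^2$; integrating the vector-valued derivative in $T_x\calM$ gives $\|\exp_x^{-1}(y)-\exp_x^{-1}(z)\|\le\int_0^{d(y,z)}\|\dot{\tilde\sigma}(s)\|\,ds\le(1+C_\kappa B^2)\,d(y,z)$. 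For the lower bound I would instead push the straight segment $\ell(s)=(1-s)\exp_x^{-1}(y)+s\exp_x^{-1}(z)$, $s\in[0,1]$ (which stays in the $B$-ball by convexity of the ball), down to the curve $s\mapsto\exp_x(\ell(s))$ from $y$ to $z$; its length is at least $d(y,z)$, while $\|\tfrac{d}{ds}\exp_x(\ell(s))\|=\|D\exp_x(\ell(s))[\dot\ell(s)]\|\le(1+C_\kappa B^2)\|\exp_x^{-1}(y)-\exp_x^{-1}(z)\|$ for every $s$, so $d(y,z)\le(1+C_\kappa B^2)\|\exp_x^{-1}(y)-\exp_x^{-1}(z)\|$, which is the left inequality after rearranging.

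The step I expect to be the main obstacle is the Jacobi-field/Rauch comparison that supplies the two-sided operator-norm bounds on $D\exp_x$, together with the bookkeeping needed to keep every curve inside the ball of radius below $B^*$ where $\exp_x$ is a diffeomorphism; once these are secured the remaining estimates are routine calculus. (We also note the stated inequality is to be read with the norm $\|\exp_x^{-1}(y)-\exp_x^{-1}(z)\|$ of the tangent vectors at $x$.)
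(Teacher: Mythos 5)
Your proof is correct: the two-sided Rauch/Jacobi-field bound on the singular values of $D\exp_x(v)$ for $\|v\|\le B$ (valid because Assumption~\ref{assumption:manifold}(e) gives $B\sqrt{\kappa_{\max}}<\pi/2$), followed by lifting the minimizing geodesic through $\exp_x^{-1}$ for the upper bound and pushing the straight tangent segment through $\exp_x$ for the lower bound, is exactly the standard argument. Note that the paper does not prove this lemma at all --- it imports it from \cite{sun2019escaping} --- and your derivation matches the one given there, so there is nothing to flag beyond the minor point (which you already acknowledge) that $C_\kappa$ also absorbs a dependence on $B$ through the compact interval $[0,B\sqrt{-\kappa_{\min}}]$ on which $\sinh(s)/s-1$ is compared to $s^2$.
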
  
Meanwhile, we require the cost function $J_k$ at each agent to be geodesically smooth.
\begin{assumption}[\textbf{Geodesic smoothness}]
\label{assumption:smooth}
Assume the function $J_k$ is differentiable and geodesically $L$-smooth (i.e., its gradient is $L$-Lipschitz), that is, for any $x, y \in \calM$, it satisfies:
\begin{equation}
\label{eq:smooth}
    J_k(y) \leq J_k(x) + \langle \nabla J_k(x), \exp_x^{-1}(y)\rangle  + \frac{L}{2} \|\exp_x^{-1}(y)\|^2 \,,
\end{equation}
where the gradient of a function $J_k$ is said to be $L$-Lipschitz if, for any $x, y \in \calM$ in the domain of $J_k$, it satisfies:
\begin{equation}
\label{eq:lipschitz}
\big\|\nabla J_k(x) - \Gamma_y^x \nabla J_k(y) \big\| \leq L \, \|\exp_x^{-1}(y)\| \,.
\end{equation}
\end{assumption} 
Under Assumptions~\ref{assumption:manifold} and~\ref{assumption:smooth}, we can recall the following lemma about the boundedness of the gradient.
\begin{lemma}\cite{wang2025riemannian}
\label{lemma:gradient_bound}
Under assumptions~\ref{assumption:manifold} and~\ref{assumption:smooth}, we have:
\begin{align}
    \label{eq:gradient_bound}
    \forall k\in\{1, \cdots, K\}\,,\quad
    \|\nabla J_k(\bw_{k, t})\| &\leq G\,, 
\end{align} 
for a non-negative constant $G < \infty$.
\end{lemma}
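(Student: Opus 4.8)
The plan is to exploit the geodesic $L$-smoothness of $J_k$ together with the bounded diameter of the convexity submanifold $\calB$, by comparing the gradient at the iterate $\bw_{k,t}$ to the gradient at a single fixed reference point. Concretely, I would fix any $x_0 \in \calB$ (for instance the optimum $\bw^*$ guaranteed by Assumption~\ref{assumption:manifold}(a), though any point of $\calB$ works); since $J_k$ is differentiable, $\nabla J_k(x_0)$ is a well-defined tangent vector in $T_{x_0}\calM$, so $\|\nabla J_k(x_0)\| =: G_0 < \infty$ without any further hypothesis.

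Next, for an arbitrary iterate $\bw_{k,t}$, which lies in $\calB$ by Assumption~\ref{assumption:manifold}(a), I would add and subtract the parallel transport $\Gamma_{x_0}^{\bw_{k,t}}\nabla J_k(x_0)$ and apply the triangle inequality:
\begin{equation*}
\|\nabla J_k(\bw_{k,t})\| \le \big\|\nabla J_k(\bw_{k,t}) - \Gamma_{x_0}^{\bw_{k,t}}\nabla J_k(x_0)\big\| + \big\|\Gamma_{x_0}^{\bw_{k,t}}\nabla J_k(x_0)\big\|.
\end{equation*}
The second term equals $\|\nabla J_k(x_0)\| = G_0$ because parallel transport is an isometry (Section~\ref{sec:background}). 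For the first term, since both $\bw_{k,t}$ and $x_0$ lie in $\calB$ and the diameter of $\calB$ is strictly below $B^*$ by Assumption~\ref{assumption:manifold}(d)--(e), the inverse exponential map between them is well defined, so the $L$-Lipschitz condition~\eqref{eq:lipschitz} applies and gives
\begin{equation*}
\big\|\nabla J_k(\bw_{k,t}) - \Gamma_{x_0}^{\bw_{k,t}}\nabla J_k(x_0)\big\| \le L\,\big\|\exp_{\bw_{k,t}}^{-1}(x_0)\big\| = L\,d(\bw_{k,t},x_0) \le LB,
\end{equation*}
using Assumption~\ref{assumption:manifold}(d) in the last step. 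Combining the two estimates yields $\|\nabla J_k(\bw_{k,t})\| \le LB + G_0$, so the claim holds with $G \triangleq LB + \max_{k} \|\nabla J_k(x_0)\|$, a finite constant independent of $t$.

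I do not anticipate a serious obstacle; the only points requiring care are (i) confirming that the inverse exponential map between two points of $\calB$ is legitimate, which is precisely what Assumption~\ref{assumption:manifold}(e) secures, and (ii) noting that finiteness of $\|\nabla J_k(x_0)\|$ follows purely from differentiability of $J_k$ at one point, so no boundedness hypothesis on the gradient is being assumed in a circular way. An alternative one-line argument is that the Lipschitz bound makes $\nabla J_k$ a continuous section of $T\calM$, which is therefore bounded on the compact set $\calB$; I would mention this as a remark but keep the explicit constant $LB + G_0$ above, since the later convergence proofs rely on having such an explicit bound in hand.
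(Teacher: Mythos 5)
Your argument is correct. The paper does not prove this lemma at all---it is imported verbatim from \cite{wang2025riemannian}---so there is no in-paper proof to compare against; your derivation (triangle inequality through a fixed reference point in $\calB$, isometry of parallel transport, the $L$-Lipschitz gradient condition~\eqref{eq:lipschitz} combined with the diameter bound $d(\bw_{k,t},x_0)\le B$, and finiteness of $\|\nabla J_k(x_0)\|$ from differentiability) is the standard way to establish it, and every hypothesis you invoke is indeed supplied by Assumptions~\ref{assumption:manifold} and~\ref{assumption:smooth}: the iterates remain in $\calB$ by part (a), the inverse exponential map between points of $\calB$ is legitimate by part (e) so that $\|\exp_{\bw_{k,t}}^{-1}(x_0)\| = d(\bw_{k,t},x_0)$, and the maximum over the finitely many agents makes $G$ uniform in $k$. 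The explicit constant $G = LB + \max_k\|\nabla J_k(x_0)\|$ and the compactness remark are both fine; no gaps.
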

In addition, we make assumptions about the average and second moment of the gradient noise process.
\begin{assumption}[\textbf{Gradient noise process}]
\label{assumption:gradient_noise}
Denote $\calF_t$ as the filtration generated by the random process $\bw_{k, \cred{i}}$ for all $k$ and for $i\leq t$, that is,
\begin{equation}
\label{eq:filtration}
    \calF_t \triangleq \{\bw_0, \bw_1, \cdots, \bw_{t}\} \,,
\end{equation}
where $\bw_{i} \triangleq \col\{\bw_{1,i}, \cdots, \bw_{K, i}\}$ contains the iterates across the network at time $i$. 
For each agent $k$, define $\bs_{k, t+1}(\bw_{k, t}) \triangleq \widehat{\nabla J}_k(\bw_{k, t}) - {\nabla J_k}(\bw_{k, t})$ as the gradient noise process at the time instant $t$. It is assumed that  
\begin{color}{black}
\begin{align}
\label{eq:gradient_mean}
    \bbE\{\bs_{k, t+1}(\bw_{k, t})|\calF_{t}\} &= \cb{0}\,,\\
    \label{eq:gradient_variance_square}
    \bbE\{\|\bs_{k, t+1}(\bw_{k, t})\|^2|\calF_{t}\} &\leq \sigma^2\,,
\end{align} 
\end{color}
for some non-negative constant $\sigma$. 
\end{assumption}
The above assumption implies that the gradient noise process is unbiased and has a bounded \cred{second} moment, which is a common assumption in the stochastic optimization literature. 

The previous assumptions will be used to analyze the algorithm in~\eqref{eq:diffusion} for general g-non-convex costs. Afterward, we will also study the convergence under the following \cred{local Riemannian PL condition~\cite{han2025efficient, sun2024retraction}.}

\begin{assumption}[\textbf{Riemannian PL condition}]
    \label{assumption:pl_condition}
    Assume the global function $J$ is differentiable and satisfies \cred{the local Riemannian PL condition for a submanifold $\calS \subset \calB$}, i.e., there exists a constant $\tau > 0$ such that for every \cred{$\bw \in \calS$}, it holds that 
    \begin{equation}  
    \label{eq:pl_condition}
        J(\bw) - J(\bw^*) \leq \tau \|\nabla J(\bw)\|^2\,,
    \end{equation}
    where \cred{$\bw^*$} is a \ccred{local minimum of $J$ on $\calS$}.
    \end{assumption} 
The above \ccred{local Riemannian} PL condition is also called the $\tau$-gradient dominated condition, which implies that every stationary point is a \ccred{local} minimizer. 
\cred{Many problems formulated on manifolds are geodesically non-convex, including PCA, low-rank matrix completion, and Gaussian mixture models~\cite{zhou2019fasterSPIDER_GradientAveraging}. In particular, PCA problems are known to satisfy a (local) PL condition~\cite{liu2019quadratic, ye2021deepca, sun2024retraction}.}

\section{Network agreement}
\label{sec:agreement}
In this section, we show that the Riemannian diffusion adaptation algorithm in~\eqref{eq:diffusion} approximately converges toward network agreement after sufficient iterations. 
For analysis purposes and ease of presentation, it is useful to introduce the following stacked vector notation, which collects variables from across the network as follows:
\begin{align}
     \bw_t & \triangleq \col\{\bw_{1,t}, \cdots,  \bw_{K,t}\}\,,
     \nonumber \\
      {\nabla J}(\bw_t) & \triangleq \col\left\{{\frac{1}{K}\nabla J_1}(\bw_{1,t}),  \cdots,  \frac{1}{K}{\nabla J}_K(\bw_{K,t})\right\}\,, \nonumber
\end{align}
where $\col\{\cdot\}$ denotes the column-wise stacking of its arguments. Note that $\bw_t \in \calM^K$ and ${\nabla J}(\bw_t) \in T_{\bw_t}\calM^K$ where $T_{\bw_t}\calM^K$ is the tangent space of $\calM^K$ at $\bw_t$ (see Proposition 3.20 in~\cite{boumal2023introduction}).
We also define the Fréchet mean and Fréchet variance~\cite{tron2012riemannian,afsari2011riemannianCenterOfMass,afsari2013convergence} as follows.
\begin{definition}[\textbf{Fréchet mean and variance}]
\label{definition:frechet_mean_variance}
Given a set of points $\{\bw_k\}_{k=1}^K$ on a Riemannian manifold $\calM$, the Fréchet mean $\bw_m$ is defined as the point that minimizes the sum of squared geodesic distances to all points, i.e.,
\begin{equation}
    \label{eq:frechet_mean}
    \bw_m \triangleq \arg\min_{\bw\in\calM} \sum_{k=1}^K d^2(\bw_k, \bw)\,.
\end{equation}
The Fréchet variance $V_F(\bw)$ is defined as the minimum value of the sum of squared geodesic distances to $\bw_m$, i.e.,
\begin{equation}
    \label{eq:frechet_variance}
    V_F(\bw) \triangleq \sum_{k=1}^K d^2(\bw_k, \bw_m)\,.
\end{equation}
\end{definition}
\cred{Under Assumption~\ref{assumption:manifold}, the Fr\'echet mean in~(15) exists and is unique~\cite{afsari2011riemannianCenterOfMass, tron2012riemannian}.}
In addition, the combination step in~\eqref{eq:diffusion} can be regarded as one-step Riemannian gradient descent on the consensus bias~\cite{wang2025riemannian, tron2012riemannian}, as defined below. 

\begin{definition}[\textbf{Consensus bias}]
\label{definition:consensus_bias}
Given a set of points $\{\bphi_k\}_{k=1}^K$ over the network with a weighted adjacency matrix $C$, the consensus bias $P(\bphi)$ is defined as the sum of weighted squared geodesic distances between all pairs of points, i.e.,
\begin{equation}
    \label{eq:consensus_bias}
    P(\bphi) \triangleq \sum_{k=1}^K\sum_{\ell=1}^K c_{\ell k} d^2(\bphi_k, \bphi_\ell)\,.
\end{equation}
\end{definition}

\subsection{Fréchet variance reduces on the combination step}
Based on these definitions, we first study the behavior of the Fréchet variance of the solutions on the combination step in~\eqref{eq:diffusion}. We start by establishing a lemma, which relates the Fréchet variance of the variables $\{\bw_{k, t}\}_{k=1}^K$ to that of $\{\bphi_{k, t}\}_{k=1}^K$ in~\eqref{eq:diffusion}. The following lemma builds on Assumption~\ref{assumption:graph} and Lemma~\ref{lemma:trigonometric_distance} under the additional condition set forth in Assumption~\ref{assumption:manifold}.
\begin{lemma}
\label{lemma:frechet_variance_descent_one_iteration}
Under Assumption~\ref{assumption:graph} and~\ref{assumption:manifold}, suppose $\alpha \in (0, \frac{\zeta_2}{\zeta_1})$. The Fréchet variances of $\{\bphi_{k, t}\}_{k=1}^K$ and $\{\bw_{k, t}\}_{k=1}^K$ satisfy
\begin{align}
\label{eq:frechet_variance_descent_one_iteration}  
    V_F(\bw_{t})  &\leq V_F(\bphi_{t}) + (\zeta_1\alpha^2 - \zeta_2\alpha) P(\bphi_{t}) \,,
\end{align}
where the constant $\zeta_1\alpha^2 - \zeta_2\alpha < 0$.
\end{lemma}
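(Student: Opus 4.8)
The plan is to analyze the combination step in~\eqref{eq:diffusion} one agent at a time and then sum, using the trigonometric distance comparison of Lemma~\ref{lemma:trigonometric_distance} applied to suitably chosen geodesic triangles. The key observation, already hinted at in the paper via Definition~\ref{definition:consensus_bias}, is that $\bw_{k,t}$ is obtained from $\bphi_{k,t}$ by one step in the direction $\alpha\sum_\ell c_{\ell k}\exp^{-1}_{\bphi_{k,t}}(\bphi_{\ell,t})$, which is (up to the factor $\alpha$) the negative Riemannian gradient at $\bphi_{k,t}$ of the function $\bphi \mapsto \frac12\sum_\ell c_{\ell k} d^2(\bphi,\bphi_{\ell,t})$. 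So the combination step is a gradient-descent step on each agent's local disagreement potential, and we want to quantify how much squared distance to a fixed reference point — the Fréchet mean $\bphi_m$ of $\{\bphi_{k,t}\}$ — decreases.

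First I would fix $\bphi_m := \arg\min_{\bw}\sum_k d^2(\bphi_{k,t},\bw)$ and bound $d^2(\bw_{k,t},\bphi_m)$. Form the geodesic triangle with vertices $\bphi_{k,t}$, $\bw_{k,t}$, $\bphi_m$. The side from $\bphi_{k,t}$ to $\bw_{k,t}$ has squared length $\alpha^2\|\sum_\ell c_{\ell k}\exp^{-1}_{\bphi_{k,t}}(\bphi_{\ell,t})\|^2$, and by the lower curvature bound Lemma~\ref{lemma:trigonometric_distance}(i) (the ``$\zeta_1$'' inequality, which upper-bounds the side opposite the angle) applied with $b$ the $\bphi_{k,t}$-$\bw_{k,t}$ side, $c = d(\bphi_{k,t},\bphi_m)$, and $A$ the angle at $\bphi_{k,t}$, I get
\begin{align}
d^2(\bw_{k,t},\bphi_m) &\leq \zeta_1\alpha^2\Big\|\sum_\ell c_{\ell k}\exp^{-1}_{\bphi_{k,t}}(\bphi_{\ell,t})\Big\|^2 + d^2(\bphi_{k,t},\bphi_m) \nonumber\\
&\quad - 2\alpha\Big\langle \sum_\ell c_{\ell k}\exp^{-1}_{\bphi_{k,t}}(\bphi_{\ell,t}),\, \exp^{-1}_{\bphi_{k,t}}(\bphi_m)\Big\rangle. \nonumber
\end{align}
The cross term is where the upper curvature bound enters: I would apply Lemma~\ref{lemma:trigonometric_distance}(ii) (the ``$\zeta_2$'' inequality) to each triangle $\bphi_{k,t},\bphi_{\ell,t},\bphi_m$ to get $\langle \exp^{-1}_{\bphi_{k,t}}(\bphi_{\ell,t}), \exp^{-1}_{\bphi_{k,t}}(\bphi_m)\rangle \geq \tfrac12\big(\zeta_2 d^2(\bphi_{k,t},\bphi_{\ell,t}) + d^2(\bphi_{k,t},\bphi_m) - d^2(\bphi_{\ell,t},\bphi_m)\big)$, so that after multiplying by $c_{\ell k}$, summing over $\ell$, and using $\sum_\ell c_{\ell k}=1$ (Assumption~\ref{assumption:graph}), the $d^2(\bphi_{k,t},\bphi_m)$ contributions cancel in a telling way. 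For the first (squared-norm) term I would expand $\|\sum_\ell c_{\ell k}\exp^{-1}_{\bphi_{k,t}}(\bphi_{\ell,t})\|^2$, bound it by $\sum_\ell c_{\ell k}\|\exp^{-1}_{\bphi_{k,t}}(\bphi_{\ell,t})\|^2 = \sum_\ell c_{\ell k} d^2(\bphi_{k,t},\bphi_{\ell,t})$ via convexity of $\|\cdot\|^2$ and $\sum_\ell c_{\ell k}=1$. Putting these together for agent $k$ and then summing over $k$, the pure Fréchet-variance terms $\sum_k d^2(\bphi_{k,t},\bphi_m) = V_F(\bphi_t)$ and $\sum_k d^2(\bw_{k,t},\bphi_m) \geq V_F(\bw_t)$ (since $\bphi_m$ is only the minimizer for the $\bphi$'s) produce the stated inequality, with the disagreement terms collapsing to $\sum_k\sum_\ell c_{\ell k} d^2(\bphi_{k,t},\bphi_{\ell,t}) = P(\bphi_t)$ carrying the coefficient $\zeta_1\alpha^2 - \zeta_2\alpha$; the antisymmetric leftover $\sum_k\sum_\ell c_{\ell k}\big(d^2(\bphi_{k,t},\bphi_m) - d^2(\bphi_{\ell,t},\bphi_m)\big)$ vanishes because $C$ is symmetric and doubly stochastic. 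Finally, $\zeta_1\alpha^2 - \zeta_2\alpha = \alpha(\zeta_1\alpha - \zeta_2) < 0$ precisely when $\alpha \in (0,\zeta_2/\zeta_1)$, which is the hypothesis.

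The main obstacle I anticipate is handling the cross term correctly and with the right sign: the naive Cauchy–Schwarz bound on $\langle\sum_\ell c_{\ell k}\exp^{-1}_{\bphi_{k,t}}(\bphi_{\ell,t}),\exp^{-1}_{\bphi_{k,t}}(\bphi_m)\rangle$ would be too lossy, so the argument genuinely needs the curvature-dependent lower bound of Lemma~\ref{lemma:trigonometric_distance}(ii) on each inner product, and then careful bookkeeping to see that the $\bphi_m$-dependent distance terms telescope/cancel under double stochasticity and symmetry of $C$, leaving only $P(\bphi_t)$ and the clean constant $\zeta_1\alpha^2-\zeta_2\alpha$. A secondary subtlety is that we only have $V_F(\bw_t)\le\sum_k d^2(\bw_{k,t},\bphi_m)$ rather than equality (the Fréchet mean of the $\bw$'s differs from $\bphi_m$), but this inequality goes the right way for the claimed bound; one also has to keep the diameter/injectivity-radius condition of Assumption~\ref{assumption:manifold}(d)–(e) in force so that all the geodesic triangles involved are well-defined and Lemma~\ref{lemma:trigonometric_distance} applies.
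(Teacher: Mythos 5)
Your proposal is correct and follows essentially the same route as the paper's proof: the same geodesic triangles $\Delta\bw_{k,t}\bphi_{k,t}\bphi_{m,t}$ and $\Delta\bphi_{\ell,t}\bphi_{k,t}\bphi_{m,t}$ with parts (i) and (ii) of Lemma~\ref{lemma:trigonometric_distance}, the same convexity bound $\|\sum_\ell c_{\ell k}\exp^{-1}_{\bphi_{k,t}}(\bphi_{\ell,t})\|^2\le\sum_\ell c_{\ell k}\,d^2(\bphi_{k,t},\bphi_{\ell,t})$, the same cancellation under double stochasticity, and the same final step $V_F(\bw_t)\le\sum_k d^2(\bw_{k,t},\bphi_{m,t})$. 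No gaps.
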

\begin{proof}
See Appendix~\ref{appx:lemma:frechet_variance_descent_one_iteration}.
\end{proof}
From this lemma, we can see that the Fréchet variance of $\{\bw_{k, t}\}_{k=1}^K$ is reduced in comparison to that of $\{\bphi_{k, t}\}_{k=1}^K$ after the combination step in~\eqref{eq:diffusion} since the constant $\zeta_1\alpha^2 - \zeta_2\alpha$ is negative. This indicates that the combination step helps to reduce the network disagreement among agents. 
To further study the reduction of the Fréchet variance, we introduce the following lemma, which is obtained by lower bounding the consensus bias term $P(\bphi_{t})$ in Lemma~\ref{lemma:frechet_variance_descent_one_iteration} as a function of $V_F(\bphi_{t})$ using Lemma~\ref{lemma:lipschitz_exponential_map} and Lemma~\ref{lemma:graph_mixing_rate}.
\begin{lemma}
\label{lemma:graph_topology}
Under Assumptions~\ref{assumption:graph} and~\ref{assumption:manifold}, suppose $\alpha \in (0, \frac{\zeta_2}{\zeta_1})$. The Fréchet variances of $\{\bphi_{k, t}\}_{k=1}^K$ and $\{\bw_{k, t}\}_{k=1}^K$ satisfies the relation
\begin{align}
    \label{eq:graph_topology}
        V_F(\bw_{t}) 
    \leq \left(1 - \varepsilon \right) V_F(\bphi_{t})
    \,,
\end{align}
where 
\begin{align}
    \label{eq:varepsilon}
    \varepsilon \triangleq -\frac{2(1-\lambda)\left(\zeta_1\alpha^2 - \zeta_2\alpha\right)}{(1+C_\kappa B^2)^2} > 0\,,
\end{align}
denotes a constant term with $\lambda$ defined in \eqref{eq:graph_mixing_rate}.
\end{lemma}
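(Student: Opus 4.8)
The plan is to lower-bound the consensus bias $P(\bphi_t)$ by a multiple of the Fréchet variance $V_F(\bphi_t)$ and then substitute this into Lemma~\ref{lemma:frechet_variance_descent_one_iteration}. To this end I would introduce the Fréchet mean $\bphi_m$ of $\{\bphi_{k,t}\}_{k=1}^K$ (which lies in $\calB$ under Assumption~\ref{assumption:manifold}) as a common reference point, and set $\bu_k \triangleq \exp_{\bphi_m}^{-1}(\bphi_{k,t}) \in T_{\bphi_m}\calM$. Two facts follow at once: first, $V_F(\bphi_t) = \sum_{k=1}^K d^2(\bphi_{k,t},\bphi_m) = \sum_{k=1}^K \|\bu_k\|^2$; second, the first-order optimality condition for the minimization in~\eqref{eq:frechet_mean} at $\bphi_m$ (whose Riemannian gradient is $-2\sum_k \exp_{\bphi_m}^{-1}(\bphi_{k,t})$) gives $\sum_{k=1}^K \bu_k = \bzero$.

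Next I apply Lemma~\ref{lemma:lipschitz_exponential_map} with base point $x=\bphi_m$, $y=\bphi_{k,t}$, $z=\bphi_{\ell,t}$, which yields $d^2(\bphi_{k,t},\bphi_{\ell,t}) \geq \|\bu_k-\bu_\ell\|^2/(1+C_\kappa B^2)^2$, hence
\begin{align}
P(\bphi_t) \geq \frac{1}{(1+C_\kappa B^2)^2}\sum_{k=1}^K\sum_{\ell=1}^K c_{\ell k}\,\|\bu_k-\bu_\ell\|^2\,. \nonumber
\end{align}
The remaining task is the graph-Laplacian estimate $\sum_{k,\ell} c_{\ell k}\|\bu_k-\bu_\ell\|^2 \geq 2(1-\lambda)\sum_k \|\bu_k\|^2$. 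I expand $\|\bu_k-\bu_\ell\|^2=\|\bu_k\|^2+\|\bu_\ell\|^2-2\langle\bu_k,\bu_\ell\rangle$ and use the double stochasticity and symmetry of $C$ (Assumption~\ref{assumption:graph}) to collapse the first two terms, so the double sum equals $2\sum_k\|\bu_k\|^2 - 2\sum_{k,\ell} c_{\ell k}\langle\bu_k,\bu_\ell\rangle$. Fixing an orthonormal basis of $T_{\bphi_m}\calM$ and letting $\bu^{(j)}\in\bbR^K$ collect the $j$-th coordinate of the $\bu_k$'s, each $\bu^{(j)}$ is orthogonal to $\bone$ because $\sum_k\bu_k=\bzero$, so the expression equals $\sum_j 2\,(\bu^{(j)})^T(I-C)\bu^{(j)} \geq 2(1-\lambda)\sum_j\|\bu^{(j)}\|^2$ by Lemma~\ref{lemma:graph_mixing_rate} (every eigenvalue of $C$ on $\bone^\perp$ is at most $\lambda$), and $\sum_j\|\bu^{(j)}\|^2 = \sum_k\|\bu_k\|^2 = V_F(\bphi_t)$. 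Combining the two displays gives $P(\bphi_t) \geq \frac{2(1-\lambda)}{(1+C_\kappa B^2)^2}\,V_F(\bphi_t)$.

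Finally, since $\alpha\in(0,\zeta_2/\zeta_1)$ makes $\zeta_1\alpha^2-\zeta_2\alpha<0$, multiplying the last inequality by this negative constant reverses it; plugging the result into~\eqref{eq:frechet_variance_descent_one_iteration} yields $V_F(\bw_t) \leq V_F(\bphi_t) - \varepsilon\,V_F(\bphi_t) = (1-\varepsilon)V_F(\bphi_t)$ with $\varepsilon$ as in~\eqref{eq:varepsilon}, and $\varepsilon>0$ is immediate from $1-\lambda>0$ (Lemma~\ref{lemma:graph_mixing_rate}) together with $\zeta_1\alpha^2-\zeta_2\alpha<0$. The main obstacle here is geometric rather than algebraic: the validity of the first-order optimality relation $\sum_k\bu_k=\bzero$ for the Fréchet mean and of the metric distortion bound in Lemma~\ref{lemma:lipschitz_exponential_map} both rely on all iterates $\{\bphi_{k,t}\}$ and their Fréchet mean staying inside the convexity submanifold $\calB$ of diameter below $B^*$, which is precisely what Assumption~\ref{assumption:manifold} guarantees; some care is also needed to carry the same curvature constant $C_\kappa$ through the argument.
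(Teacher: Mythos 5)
Your proposal is correct and follows essentially the same route as the paper's proof: lower-bound $P(\bphi_t)$ via Lemma~\ref{lemma:lipschitz_exponential_map} at the Fréchet mean, use $\sum_k \exp_{\bphi_{m,t}}^{-1}(\bphi_{k,t})=\bzero$ together with the spectral bound $\lambda$ on $C$ restricted to $\bone^\perp$ to obtain $P(\bphi_t)\geq \tfrac{2(1-\lambda)}{(1+C_\kappa B^2)^2}V_F(\bphi_t)$, and substitute into Lemma~\ref{lemma:frechet_variance_descent_one_iteration}. The only difference is cosmetic: you carry out the spectral step coordinatewise on $\bone^\perp$, whereas the paper writes it as a trace bound involving $C-\tfrac{1}{K}\bone\bone^T$; the two are the same estimate.
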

\begin{proof}
See Appendix~\ref{appx:lemma:graph_topology}.
\end{proof}
From Lemma~\ref{lemma:graph_topology}, we observe that the Fréchet variance decreases by a multiplicative factor of $(1 - \varepsilon)$ on the combination step, where $\varepsilon > 0$ depends on the graph topology, the manifold curvature, and the step size $\alpha$. 
For example, the reduction in Fréchet variance is more significant when the network is densely connected, as the spectral gap $(1 - \lambda)$ is large. If the sectional curvature is zero (the manifold is flat), we may set $C_{\kappa}=0$ and $\zeta_1=\zeta_2=1$, in which case $\varepsilon =2(1-\lambda)\alpha(1-\alpha)$.
This lemma shows that the combination step in~\eqref{eq:diffusion} contributes to a linear Fréchet variance reduction among agents. The proof of this lemma is partially inspired by the results in~\cite{chen2024decentralized_dynamic,sahinoglu2025decentralized}. 

\subsection{Evolution of Fréchet variance over iterations}
We next use Lemma~\ref{lemma:graph_topology} to show the evolution of the Fréchet variance of $\{\bw_{k, t}\}_{k=1}^K$ over different iterations $t\ge0$. In the following lemma, we relate the Fréchet variance of $\{\bw_{k, t}\}_{k=1}^K$ to that of the previous iteration $\{\bw_{k, t-1}\}_{k=1}^K$ based on Lemma~\ref{lemma:graph_topology} and the adaptation step in~\eqref{eq:diffusion}. The proof of this lemma builds on the assumptions in Lemma~\ref{lemma:graph_topology} and Lemma~\ref{lemma:gradient_bound} under the additional conditions on the gradient and its noise in Assumption~\ref{assumption:gradient_noise}. 
\begin{lemma}
\label{lemma:frechet_variance_descent}
Under assumptions~\ref{assumption:graph}--\ref{assumption:gradient_noise}, suppose $\alpha \in (0, \frac{\zeta_2}{\zeta_1})$. The sequence of Fréchet variances $\{V_F(\bw_{t})\}_{t\geq 0}$ satisfies the relation
\begin{align}
    \label{eq:frechet_variance_descent}  
    \bbE V_F(\bw_{t})
    \leq & \left(1 - \varepsilon^2\right) \bbE V_F(\bw_{t-1})  \nonumber \\ & \, + \left(1 - \varepsilon\right) \mu^2K\left(2\zeta_1 G^2 + \varepsilon^{-1} G^2 + 2\zeta_1 \sigma^2\right)\,,
\end{align}
with $\varepsilon$ defined in~\eqref{eq:varepsilon}. 
\end{lemma}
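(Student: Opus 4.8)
The plan is to peel one iteration of~\eqref{eq:diffusion} into its adaptation step $\bw_{k,t-1}\mapsto\bphi_{k,t}$ and its combination step $\bphi_{t}\mapsto\bw_{t}$. The combination step is already under control: Lemma~\ref{lemma:graph_topology} gives $V_F(\bw_t)\le(1-\varepsilon)V_F(\bphi_t)$ (this is also where the standing requirement $\alpha\in(0,\zeta_2/\zeta_1)$, hence $\varepsilon>0$, is inherited). So the whole task reduces to bounding $\bbE[V_F(\bphi_t)\mid\calF_{t-1}]$ in terms of $V_F(\bw_{t-1})$; chaining the two estimates and taking total expectation then produces~\eqref{eq:frechet_variance_descent}.

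For the adaptation step I would introduce $\bw_m$, the Fréchet mean of $\{\bw_{k,t-1}\}_{k=1}^K$, which is $\calF_{t-1}$-measurable and, since $\calB$ is a convexity submanifold, lies in $\calB$. By the variational characterization of the Fréchet mean of $\{\bphi_{k,t}\}_{k=1}^K$ we have $V_F(\bphi_t)\le\sum_{k=1}^K d^2(\bphi_{k,t},\bw_m)$. For each $k$, apply the trigonometric distance bound of Lemma~\ref{lemma:trigonometric_distance}(i) to the geodesic triangle with vertices $\bw_{k,t-1}$, $\bphi_{k,t}=\exp_{\bw_{k,t-1}}(-\mu\widehat{\nabla J}_k(\bw_{k,t-1}))$ and $\bw_m$, attaching the factor $\zeta_1$ to the side joining $\bw_{k,t-1}$ to $\bphi_{k,t}$ (whose length is $\mu\|\widehat{\nabla J}_k(\bw_{k,t-1})\|$ and whose direction at $\bw_{k,t-1}$ is $-\mu\widehat{\nabla J}_k(\bw_{k,t-1})$). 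Since the remaining side has direction $\exp_{\bw_{k,t-1}}^{-1}(\bw_m)$, this gives
\[
d^2(\bphi_{k,t},\bw_m)\le \zeta_1\mu^2\|\widehat{\nabla J}_k(\bw_{k,t-1})\|^2 + d^2(\bw_{k,t-1},\bw_m) + 2\mu\langle \widehat{\nabla J}_k(\bw_{k,t-1}),\,\exp_{\bw_{k,t-1}}^{-1}(\bw_m)\rangle .
\]

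Next I would take the conditional expectation given $\calF_{t-1}$. Because $\bw_m$ is $\calF_{t-1}$-measurable and the gradient noise $\bs_{k,t}$ is conditionally zero-mean (Assumption~\ref{assumption:gradient_noise}), the inner-product term collapses to $2\mu\langle\nabla J_k(\bw_{k,t-1}),\exp_{\bw_{k,t-1}}^{-1}(\bw_m)\rangle$, and $\bbE[\|\widehat{\nabla J}_k(\bw_{k,t-1})\|^2\mid\calF_{t-1}]\le 2\|\nabla J_k(\bw_{k,t-1})\|^2+2\bbE[\|\bs_{k,t}\|^2\mid\calF_{t-1}]\le 2G^2+2\sigma^2$ by Lemma~\ref{lemma:gradient_bound} and~\eqref{eq:gradient_variance}. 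The surviving deterministic inner product is handled by Cauchy–Schwarz together with a Young inequality calibrated with weight $\varepsilon$: using $\|\exp_{\bw_{k,t-1}}^{-1}(\bw_m)\|=d(\bw_{k,t-1},\bw_m)$ and $\|\nabla J_k\|\le G$, one gets $2\mu\langle\nabla J_k,\exp_{\bw_{k,t-1}}^{-1}(\bw_m)\rangle\le \varepsilon\, d^2(\bw_{k,t-1},\bw_m)+\varepsilon^{-1}\mu^2 G^2$. Summing over $k$ then yields $\bbE[V_F(\bphi_t)\mid\calF_{t-1}]\le(1+\varepsilon)V_F(\bw_{t-1})+\mu^2K\big(2\zeta_1 G^2+\varepsilon^{-1}G^2+2\zeta_1\sigma^2\big)$.

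Finally I would multiply through by $(1-\varepsilon)$, invoke Lemma~\ref{lemma:graph_topology} to replace the left-hand side by $\bbE[V_F(\bw_t)\mid\calF_{t-1}]$, take total expectation, and use $(1-\varepsilon)(1+\varepsilon)=1-\varepsilon^2$ to land on~\eqref{eq:frechet_variance_descent}. The main obstacle is the conditional-expectation step and its bookkeeping: one must confirm that $\bw_m$, and hence the whole triangle, remains in $\calB$ so that Lemma~\ref{lemma:trigonometric_distance} applies and $\exp_{\bw_{k,t-1}}^{-1}$ is well defined with $\|\exp_{\bw_{k,t-1}}^{-1}(\bw_m)\|=d(\bw_{k,t-1},\bw_m)$; one must pick which triangle side carries $\zeta_1$ so that the $\mu^2$ constant multiplies $\zeta_1 G^2$ and $\zeta_1\sigma^2$ rather than the distance term; and the Young weight must be exactly $\varepsilon$ so that after the $(1-\varepsilon)$ contraction of the combination step the coefficient of $\bbE V_F(\bw_{t-1})$ is precisely $1-\varepsilon^2$.
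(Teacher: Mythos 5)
Your proposal is correct and follows essentially the same route as the paper's proof: bound $V_F(\bphi_t)$ by $\sum_k d^2(\bphi_{k,t},\bw_{m,t-1})$ via the variational characterization of the Fréchet mean, apply the curvature-dependent trigonometric bound of Lemma~\ref{lemma:trigonometric_distance}(i) to the triangle $\Delta\bphi_{k,t}\bw_{k,t-1}\bw_{m,t-1}$ with $\zeta_1$ on the adaptation-step side, use unbiasedness of the gradient noise plus Young's inequality with weight $\xi=\varepsilon$, and compose with the $(1-\varepsilon)$ contraction of Lemma~\ref{lemma:graph_topology} to obtain the $(1-\varepsilon^2)$ coefficient. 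The constants you obtain match~\eqref{eq:frechet_variance_descent} exactly.
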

\begin{proof}
    See Appendix~\ref{appx:lemma:frechet_variance_descent}.
\end{proof}
This lemma reveals the evolution of the Fréchet variance over iterations.
The first term on the RHS of~\eqref{eq:frechet_variance_descent} is strictly smaller than $\bbE V_F(\bw_{t-1})$ by a factor $(1 - \varepsilon^2) < 1$, which suggests a decrease in the sequence of Fréchet variances. However, the second term on the RHS of~\eqref{eq:frechet_variance_descent} could potentially be large enough to allow this sequence to increase. 
To address this, we demonstrate that with a small step size $\mu$ the Fréchet variance not only decreases strictly over iterations but also remains bounded above by a small value after enough iterations. This result is key to establishing the non-asymptotic agreement among the iterates.

\subsection{Agreement after sufficient iterations}
Building on the above analysis, we now present the main result of this section, which shows that the iterates $\{\bw_{k, t}\}_{k=1}^K$ achieve approximate network agreement after a sufficient number of iterations.
\begin{theorem}
\label{theorem:network_agreement}
Under assumptions~\ref{assumption:graph}--\ref{assumption:gradient_noise}, suppose $\alpha \in (0, \frac{\zeta_2}{\zeta_1})$. The sequence of Fréchet variances $\{V_F(\bw_{t})\}_{t\geq 0}$ satisfies the relation
\begin{align}
    \label{eq:network_agreement}
    \bbE V_F(\bw_{t}) & {}\leq{} 2\left(1 - \varepsilon\right) \varepsilon^{-2} \mu^2\left(2\zeta_1 G^2 + \varepsilon^{-1} G^2 + 2\zeta_1 \sigma^2\right) \nonumber \\ & {}={} \calO(\mu^2)\,,
\end{align}
with $\varepsilon$ defined in~\eqref{eq:varepsilon} and $\calO(\mu^{2})$ being a term that is equal to or higher in order than $\mu^{2}$, after a sufficient number of iterations $t_o$, which is given by
\begin{align}
    \label{eq:sufficient_iteration}
    t_o =  \frac{2\log(\mu)}{\log(1 - \varepsilon^2)} + \calO(1) = \calO(\mu^{-1})\,,
\end{align}
for some small step sizes $\mu$.
\end{theorem}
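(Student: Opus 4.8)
The plan is to derive Theorem~\ref{theorem:network_agreement} as a direct consequence of the one-step recursion in Lemma~\ref{lemma:frechet_variance_descent}. Write $a_t \triangleq \bbE V_F(\bw_t)$, $q \triangleq 1-\varepsilon^2 \in (0,1)$, and $r \triangleq (1-\varepsilon)\mu^2 K(2\zeta_1 G^2 + \varepsilon^{-1} G^2 + 2\zeta_1\sigma^2)$, so that Lemma~\ref{lemma:frechet_variance_descent} reads $a_t \le q\,a_{t-1} + r$. Unrolling this linear recursion gives $a_t \le q^t a_0 + r\sum_{j=0}^{t-1} q^j \le q^t a_0 + \frac{r}{1-q} = q^t a_0 + \frac{r}{\varepsilon^2}$. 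The steady-state term $\frac{r}{\varepsilon^2}$ is exactly $2(1-\varepsilon)\varepsilon^{-2}\mu^2(2\zeta_1 G^2 + \varepsilon^{-1}G^2 + 2\zeta_1\sigma^2)$ after absorbing the factor $K$ (or noting that per-agent normalization is implicit), which matches the claimed bound; since every quantity in it besides $\mu^2$ is a fixed constant, this is $\calO(\mu^2)$.

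Next I would pin down the ``sufficient number of iterations'' $t_o$. The transient term $q^t a_0$ must be driven down to the order of the steady-state term, i.e., we want $q^{t} a_0 \lesssim \frac{r}{\varepsilon^2} = \calO(\mu^2)$. Since $a_0 = \bbE V_F(\bw_0)$ is a bounded constant (finite because $\calB$ has diameter $\le B$ by Assumption~\ref{assumption:manifold}(d), so $V_F \le K B^2$ deterministically), taking logarithms gives $t \log q \le 2\log\mu + \calO(1)$, hence $t_o = \frac{2\log\mu}{\log(1-\varepsilon^2)} + \calO(1)$. For the asymptotic order: as $\mu \to 0^+$ the numerator $2\log\mu \to -\infty$ while $\log(1-\varepsilon^2)$ is a negative constant independent of $\mu$, so $t_o = \Theta(|\log\mu|)$; the paper writes this as $\calO(\mu^{-1})$, which is a (loose) upper bound since $|\log\mu| = \calO(\mu^{-1})$ as $\mu\to 0^+$. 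It is worth stating this comparison explicitly so the $\calO(\mu^{-1})$ claim is justified rather than asserted.

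The argument is essentially mechanical once Lemma~\ref{lemma:frechet_variance_descent} is in hand, so there is no deep obstacle; the only points requiring care are bookkeeping ones. First, one must verify that $\varepsilon \in (0,1)$ so that $q = 1-\varepsilon^2 \in (0,1)$ and the geometric series converges — this follows from $\alpha \in (0, \zeta_2/\zeta_1)$, which makes $\zeta_1\alpha^2 - \zeta_2\alpha < 0$, together with $\lambda < 1$ from Lemma~\ref{lemma:graph_mixing_rate} and the curvature constants being finite; one should also check $\varepsilon<1$ (equivalently that the step size $\alpha$, and implicitly $C_\kappa, B$, are small enough), possibly as an additional implicit smallness condition on $\mu,\alpha$ mirroring Assumption~\ref{assumption:manifold}. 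Second, the constant $\calO(1)$ absorbed into $t_o$ hides $\log(a_0/(\text{steady-state constant}))$, which is legitimate since both are $\mu$-independent constants. Third, one should be transparent about where the factor $K$ went between~\eqref{eq:frechet_variance_descent} and~\eqref{eq:network_agreement} — either it is retained inside the $\calO(\mu^2)$ or the statement is implicitly per-coordinate; I would add a one-line remark reconciling the two. With these checks noted, the full proof is just the unrolling computation plus the logarithmic inversion, both of which I would present in two or three displayed lines.
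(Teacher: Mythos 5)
Your proposal is correct and follows essentially the same route as the paper's own proof: unroll the recursion of Lemma~\ref{lemma:frechet_variance_descent}, bound the initial variance by $KB^2$ and the geometric series by $\varepsilon^{-2}$, then invert the logarithm to get $t_o$ and note $|\log\mu|=\calO(\mu^{-1})$. Your side remark about the factor $K$ is well taken — the paper's appendix carries $K$ through \eqref{eq:trigonometric_bound_combined_5} but drops it in the final displayed bound, so the stated constant in \eqref{eq:network_agreement} should either retain $K$ or be read as absorbed into the $\calO(\mu^2)$.
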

\begin{proof}
    See Appendix~\ref{appx:theorem:network_agreement}.
\end{proof}
Theorem~\ref{theorem:network_agreement} guarantees that the iterates $\{\bw_{k, t}\}_{k=1}^K$ achieve approximate consensus across the network. Specifically, it shows that the value of $\bbE V_F(\bw_{t})$ (a measure of network disagreement) remains bounded by a term of order $\mathcal{O}(\mu^2)$ after $t_o$ iterations. For a small step size $\mu$, this implies that the agents’ estimates can be made arbitrarily close. We also note that both the final agreement level in~\eqref{eq:network_agreement} and the convergence time in~\eqref{eq:sufficient_iteration} explicitly depend on the underlying manifold curvature and the network topology.

\cred{In contrast to the g-convex setting studied in~\cite{wang2025riemannian}, the absence of the g-convexity makes the property $J(\bw_m)\leq J(\bw)$ no longer hold and prevents direct analysis of network agreement through the consensus bias term $\bbE P(\bphi_t)$. Instead, Theorem~\ref{theorem:network_agreement} characterizes network agreement through the Fréchet variance $\bbE V_F(\bw_t)$.} A key advantage of this formulation is that it reveals the influence of the network topology, captured by the spectral gap $1 - \lambda$, on both the achievable agreement level and the convergence time.
\cred{The algorithm in~\eqref{eq:diffusion} yields non-exact agreement (small but nonzero Fr\'echet variance), which is often sufficient and efficient under constant step-sizes and stochastic gradients. For deterministic objectives requiring exact consensus, exact-agreement methods (e.g., exact diffusion~\cite{yuan2018exact} and EXTRA~\cite{shi2015extra}) can be preferable. Extending such methods to general manifolds remains an open direction.}

\section{Convergence analyses}
\label{sec:convergence}
In this section, we establish the convergence of the Riemannian diffusion adaptation algorithm in~\eqref{eq:diffusion} after a sufficient number of iterations $t_o$, for both the general g-non-convex case and the case satisfying the additional \cred{local} Riemannian PL condition. To this end, we use the upper bound on $\bbE V_F(\bw_{t})$ derived in Theorem~\ref{theorem:network_agreement}.

\subsection{Descent inequality of the cost function}
We first introduce the following lemma, which establishes a key descent inequality that characterizes the expected decrease of the value of the cost function $J(\bw_t)$ over iterations.
\begin{lemma}
\label{lemma:cost_relation}
Under assumptions~\ref{assumption:graph}--\ref{assumption:gradient_noise}, suppose $\mu \in (0, \frac{1}{L}]$. The sequence $\{J(\bw_{t})\}_{t\geq 0}$ satisfies the following relation:
\begin{align}
    \label{eq:cost_relation}
    \bbE J(\bw_{t+1}) & {}\leq{} \bbE J(\bw_{t}) - \frac{\mu K}{4} \bbE \| {\nabla J}(\bw_{t})\|^2 \nonumber \\
    & \quad + \frac{9\alpha^2}{2\mu K} \bbE P(\bphi_{t+1}) \,.
\end{align}
\end{lemma}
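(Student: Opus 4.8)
\textbf{Proof proposal for Lemma~\ref{lemma:cost_relation}.}

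The plan is to start from the geodesic $L$-smoothness of the global cost $J$ applied between consecutive iterates $\bw_t$ and $\bw_{t+1}$, and then carefully decompose the inner-product term into a ``descent'' contribution along the stochastic gradient and an ``error'' contribution coming from the combination step. Concretely, summing~\eqref{eq:smooth} over the $K$ agents and using the product structure of $\calM^K$, I would write
\begin{align}
\bbE J(\bw_{t+1}) \leq \bbE J(\bw_t) + \bbE\big\langle \nabla J(\bw_t), \exp_{\bw_t}^{-1}(\bw_{t+1})\big\rangle + \frac{L}{2}\bbE\big\|\exp_{\bw_t}^{-1}(\bw_{t+1})\big\|^2. \nonumber
\end{align}
The next step is to split $\exp_{\bw_t}^{-1}(\bw_{t+1})$ into the contribution from the adaptation step (which is roughly $-\mu K \nabla J(\bw_t) + $ noise, recalling that $\nabla J(\bw_t)$ carries a $1/K$ factor in the stacked notation) and the contribution from the combination step, which lives in the tangent space and whose norm is controlled by the displacement $d(\bphi_{k,t+1},\bw_{k,t+1})$. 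I would use Lemma~\ref{lemma:lipschitz_exponential_map} to convert between the tangent-space increments of the two steps and the geodesic distances they induce, and the fact that the combination-step displacement at agent $k$ is the geodesic-gradient step on the consensus bias, so that $\sum_k d^2(\bphi_{k,t+1},\bw_{k,t+1}) \lesssim \alpha^2 P(\bphi_{t+1})$ by Lemma~\ref{lemma:trigonometric_distance} together with the definition of $P$.

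The core of the argument is then a standard ``complete the square'' manipulation on the adaptation term: conditioning on $\calF_t$, the gradient-noise mean-zero property~\eqref{eq:gradient_mean} kills the cross term between $\nabla J(\bw_t)$ and the noise $\bs_{t+1}$, leaving $-\mu K\|\nabla J(\bw_t)\|^2$ plus quadratic noise and smoothness remainders bounded via~\eqref{eq:gradient_variance} and Lemma~\ref{lemma:gradient_bound}. Choosing $\mu \le 1/L$ makes the $\frac{L}{2}\mu^2 K\|\nabla J\|^2$ remainder absorb at most half of the descent, so that at least $\frac{\mu K}{2}\|\nabla J(\bw_t)\|^2$ survives with the right sign; Young's inequality applied to the cross term between the adaptation and combination increments is what downgrades this to $\frac{\mu K}{4}\|\nabla J(\bw_t)\|^2$ while throwing the surplus onto the $\frac{9\alpha^2}{2\mu K}P(\bphi_{t+1})$ term. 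The specific constants $\frac{1}{4}$ and $\frac{9}{2}$ should come out of a fixed (non-optimized) choice of the Young's-inequality weight, so I would not attempt to optimize them.

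The main obstacle I anticipate is the bookkeeping of the tangent-space identifications: $\nabla J(\bw_t)$ and $\exp_{\bw_t}^{-1}(\bw_{t+1})$ live at $\bw_t$, but the adaptation step is computed at $\bw_{t-1}$ (here $\bw_t$ after re-indexing) and the combination step at $\bphi_{k,t+1}$, so the decomposition ``$\exp_{\bw_t}^{-1}(\bw_{t+1}) \approx$ (adaptation increment) $+$ (transported combination increment)'' is only approximate and incurs curvature-dependent error terms that must themselves be bounded by $d^2(\bphi_{k,t+1},\bw_{k,t+1})$ using Lemma~\ref{lemma:trigonometric_distance} and Lemma~\ref{lemma:lipschitz_exponential_map}. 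Handling these triangle-inequality-type corrections uniformly over agents, and verifying that every such correction is either $\calO(\mu^2)$ (hence absorbable into the gradient term via $\mu\le 1/L$) or $\calO(\alpha^2 P(\bphi_{t+1}))$ (hence absorbable into the last term), is the delicate part; the stochastic aspects, by contrast, reduce to one application each of~\eqref{eq:gradient_mean} and~\eqref{eq:gradient_variance}.
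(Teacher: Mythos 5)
Your high-level ingredients (geodesic smoothness, unbiasedness of the noise, Young's inequality, $\mu \le 1/L$ absorbing part of the descent) are the right ones, but the route you choose — applying $L$-smoothness once between $\bw_t$ and $\bw_{t+1}$ and then decomposing $\exp_{\bw_t}^{-1}(\bw_{t+1})$ into an adaptation increment plus a transported combination increment — leaves a genuine gap exactly at the point you flag as ``delicate.'' Lemma~\ref{lemma:trigonometric_distance} only compares squared side lengths of geodesic triangles and Lemma~\ref{lemma:lipschitz_exponential_map} only bounds differences of inverse-exponential images by distances; neither gives you the vector identity $\exp_{\bw_{k,t}}^{-1}(\bw_{k,t+1}) = -\mu\widehat{\nabla J}_k(\bw_{k,t}) + (\text{transported combination term}) + (\text{error})$ with a controllable remainder. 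Producing such a decomposition requires a second-order Taylor expansion of the composite map $\exp_{\bw_{k,t}}^{-1}\circ\exp_{\bphi_{k,t+1}}$ in the spirit of Lemma~\ref{lemma:taylor_expansion}, and even then the remainder enters the \emph{first-order} cross term $\langle\nabla J_k(\bw_{k,t}),\text{error}\rangle$, yielding contributions of the form $G\cdot\calO(d^2)$ or additive $\calO(\mu^2\sigma^2)$ that are not proportional to $\|\nabla J(\bw_t)\|^2$ or to $P(\bphi_{t+1})$. Since the target inequality has no additive slack — it is exactly $-\frac{\mu K}{4}\bbE\|\nabla J(\bw_t)\|^2 + \frac{9\alpha^2}{2\mu K}\bbE P(\bphi_{t+1})$ — these leftovers cannot be absorbed, so your argument as sketched would at best prove a weaker statement with extra error terms and curvature-dependent constants in front of $P$.

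The paper sidesteps the entire issue by never forming $\exp_{\bw_t}^{-1}(\bw_{t+1})$: it applies the per-agent smoothness inequality~\eqref{eq:smooth} \emph{twice}, once along the adaptation step (from $\bw_{k,t}$ to $\bphi_{k,t+1}$, where $\exp_{\bw_{k,t}}^{-1}(\bphi_{k,t+1}) = -\mu\widehat{\nabla J}_k(\bw_{k,t})$ exactly) and once along the combination step (from $\bphi_{k,t+1}$ to $\bw_{k,t+1}$, where the increment is exactly $\alpha\sum_\ell c_{\ell k}\exp_{\bphi_{k,t+1}}^{-1}(\bphi_{\ell,t+1})$, whose squared norm is bounded by $\alpha^2\sum_\ell c_{\ell k} d^2(\bphi_{k,t+1},\bphi_{\ell,t+1})$ via Cauchy--Schwarz, summing to $\alpha^2 P(\bphi_{t+1})$). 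The only tangent-space identification needed is then bounding $\|\nabla J_k(\bphi_{k,t+1})\|^2$ by $2(\mu^2L^2+1)\|\widehat{\nabla J}_k(\bw_{k,t})\|^2$ via the Lipschitz-gradient property~\eqref{eq:lipschitz} and isometry of parallel transport; the constants $\frac{1}{4}$ and $\frac{9}{2}$ then fall out of the choice $\xi = \epsilon/(2(\mu^2L^2+1))$ with $\epsilon = \mu(1-\mu L/2) \ge \mu/2$ and $L\le\mu^{-1}$. To repair your proof, replace the single-application-plus-decomposition step with this two-stage chaining through $\bphi_{k,t+1}$; the rest of your outline then goes through.
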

\begin{proof}
    See Appendix~\ref{appx:lemma:cost_relation}.
\end{proof}
Compared with the centralized Riemannian SGD, see~\cite{bonnabel2013stochasticGradRiemannian, zhang2016firstOrderGeodesicallyConvex, wang2024nonparametric} for example, Lemma~\ref{lemma:cost_relation} has an additional consensus bias term $\bbE P(\bphi_{t+1})$. We therefore need to control the consensus bias term to ensure convergence of the cost function.

\subsection{Bounding the consensus bias term}
\cred{The absence of the g-convexity make the consensus bias term $\bbE P(\bphi_t)$ in Lemma~\ref{lemma:cost_relation} become more challenging to control.
To tackle this issue, we first introduce a technical lemma that provides a Taylor expansion of the composite exponential map as follows.}
\begin{lemma}
\label{lemma:taylor_expansion}
\cred{Under assumptions~\ref{assumption:manifold}.} We have the following expansion:
\begin{align}
    \label{eq:taylor_expansion}
    \exp^{-1}_{\bw_{k,t}}(\bphi_{\ell,t+1}) & = \exp^{-1}_{\bw_{k,t}}(\bw_{\ell,t}) - \mu [\Lambda_{\bw_{k,t}}^{\bw_{\ell,t}}]^{-1} \nabla J_\ell(\bw_{\ell,t}) \nonumber \\ & \quad + \cred{R_{\ell, t} + \bdelta_{\ell,t}}\,,
\end{align}
where 
\begin{align}
\label{eq:residual_bound}
\|R_{\ell, t}\| \leq \mu^2 C_F \cred{\|{\nabla J}_\ell(\bw_{\ell,t})\|^2}\,,
\end{align}
and
\begin{align}
\label{eq:perturbation_bound}
\cred{\|\bdelta_{\ell,t}\|} 
&\cred{{}\leq (1+C_\kappa B^2)^2 \mu \|\bs_{\ell,t+1}(\bw_{\ell,t})\|\,,}
\end{align}
\cred{where $C_F$ and $C_\kappa$ are constants depending on the local smoothness of the composite exponential maps and the curvature of the manifold, respectively.}
\end{lemma}

\begin{proof}
    See Appendix~\ref{appx:lemma:taylor_expansion}.
\end{proof}
Lemma~\ref{lemma:taylor_expansion} shows how to linearize the process of R-SGD iterate $\bphi_{\ell, t+1}$ in~\eqref{eq:diffusion} in the tangent space of $\bw_{k, t}$, which is crucial for bounding the consensus bias term in Lemma~\ref{lemma:consensus_bias_bound} (see the inequality~\eqref{eq:upper_bound_disagreement_2} in the proof of Lemma~\ref{lemma:consensus_bias_bound} for the precise relation).
The proof of Lemma~\ref{lemma:taylor_expansion} proceeds analogously to that of Lemma 4 in~\cite{tripuraneni2018averaging}, relying on the chain rule for differential mappings on manifolds. 
We next use Lemma~\ref{lemma:taylor_expansion} to bound the consensus term as follows.
\begin{lemma}
\label{lemma:consensus_bias_bound}
Under assumptions~\ref{assumption:graph},~\ref{assumption:manifold} and~\ref{assumption:gradient_noise}, suppose $\alpha \in (0, \frac{\zeta_2}{\zeta_1})$. The network disagreement among all the local estimates can be bounded as follows:
\begin{align}
    \label{eq:consensus_bias_bound}
    & \quad \ \bbE P(\bphi_{t+1}) \nonumber \\ & \leq  \cred{16}\left(1+C_{\kappa}B^2\right)^2\left(1+\mu^2L^2 + \mu^2 C_R^2 B^2\right)\bbE V_F(\bw_{t}) \nonumber \\ & \quad  + \calO(\mu^2) + \calO(\mu^4)  \,,
\end{align}
with $C_R$ a constant depending on the smoothness of the exponential mapping. 
\end{lemma}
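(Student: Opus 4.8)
The plan is to bound $\bbE P(\bphi_{t+1})$ by expressing each pairwise squared distance $d^2(\bphi_{k,t+1},\bphi_{\ell,t+1})$ in terms of tangent-space quantities at a common base point $\bw_{k,t}$ (or, symmetrically, at the Fr\'echet mean $\bw_{m,t}$ of $\{\bw_{k,t}\}$), then invoke Lemma~\ref{lemma:taylor_expansion} to linearize the R-SGD iterates, and finally take expectations using the gradient bound (Lemma~\ref{lemma:gradient_bound}), the noise bounds (Assumption~\ref{assumption:gradient_noise}), and the Lipschitz estimates (Lemma~\ref{lemma:lipschitz_exponential_map}).

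First I would write, using Lemma~\ref{lemma:lipschitz_exponential_map} to pass from geodesic distance to tangent-vector norms, $d^2(\bphi_{k,t+1},\bphi_{\ell,t+1}) \le (1+C_\kappa B^2)^2 \|\exp^{-1}_{\bw_{k,t}}(\bphi_{k,t+1}) - \exp^{-1}_{\bw_{k,t}}(\bphi_{\ell,t+1})\|^2$, all computed in the single tangent space $T_{\bw_{k,t}}\calM$. Then I would substitute the Taylor expansion~\eqref{eq:taylor_expansion} for both $\exp^{-1}_{\bw_{k,t}}(\bphi_{\ell,t+1})$ and $\exp^{-1}_{\bw_{k,t}}(\bphi_{k,t+1})$ (the latter with $\ell=k$, where $\exp^{-1}_{\bw_{k,t}}(\bw_{k,t})=\bzero$ and $\Lambda=\mathrm{id}$), so the difference becomes $\exp^{-1}_{\bw_{k,t}}(\bw_{\ell,t}) - \mu([\Lambda_{\bw_{k,t}}^{\bw_{\ell,t}}]^{-1}\widehat{\nabla J}_\ell(\bw_{\ell,t}) - \widehat{\nabla J}_k(\bw_{k,t})) + (R_{\ell,t}-R_{k,t})$. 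Applying the elementary inequality $\|a+b+c\|^2 \le 4\|a\|^2 + \tfrac{8}{?}\dots$ — more precisely splitting into the ``disagreement'' part $\exp^{-1}_{\bw_{k,t}}(\bw_{\ell,t})$, the ``gradient'' part of order $\mu$, and the residual of order $\mu^2$ via $\|a+b+c\|^2\le 4\|a\|^2+4\|b\|^2+4\|c\|^2$ (or a $(1+\eta)$-weighted version) — isolates the leading term $4(1+C_\kappa B^2)^2\sum_{k,\ell}c_{\ell k}\|\exp^{-1}_{\bw_{k,t}}(\bw_{\ell,t})\|^2$. This last double sum I would bound by a constant times $V_F(\bw_t)$: each term is $\le (1+C_\kappa B^2)^2 d^2(\bw_{k,t},\bw_{\ell,t}) \le 2(1+C_\kappa B^2)^2(d^2(\bw_{k,t},\bw_{m,t})+d^2(\bw_{\ell,t},\bw_{m,t}))$ by the triangle inequality (squared), and summing over $k,\ell$ with $C$ doubly stochastic yields $\le 4(1+C_\kappa B^2)^2 V_F(\bw_t)$ — this is where the factor $8(1+C_\kappa B^2)^2$ in the claim comes from, after combining with the outer $(1+C_\kappa B^2)^2$ and keeping track of the $2$'s.

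For the remaining terms: the $\mu$-order gradient part contributes, after taking $\bbE\{\cdot\mid\calF_t\}$ and using $\bbE\|\widehat{\nabla J}_\ell\|^2 \le \bbE\|\nabla J_\ell\|^2 + \sigma^2 \le G^2+\sigma^2$ together with $\|[\Lambda]^{-1}\|\le 1+C_R B^2$ (the stated smoothness constant of the exponential map), a term of order $\mu^2$ with a coefficient involving $L^2$ and $C_R^2B^2$ — but crucially part of this must be absorbed into the $\bbE V_F(\bw_t)$ coefficient to produce the $(1+2\mu^2L^2+\mu^2C_R^2B^2)$ factor; I would achieve this by using a weighted Young's inequality $\|a+b\|^2 \le (1+\mu)\|a\|^2 + (1+\mu^{-1})\|b\|^2$ rather than the crude factor-$4$ split on the first two terms, so the Lipschitz perturbation of the transport/gradient against the disagreement term folds back multiplicatively onto $\bbE V_F(\bw_t)$. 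The residuals $R_{\ell,t},R_{k,t}$ are $\calO(\mu^2\|\widehat{\nabla J}\|^2)$ by~\eqref{eq:residual_bound}, so their squared contribution is $\calO(\mu^4)$ after using $\bbE\|\widehat{\nabla J}\|^4\le 8(G^4+\sigma^4)$ (via $(a+b)^4$-type bound plus Lemma~\ref{lemma:gradient_bound} and~\eqref{eq:gradient_variance_square}); cross terms between the $\mu$-part and the residual are $\calO(\mu^3)\subseteq\calO(\mu^2)$. Collecting everything and taking total expectation gives~\eqref{eq:consensus_bias_bound}.

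The main obstacle I anticipate is the bookkeeping in the second step: one must choose the weights in the repeated Young/Cauchy--Schwarz splittings so that (i) the disagreement term ends up with exactly the advertised coefficient $8(1+C_\kappa B^2)^2(1+2\mu^2L^2+\mu^2C_R^2B^2)$ and nothing is lost to a spurious $\mu$-independent enlargement, and (ii) all genuinely lower-order contributions are cleanly collected into $\calO(\mu^2)+\calO(\mu^4)$. A subtle point is that $\widehat{\nabla J}_k(\bw_{k,t})$ in the expansion is at a different base point than $\widehat{\nabla J}_\ell(\bw_{\ell,t})$, so to compare them one must parallel-transport and use the $L$-Lipschitz gradient bound~\eqref{eq:lipschitz} — this is precisely what introduces the $\mu^2L^2$ term and links it to $d^2(\bw_{k,t},\bw_{\ell,t})$, hence again to $V_F(\bw_t)$. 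Handling this transport comparison correctly, rather than bounding $\|\widehat{\nabla J}_k\|$ and $\|\widehat{\nabla J}_\ell\|$ separately (which would give only an $\calO(\mu^2)$ term with no $V_F$ coupling and thus a weaker inequality), is the crux.
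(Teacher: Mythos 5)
Your overall route coincides with the paper's: pass to the tangent space at $\bw_{k,t}$ via Lemma~\ref{lemma:lipschitz_exponential_map}, linearize $\exp^{-1}_{\bw_{k,t}}(\bphi_{\ell,t+1})$ with Lemma~\ref{lemma:taylor_expansion} (note that $\exp^{-1}_{\bw_{k,t}}(\bphi_{k,t+1})=-\mu\widehat{\nabla J}_k(\bw_{k,t})$ holds exactly by the adaptation step, so no expansion and no residual $R_{k,t}$ are needed for the $k$-term), split into a disagreement part, a gradient-difference part, and residuals via a factor-$4$ Cauchy--Schwarz, bound $\sum_{k,\ell}c_{\ell k}\,d^2(\bw_{k,t},\bw_{\ell,t})\le 2V_F(\bw_t)$ using double stochasticity, and couple the gradient difference to $V_F$ through the $L$-Lipschitz bound~\eqref{eq:lipschitz} after parallel transport. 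This matches the paper's proof and correctly accounts for the $8(1+C_\kappa B^2)^2$ prefactor and the $2\mu^2L^2$ contribution.

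The one genuine gap is in how you handle the mismatch between the vector transport $[\Lambda_{\bw_{k,t}}^{\bw_{\ell,t}}]^{-1}$ produced by the Taylor expansion and the parallel transport $\Gamma_{\bw_{\ell,t}}^{\bw_{k,t}}$ required for the Lipschitz-gradient comparison. You invoke the operator-norm bound $\|[\Lambda_{\bw_{k,t}}^{\bw_{\ell,t}}]^{-1}\|\le 1+C_RB^2$; that estimate does not vanish as $d(\bw_{k,t},\bw_{\ell,t})\to 0$, so it can only produce an uncoupled $\calO(\mu^2)$ remainder and cannot generate the $\mu^2C_R^2B^2$ coefficient multiplying $\bbE V_F(\bw_t)$ in the claim. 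What is actually needed, and what the paper establishes (following Lemma~6 of~\cite{tripuraneni2018averaging}), is the second-order statement that the composite map $[\Lambda_{\bw_{k,t}}^{\bw_{\ell,t}}]^{-1}\Gamma_{\bw_{k,t}}^{\bw_{\ell,t}}$ equals the identity up to a residual $R_u$ with $\|R_u\|\le C_R\,d^2(\bw_{k,t},\bw_{\ell,t})$; squaring and using $d^2(\bw_{k,t},\bw_{\ell,t})\le B^2$ then yields $\mu^2\|R_u\|^2\le \mu^2C_R^2B^2\,d^2(\bw_{k,t},\bw_{\ell,t})$, which sums against $c_{\ell k}$ to the $V_F$-coupled term. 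Without this Taylor estimate on the transport itself, your decomposition leaves a term $\mu\bigl([\Lambda_{\bw_{k,t}}^{\bw_{\ell,t}}]^{-1}-\Gamma_{\bw_{\ell,t}}^{\bw_{k,t}}\bigr)\widehat{\nabla J}_\ell(\bw_{\ell,t})$ that cannot be folded into $\bbE V_F(\bw_t)$, and the stated coefficient $\bigl(1+2\mu^2L^2+\mu^2C_R^2B^2\bigr)$ is not recoverable. The rest of your bookkeeping (the $\calO(\mu^4)$ residuals via the fourth-moment bound, and the doubly stochastic summation) is consistent with the paper.
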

\begin{proof}
    See Appendix~\ref{appx:lemma:consensus_bias_bound}.
\end{proof}
This lemma indicates that the consensus bias term $\bbE P(\bphi_{t+1})$ can be controlled by the expected Fréchet variance $\bbE V_F(\bw_{t})$ and some higher-order terms depending on the step size $\mu$. 

\subsection{Convergence in general g-non-convex cases}
Based on the lemmas~\ref{lemma:gradient_bound},~\ref{lemma:cost_relation}, and~\ref{lemma:consensus_bias_bound}, we are ready to prove the convergence of the cost function $J(\bw_t)$ after $t_o$ iterations to the first-order stationary point for general (geodesically $L$-smooth) g-non-convex cost functions.
\begin{theorem}
\label{theorem:convergence}
Under assumptions~\ref{assumption:graph}--\ref{assumption:gradient_noise}, suppose $\alpha \in (0, \frac{\zeta_2}{\zeta_1})$ and $\mu \in (0, \frac{1}{L}]$. The sequence $\{J(\bw_t)\}_{t\geq t_o+1}$ satisfies the following relation: 
\begin{align}
    \label{eq:convergence}
    & \quad \
    \frac{1}{T-t_o}\sum_{t=t_o+1}^T\bbE \| {\nabla J}(\bw_{t})\|^2 \nonumber \\ &\leq \frac{4}{\mu K(T-t_0)} \bbE \left[J(\bw_{t_o+1}) - J(\bw^*)\right] + \calO(\alpha^2) + \calO(\alpha^2\mu^2)\,,
\end{align}
where $t_o$ is given in \eqref{eq:sufficient_iteration}.
\end{theorem}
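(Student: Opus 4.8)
The plan is to telescope the descent inequality of Lemma~\ref{lemma:cost_relation} over $t = t_o+1, \ldots, T$, after substituting in the bound on the consensus bias term from Lemma~\ref{lemma:consensus_bias_bound} and the bound on the Fréchet variance from Theorem~\ref{theorem:network_agreement}. Concretely, from Lemma~\ref{lemma:cost_relation} we have, for each $t \ge t_o$,
\begin{align*}
\frac{\mu K}{4}\bbE\|\nabla J(\bw_t)\|^2 \le \bbE J(\bw_t) - \bbE J(\bw_{t+1}) + \frac{9\alpha^2}{2\mu K}\bbE P(\bphi_{t+1})\,.
\end{align*}
I would then plug the bound from Lemma~\ref{lemma:consensus_bias_bound} into the last term. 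Since Lemma~\ref{lemma:consensus_bias_bound} controls $\bbE P(\bphi_{t+1})$ by a constant multiple of $\bbE V_F(\bw_t)$ plus $\calO(\mu^2)+\calO(\mu^4)$ terms, and since by Theorem~\ref{theorem:network_agreement} we have $\bbE V_F(\bw_t) = \calO(\mu^2)$ for all $t \ge t_o$, the whole term $\frac{9\alpha^2}{2\mu K}\bbE P(\bphi_{t+1})$ becomes $\frac{9\alpha^2}{2\mu K}\big(\calO(\mu^2) + \calO(\mu^4)\big) = \calO(\alpha^2 \mu/K) + \calO(\alpha^2 \mu^3/K)$ per iteration. (The exact accounting of the $1/(\mu K)$ factor against the $\mu^2$ from the Fréchet variance is what produces the stated $\calO(\alpha^2)$ and $\calO(\alpha^2\mu^2)$ orders once one also divides by $T - t_o$ at the end, keeping in mind the factors involving $\mu K$ are absorbed into the order notation as in the paper's convention.)

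Next I would sum the resulting inequality from $t = t_o+1$ to $T$. The left-hand side gives $\frac{\mu K}{4}\sum_{t=t_o+1}^{T}\bbE\|\nabla J(\bw_t)\|^2$; the first two terms on the right telescope to $\bbE J(\bw_{t_o+1}) - \bbE J(\bw_{T+1}) \le \bbE J(\bw_{t_o+1}) - J(\bw^*)$, using that $\bw^*$ is the global minimizer in $\calB$ (Assumption~\ref{assumption:manifold}(a)); and the consensus/noise residual terms, being $\calO(\alpha^2\mu) + \calO(\alpha^2\mu^3)$ per step, contribute $(T - t_o)$ times that after summation. Dividing through by $\frac{\mu K}{4}(T - t_o)$ then yields
\begin{align*}
\frac{1}{T-t_o}\sum_{t=t_o+1}^{T}\bbE\|\nabla J(\bw_t)\|^2 \le \frac{4}{\mu K (T-t_o)}\bbE\big[J(\bw_{t_o+1}) - J(\bw^*)\big] + \calO(\alpha^2) + \calO(\alpha^2\mu^2)\,,
\end{align*}
which is exactly~\eqref{eq:convergence}.

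The main subtlety — and the step I would be most careful about — is the bookkeeping of the step-size powers and the $K$-dependence when combining the $\frac{9\alpha^2}{2\mu K}$ prefactor with the $\calO(\mu^2)$ bound on $\bbE V_F(\bw_t)$: naively this gives a leading residual of order $\alpha^2\mu/K$, so one must check that the paper's $\calO(\cdot)$ notation is indeed swallowing the extra $\mu$ and $1/K$ (it appears to, given the stated result has plain $\calO(\alpha^2)$), and that no term of order $\alpha^2\mu^{-1}$ sneaks in from, e.g., the $\calO(\mu^2)$ portions of Lemma~\ref{lemma:consensus_bias_bound} being too coarse. A secondary point is that Lemma~\ref{lemma:cost_relation} requires $\mu \in (0, 1/L]$ and Lemma~\ref{lemma:consensus_bias_bound} requires $\alpha \in (0, \zeta_2/\zeta_1)$, both of which are already in the hypotheses, so there is no additional constraint to impose. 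Everything else is a routine telescoping argument; there is no new geometric input needed beyond the three cited results.
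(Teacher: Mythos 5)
Your proposal is correct and follows essentially the same route as the paper's proof: rearrange the descent inequality of Lemma~\ref{lemma:cost_relation}, substitute the bound $\bbE P(\bphi_{t+1}) \leq \calO(\mu^2)+\calO(\mu^4)$ obtained by combining Lemma~\ref{lemma:consensus_bias_bound} with Theorem~\ref{theorem:network_agreement}, telescope over $t=t_o+1,\ldots,T$, and divide by $\frac{\mu K}{4}(T-t_o)$. Your bookkeeping concern resolves exactly as in the paper: the per-iteration residual after normalization is $\frac{18\alpha^2}{\mu^2K^2}\bigl(\calO(\mu^2)+\calO(\mu^4)\bigr)=\calO(\alpha^2)+\calO(\alpha^2\mu^2)$, so no $\alpha^2\mu^{-1}$ term appears.
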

\begin{proof}
    See Appendix~\ref{appx:theorem:convergence}.
\end{proof}
Theorem~\ref{theorem:convergence} explicitly characterizes the convergence of the algorithm in~\eqref{eq:diffusion} for general geodesically smooth non-convex functions under an appropriate constant step size in a non-asymptotic manner. 
\cred{Theorem~\ref{theorem:convergence} captures the trade-off that smaller values of $\mu$ improve the asymptotic accuracy of the algorithm, while larger values of $\mu$ accelerate the transient convergence.}
In particular, the stationary gap of the algorithm~\eqref{eq:diffusion} for any infinite number of iterations \cred{$T\to \infty$} is bounded by $\calO(\alpha^2) + \calO(\alpha^2\mu^2)$, which can be made relatively small by choosing small step sizes $\alpha$ and $\mu$. 

\subsection{Convergence under the Riemannian PL condition}
\cred{In this subsection, we discuss the convergence of the algorithm in~\eqref{eq:diffusion} when the global cost function $J$ further satisfies the \cred{local} Riemannian PL condition in the submanifolds $\calS$. Similar to~\cite{han2025efficient, sun2024retraction}, we suppose the solutions are already close to $\calS$ after $t_p$ iterations.}
We first refine Lemma~\ref{lemma:cost_relation} using Assumption~\ref{assumption:pl_condition}. A similar refinement can be found for the Euclidean case in~\cite{xin2021improved}.
\begin{lemma}
\label{lemma:cost_relation_pl}
Under assumptions~\ref{assumption:graph}--\ref{assumption:pl_condition}, suppose $\mu \in (0, \frac{1}{L}]$ \cred{and $\bw_t \in \calS$ after $t_p$ iterations}. The sequence \cred{$\{J(\bw_{t})\}_{t\geq t_p+1}$} satisfies the following relation:
\begin{align}
    \label{eq:cost_relation_pl}
    \bbE \left\{J(\bw_{t+1}) - J(\bw^*)\right\} & \leq \left(1 - \frac{\mu K}{4\tau}\right)\bbE \left\{J(\bw_{t}) - J(\bw^*)\right\} \nonumber \\ & \quad  + \frac{9\alpha^2}{2\mu K}\bbE P(\bphi_{t+1}) \,.
\end{align}
\end{lemma}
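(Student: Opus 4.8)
The plan is to derive \eqref{eq:cost_relation_pl} directly from the descent inequality of Lemma~\ref{lemma:cost_relation} by substituting in the Riemannian PL condition of Assumption~\ref{assumption:pl_condition}; no new geometric estimates are required.

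First, since $\mu \in (0,\tfrac{1}{L}]$, Lemma~\ref{lemma:cost_relation} applies and gives $\bbE J(\bw_{t+1}) \leq \bbE J(\bw_{t}) - \tfrac{\mu K}{4}\,\bbE\|\nabla J(\bw_t)\|^2 + \tfrac{9\alpha^2}{2\mu K}\,\bbE P(\bphi_{t+1})$. Subtracting the deterministic constant $J(\bw^*)$ from both sides leaves the right-hand side unchanged except that $\bbE J(\bw_t)$ is replaced by $\bbE\{J(\bw_t) - J(\bw^*)\}$, so it remains to convert the gradient-norm term into one proportional to $\bbE\{J(\bw_t)-J(\bw^*)\}$.

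Second, I would invoke Assumption~\ref{assumption:pl_condition} — legitimate since $\bw_t \in \calM^K$ by Assumption~\ref{assumption:manifold}(a) — which rearranges to $\|\nabla J(\bw_t)\|^2 \geq \tau^{-1}\big(J(\bw_t) - J(\bw^*)\big)$ pointwise. Taking expectations and multiplying by the negative number $-\tfrac{\mu K}{4}$ reverses the inequality, turning $-\tfrac{\mu K}{4}\,\bbE\|\nabla J(\bw_t)\|^2$ into the upper bound $-\tfrac{\mu K}{4\tau}\,\bbE\{J(\bw_t)-J(\bw^*)\}$. Substituting this into the previous inequality and collecting the two terms proportional to $\bbE\{J(\bw_t)-J(\bw^*)\}$ yields the contraction factor $1-\tfrac{\mu K}{4\tau}$ and hence \eqref{eq:cost_relation_pl}.

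There is no substantive obstacle here: the only delicate points are checking that the iterate $\bw_t$ lies in the region where the PL inequality is posited (covered by Assumption~\ref{assumption:manifold}(a)) and tracking the direction of the inequality when scaling by the negative coefficient $-\mu K/4$. The contraction factor $1-\tfrac{\mu K}{4\tau}$ produced here is precisely what will drive the linear-rate part of the subsequent convergence result, once it is paired with the $\calO(\mu^2)$-type control of $\bbE P(\bphi_{t+1})$ coming from Lemma~\ref{lemma:consensus_bias_bound} and Theorem~\ref{theorem:network_agreement}.
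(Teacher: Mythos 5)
Your proof is correct and follows essentially the same route as the paper: substitute the PL bound $\|\nabla J(\bw_t)\|^2 \geq \tau^{-1}(J(\bw_t)-J(\bw^*))$ into the descent inequality of Lemma~\ref{lemma:cost_relation} and subtract $J(\bw^*)$ from both sides. The only (immaterial) difference is the order in which you perform these two steps, and your explicit checks on the sign reversal and the domain of validity of the PL condition are sound.
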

\begin{proof}
    See Appendix~\ref{appx:lemma:cost_relation_pl}.
\end{proof}
Compared to the result in Lemma~\ref{lemma:cost_relation}, the first term on the RHS of~\eqref{eq:cost_relation_pl} in Lemma~\ref{lemma:cost_relation_pl} is reduced by a factor $(1 - \frac{\mu K}{4\tau}) < 1$ due to the PL condition, which is crucial for establishing the linear convergence of the cost function.
Combining this cost relation in~\eqref{eq:cost_relation_pl} with the bound on the consensus bias term in Lemma~\ref{lemma:consensus_bias_bound}, we are ready to prove the linear convergence of the cost function $J(\bw_t)$ after \cred{sufficient} iterations.
\begin{theorem}
\label{theorem:convergence_pl}
Under assumptions~\ref{assumption:graph}--\ref{assumption:pl_condition}, suppose $\alpha \in (0, \frac{\zeta_2}{\zeta_1})$, $\mu \in (0, \bar \mu]$ with $\bar \mu = \min\{\frac{1}{L}, \frac{4\tau}{K}\}$ \cred{and $\bw_t \in \calS$ after $t_p$ iterations}. 
\cred{Let $\underline t \triangleq \max\{t_o, t_p\}$,} the sequence $\{J(\bw_t)\}_{t\geq \cred{\underline t}+1}$ satisfies the following relation: 
\begin{align}
    \label{eq:convergence_pl}
    \bbE \left\{J(\bw_{t}) - J(\bw^*)\right\} 
    &\leq \left(1 - \frac{\mu K}{4\tau}\right)^{t-\cred{\underline t}}\bbE \left\{J(\bw_{\cred{\underline t}}) - J(\bw^*)\right\} \nonumber \\ & \quad  + \calO(\alpha^2) + \calO(\alpha^2\mu^2)\,.
\end{align}
\end{theorem}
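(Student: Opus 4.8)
The plan is to combine the refined descent inequality of Lemma~\ref{lemma:cost_relation_pl} with the consensus bias bound of Lemma~\ref{lemma:consensus_bias_bound} and the network agreement guarantee of Theorem~\ref{theorem:network_agreement}, then unroll the resulting one-step recursion from $t_o$ to $t$. First I would substitute the bound on $\bbE P(\bphi_{t+1})$ from Lemma~\ref{lemma:consensus_bias_bound} into~\eqref{eq:cost_relation_pl}. Since $\alpha \in (0,\frac{\zeta_2}{\zeta_1})$ and $\mu \in (0,\bar\mu]$, the leading coefficient $8(1+C_\kappa B^2)^2(1+2\mu^2L^2+\mu^2C_R^2B^2)$ is bounded by an absolute constant, so the $\frac{9\alpha^2}{2\mu K}\bbE P(\bphi_{t+1})$ term becomes $\frac{C\alpha^2}{\mu K}\bbE V_F(\bw_t) + \frac{\alpha^2}{\mu K}(\calO(\mu^2)+\calO(\mu^4))$ for a constant $C$ depending on curvature and smoothness.

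Next I would invoke Theorem~\ref{theorem:network_agreement}: for $t \geq t_o$ we have $\bbE V_F(\bw_t) = \calO(\mu^2)$. Plugging this in, the entire additive perturbation collapses to $\frac{\alpha^2}{\mu K}\cdot\calO(\mu^2) + \frac{\alpha^2}{\mu K}(\calO(\mu^2)+\calO(\mu^4)) = \calO(\alpha^2\mu) + \calO(\alpha^2\mu^3)$, which is of the form $\calO(\alpha^2) + \calO(\alpha^2\mu^2)$ as claimed (indeed even slightly smaller, so subsumed by that expression). Denote this additive term by $\Delta$. The recursion~\eqref{eq:cost_relation_pl} then reads, for all $t \geq t_o$,
\begin{align}
    \bbE\{J(\bw_{t+1}) - J(\bw^*)\} \leq \Big(1 - \frac{\mu K}{4\tau}\Big)\bbE\{J(\bw_{t}) - J(\bw^*)\} + \Delta\,. \nonumber
\end{align}
Here the constraint $\mu \le \bar\mu \le \frac{4\tau}{K}$ ensures the contraction factor $q \triangleq 1 - \frac{\mu K}{4\tau} \in [0,1)$, so the recursion is genuinely contractive.

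Then I would unroll this geometric recursion from $t_o$ to $t$, using the standard bound $\sum_{j=0}^{t-t_o-1} q^j \le \frac{1}{1-q} = \frac{4\tau}{\mu K}$, which gives
\begin{align}
    \bbE\{J(\bw_t) - J(\bw^*)\} \leq q^{\,t-t_o}\,\bbE\{J(\bw_{t_o}) - J(\bw^*)\} + \frac{4\tau}{\mu K}\Delta\,. \nonumber
\end{align}
Finally I would verify that $\frac{4\tau}{\mu K}\Delta = \frac{4\tau}{\mu K}\big(\calO(\alpha^2\mu) + \calO(\alpha^2\mu^3)\big) = \calO(\alpha^2) + \calO(\alpha^2\mu^2)$, matching the statement. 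I do not expect a serious obstacle here; the one point requiring care is tracking the $\mu$-powers through the two substitutions — in particular confirming that dividing the $\calO(\mu^2)$ network-agreement bound and the $\calO(\mu^2)$ residual terms by $\mu$ (from the $\frac{1}{\mu K}$ prefactor) and then by another $\mu$ (from $\frac{4\tau}{\mu K}$ in the geometric sum) still leaves the final steady-state error at order $\alpha^2$ rather than blowing up. Keeping the bookkeeping honest so the claimed $\calO(\alpha^2)+\calO(\alpha^2\mu^2)$ is not secretly $\calO(\alpha^2/\mu)$ is the main thing to get right; everything else is a routine unrolling of a contractive scalar recursion.
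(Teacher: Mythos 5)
Your proposal is correct and follows essentially the same route as the paper's proof: both combine Lemma~\ref{lemma:cost_relation_pl} with the bound $\bbE P(\bphi_{t+1})\leq \calO(\mu^2)+\calO(\mu^4)$ (obtained from Lemma~\ref{lemma:consensus_bias_bound} together with Theorem~\ref{theorem:network_agreement}) and then control the unrolled recursion via the geometric-sum bound $\sum_{s} (1-\tfrac{\mu K}{4\tau})^{s-t_o}\leq \tfrac{4\tau}{\mu K}$. The only cosmetic difference is that you substitute the consensus-bias bound before unrolling whereas the paper unrolls first and bounds the summands afterward; your $\mu$-power bookkeeping ($\tfrac{4\tau}{\mu K}\cdot\tfrac{\alpha^2}{\mu K}\cdot\calO(\mu^2)=\calO(\alpha^2)$) matches the paper's.
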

\begin{proof}
    See Appendix~\ref{appx:theorem:convergence_pl}.
\end{proof}
The non-asymptotic convergence rate in Theorem~\ref{theorem:convergence_pl} indicates that the sequence $\{\bbE \left\{J(\bw_{t}) - J(\bw^*)\right\} \}_{t\geq \cred{\underline t}+1}$ decays linearly at the rate of $(1-\frac{\mu K}{4\tau})^{t-\cred{\underline t}}$ such that the error term is up to $\calO(\alpha^2) + \calO(\alpha^2\mu^2)$ at the steady state. This suggests that the algorithm in~\eqref{eq:diffusion} can achieve a linear convergence rate after sufficient iterations $\cred{\underline t}$ under the \cred{local} Riemannian PL condition.
\cred{Theorem~\ref{theorem:convergence_pl} captures the trade-off whereby larger values of $\mu$ yield a faster convergence rate, while smaller values of $\mu$ result in a smaller steady-state error.}

Compared to the non-convex convergence analyses in the Euclidean setting~\cite{xin2021improved, vlaski2021distributed, vlaski2021distributed2, alghunaim2022unified}, the results in Theorems~\ref{theorem:convergence} and~\ref{theorem:convergence_pl} explicitly account for the influence of manifold curvature. This curvature effect is captured by the constants $\zeta_1$, $\zeta_2$, and $C_\kappa$, which appear in the error term $\mathcal{O}(\alpha^2) + \mathcal{O}(\alpha^2\mu^2)$. In contrast to the g-convex setting studied in~\cite{wang2025riemannian}, the convergence analysis in Theorems~\ref{theorem:convergence} and~\ref{theorem:convergence_pl} employs a Taylor expansion of the exponential map (as developed in Lemma~\ref{lemma:taylor_expansion}) to bound the consensus bias term. This approach, unlike the Lyapunov-based analysis in~\cite{wang2025riemannian}, allows the results to establish the convergence of the global cost function directly.

\section{\cred{Example Applications}}
\label{sec:applications}
\cred{In this section, we apply~\eqref{eq:diffusion} to distributed robust PCA on the Grassmann manifold $\GR$ and to distributed low-rank matrix completion on the fixed-rank manifold $\calM_r$.
The definitions of these two manifolds and their useful tools are presented in Appendices~\ref{appx:grassmann_manifold} and~\ref{appx:fixrank_manifold}, respectively.}

\subsection{Distributed robust PCA}
In the decentralized setting, we consider the following optimization problem inspired by~\cite{lerman2018overview, huroyan2018distributed} \cred{with $\bx_{k} \in \bbR^{n}$ being data samples observed by each agent~$k$}:
\begin{equation}
    \label{eq:robust_pca_distributed}
	\min_{\pi(\bU_k) \in \GR}\, -\,\bbE_{\bx_k}\left\{Q_\delta(\|\bU_k^T\bx_k\|)\right\} \,,
\end{equation} 
where $\pi(\bU_k)$ (see Appendix~\ref{appx:grassmann_manifold} for a definition) represents the local estimate at agent $k$, and the function $Q_\delta$ is defined as
\[
Q_\delta(p) \;=\; 
\begin{cases}
p, & p \ge \delta\,, \\
\frac{p^2}{2\delta} + \frac{\delta}{2}, & p < \delta\,.
\end{cases}
\]
The expectation in the loss function~\eqref{eq:robust_pca_distributed} is approximated by realizations $\bx_{k,t}$ at each time instant $t$. 

The Riemannian stochastic gradient is computed using the Euclidean gradient of~\eqref{eq:robust_pca_distributed} at $\bU_{k,t}$ and~\eqref{eq:grassmann_gradient} given in Appendix~\ref{appx:grassmann_manifold}.
The exponential mapping is defined in~\eqref{eq:grassmann_exp}.
To evaluate the accuracy of the solutions, we consider the geodesic distance~\eqref{eq:grassmann_distance} between the estimates at each time instant $\pi(\bU_{k,t})$ and the optimal solution $\pi(\bU^*)$, and we define the mean square deviation (MSD) accordingly as $\frac{1}{K}\sum_{k=1}^K d_{\GR}^2(\bU_{k,t}, \bU^*)$.
To compute $\pi(\bU^*)$ in the MSD, we use the Riemannian trust region algorithm~\cite{absil2007trust} on~\eqref{eq:robust_pca_distributed} with the full data matrix.

\begin{color}{black}
\subsection{Distributed low-rank matrix completion}
We also consider the following decentralized optimization problem on the fixed-rank manifold $\calM_r$ inspired by~\cite{vandereycken2013low}:
\begin{equation}
    \label{eq:lrmc_distributed}
    \min_{\bA_k \in \mathcal{M}_r}\, \bbE_{(i_k,\bx_{i_k})}\left\{\frac{1}{2}\|\bA_k(i_k,:) - \bx_{i_k}\|^2\right\} \,,
\end{equation}
where $\bA_k$ is the local estimate at agent $k$, and $(i_k,\bx_{i_k})$ denotes one observed row index and its corresponding row of the data matrix. The expectation in~\eqref{eq:lrmc_distributed} is approximated online by the observations available at each time instant.

The Riemannian stochastic gradient is computed as the projection of the Euclidean gradient of the instantaneous cost in~\eqref{eq:lrmc_distributed} computed according to~\eqref{eq:fixedrank_gradient}. For the combination step, we use the inverse orthographic retraction~\eqref{eq:fixedrank_inv_retraction} and the orthographic retraction~\eqref{eq:fixedrank_retraction}.
To evaluate the accuracy of the solutions, we consider the relative error (RE) used in~\cite{vandereycken2013low}, defined as $\frac{1}{K}\sum_{k=1}^K \frac{\|\bA_{k,t} - \bA^*\|_F}{\|\bA^*\|_F}$
where $\bA_{k,t}$ is the estimate at agent $k$ and iteration $t$, and $\bA^*$ is the reference low-rank solution.
To compute $\bA^*$ in the RE, we use the Riemannian conjugate gradient algorithm on~\eqref{eq:lrmc_distributed} with the full data matrix.

\end{color}

\section{\cred{Simulation results}}
\label{sec:simulations}

\begin{figure}
    \centering
    \begin{subfigure}[b]{0.24\textwidth}
        \centering
        \includegraphics[trim=10mm 10mm 10mm 8mm, clip, scale=0.28]{./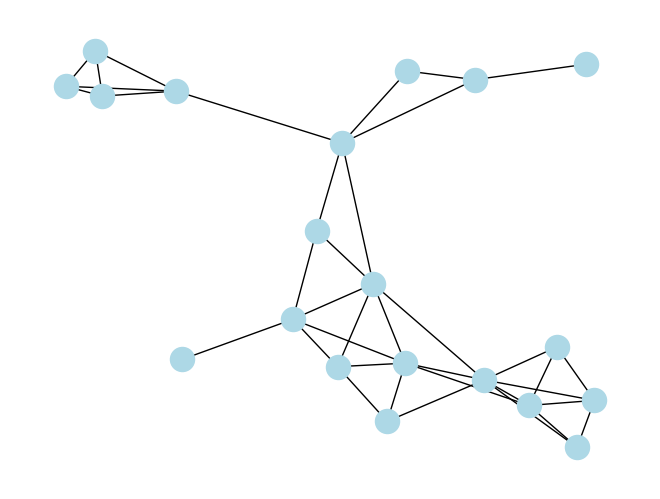}
        \caption{}
        \label{fig:topology_a}
    \end{subfigure}
    \hfill
    \begin{subfigure}[b]{0.24\textwidth}
        \centering
        \includegraphics[trim=10mm 10mm 10mm 6mm, clip, scale=0.28]{./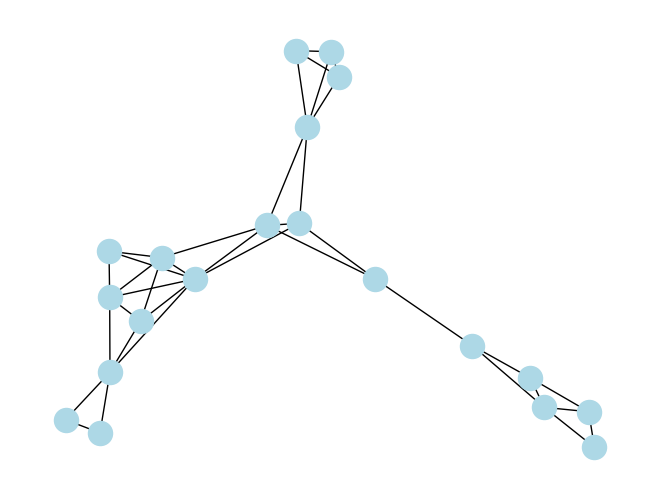}
        \caption{}
        \label{fig:topology_b}
    \end{subfigure}
    \caption{The randomly generated graph structures used in the experiments. (a) Graph with Metropolis weights. (b) Graph with uniformly distributed weights.}
    \label{fig:topology}
\end{figure}

\cred{This section presents numerical experiments on distributed robust PCA and low-rank matrix completion.}
The randomly generated graph topologies of the multi-agent systems used for the experiments are illustrated in Figure~\ref{fig:topology}. The weights in matrix $\bC$ with $K=20$ agents were randomly generated by the Metropolis rule~\cite{sayed2014adaptationLearningOptimization} and the uniform rule.
For simulation on synthetic data, the MSD results are averaged over 100 independent Monte Carlo experiments.

We compare our algorithm against the Riemannian non-cooperative algorithm, which independently applies R-SGD on each agent using its local data $\bx_{k,t}$. \cred{For robust PCA}, we provide comparisons with an extrinsic algorithm on
Stiefel manifold: Decentralized Riemannian Stochastic Gradient Descent (DRSGD)~\cite{chen2021decentralized}. 
\cred{For low-rank matrix completion, we extend the decentralized consensus SGD~\cite{nedic2010constrained, lian2017can} to the fixed-rank manifold using a projection operator to ensure the constraints are satisfied and name the method "Extrinsic consensus" for comparison.}
\cred{The parameters of all the compared methods are fine-tuned to achieve their best performance.}
All the experiments are implemented in Python with the Pymanopt toolbox~\cite{townsend2016pymanopt}.
\ccred{Open-source codes to reproduce the results are publicly available on \url{https://github.com/xiuheng-wang/diffusion_manifold_nonconvex_release}.}

\begin{figure}[t]
    \centering
    \begin{subfigure}[b]{0.24\textwidth}
        \centering
        \includegraphics[trim=2mm 2mm 2mm 2mm, clip, scale=0.36]{./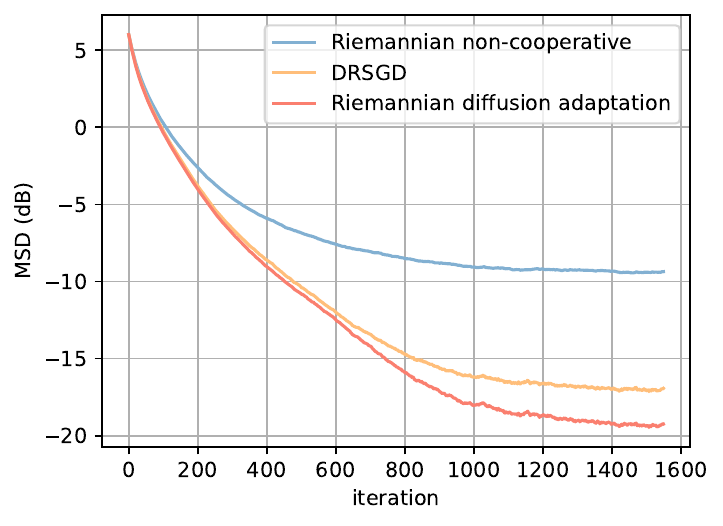}
        \caption{}
        \label{fig:msd_pca_a}
    \end{subfigure}
    \hfill
    \begin{subfigure}[b]{0.24\textwidth}
        \centering
        \includegraphics[trim=2mm 2mm 2mm 0mm, clip, scale=0.36]{./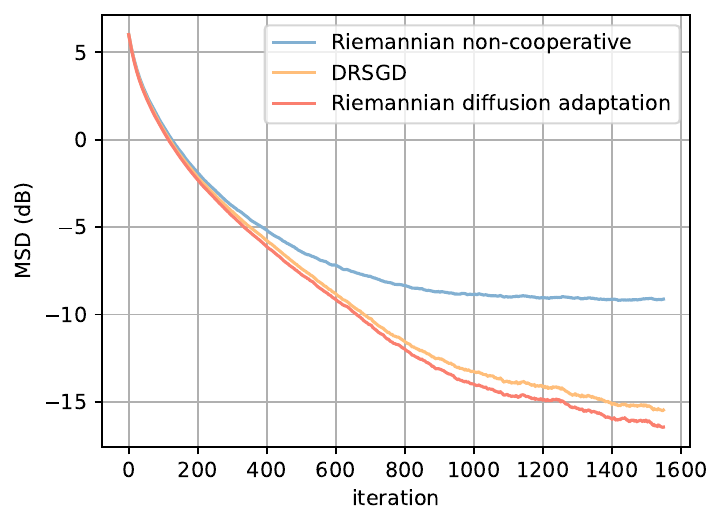}
        \caption{}
        \label{fig:msd_pca_b}
    \end{subfigure}
    \caption{Illustration of MSD performance of the algorithms for distributed robust PCA on synthetic data on different graphs. (a) Graph with Metropolis weights. (b) Graph with uniformly distributed weights}
    \label{fig:msd_pca}
\end{figure}

\begin{figure}[t]
    \centering
    \begin{subfigure}[b]{0.24\textwidth}
        \centering
        \includegraphics[trim=2mm 2mm 2mm 2mm, clip, scale=0.36]{./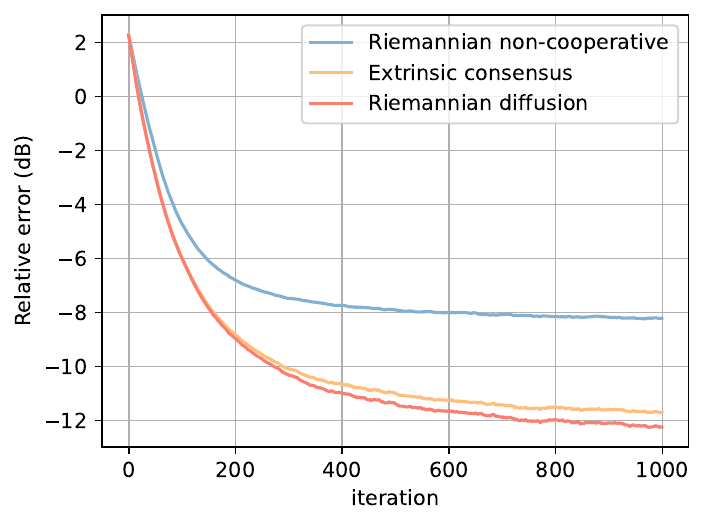}
        \caption{}
        \label{fig:re_lrmc_a}
    \end{subfigure}
    \hfill
    \begin{subfigure}[b]{0.24\textwidth}
        \centering
        \includegraphics[trim=2mm 2mm 2mm 0mm, clip, scale=0.36]{./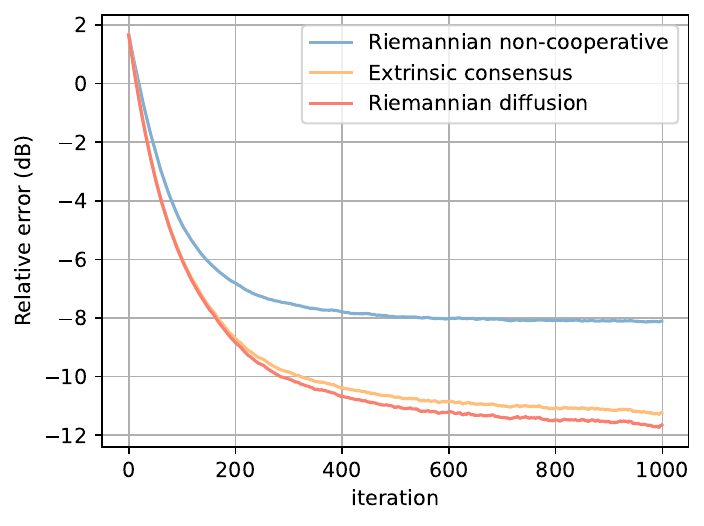}
        \caption{}
        \label{fig:re_lrmc_b}
    \end{subfigure}
    \caption{\cred{Illustration of RE performance of the algorithms for distributed low-rank matrix completion on synthetic data on different graphs. (a) Graph with Metropolis weights. (b) Graph with uniformly distributed weights}}
    \label{fig:re_lrmc}
\end{figure}

\subsection{Synthetic data}
\cred{For distributed robust PCA,} we generate synthetic data as in~\cite{chen2021decentralized, wang2025riemannian}. First, we set $n=10$, $p=5$, and independently
sample $1500K$ data points according to a multivariate Gaussian model to obtain a matrix $\bS \in \bbR^{n\times 1500K}$. Let $\bS = \bU\bLambda\bV^T$ be its truncated SVD. We modify the distribution of $\bLambda$ as $\bLambda' = \text{diag}({\lambda^i})$ with $\lambda = 0.8$ and $i = 0, \cdots, n-1$ to reset $\bS$ as $\bS' = \bU\bLambda'\bV^T$. We randomly shuffle and split the columns of $\bS' \in \bbR^{n\times 1500K}$ into $1500$ subsets to obtain $\bX_t$ for all time instants $t=1,\ldots,1500$.
For each agent, we randomly inject 100 outliers sampled from the uniform distribution on $[0, 1]^n$ as in~\cite{huroyan2018distributed}.
The parameter $\delta$ in~\eqref{eq:robust_pca_distributed} is set to $0.1$.
The simulations used fixed step sizes $\mu = 0.12$ and $\alpha = 0.4$ for the Metropolis graph, and $\mu = 0.13$ and $\alpha = 0.4$ for the uniform graph.

\begin{color}{black}
For distributed low-rank matrix completion, we generate synthetic data similar to~\cite{vandereycken2013low}. First, we set $m=n=15$, $r=5$, and generate a rank-$r$ matrix with exponentially decaying singular values $0.8^{i-1}$ for $i=1,\ldots,r$. The matrix is then normalized by its Frobenius norm and perturbed by additive Gaussian noise with standard deviation $0.05$. At each time instant, each agent observes one full row of the matrix, with the row index sampled uniformly at random. The number of stochastic iterations is set to $1000$. 
The simulations used fixed step sizes $\mu = 0.6$ and $\alpha = 0.6$ for both the Metropolis and uniform graphs.

Figure~\ref{fig:msd_pca} and~\ref{fig:re_lrmc} show the MSD and RE performance of the algorithms for distributed robust PCA and low-rank matrix completion on synthetic data, respectively.
It can be seen that the Riemannian diffusion adaptation strategy converges and achieves an improvement over the competing methods.
\end{color}

\begin{figure}[t]
    \centering
    \begin{subfigure}[b]{0.24\textwidth}
        \centering
        \includegraphics[trim=2mm 2mm 2mm 2mm, clip, scale=0.36]{./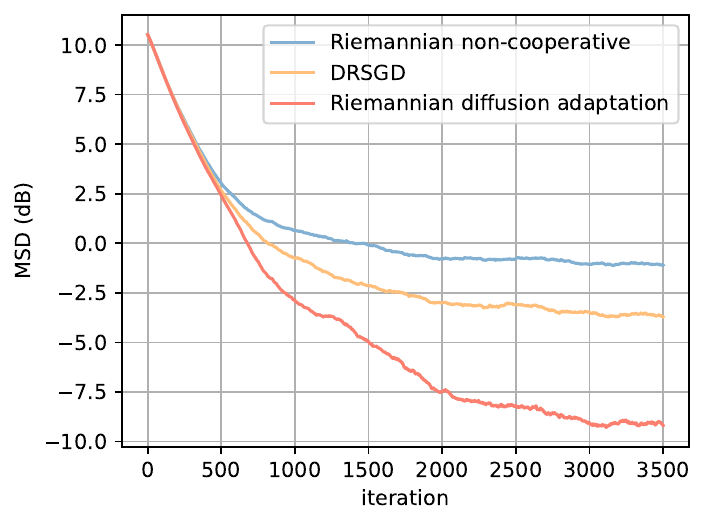}
        \caption{}
        \label{fig:msd_pca_mnist_a}
    \end{subfigure}
    \hfill
    \begin{subfigure}[b]{0.24\textwidth}
        \centering
        \includegraphics[trim=2mm 2mm 2mm 0mm, clip, scale=0.36]{./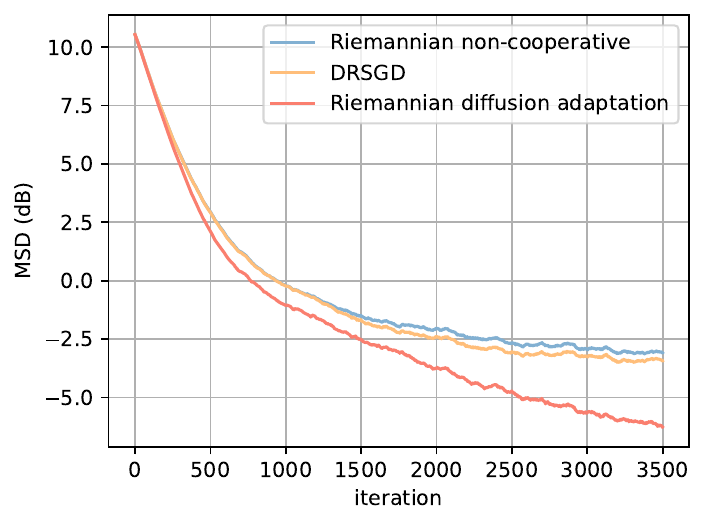}
        \caption{}
        \label{fig:msd_pca_mnist_b}
    \end{subfigure}
    \caption{Illustration of MSD performance of the algorithms for distributed robust PCA on real data on different graphs. (a) Graph with Metropolis weights. (b) Graph with uniformly distributed weights} 
    \label{fig:msd_pca_mnist}
\end{figure}

\begin{figure}[t]
    \centering
    \begin{subfigure}[b]{0.24\textwidth}
        \centering
        \includegraphics[trim=2mm 2mm 2mm 2mm, clip, scale=0.36]{./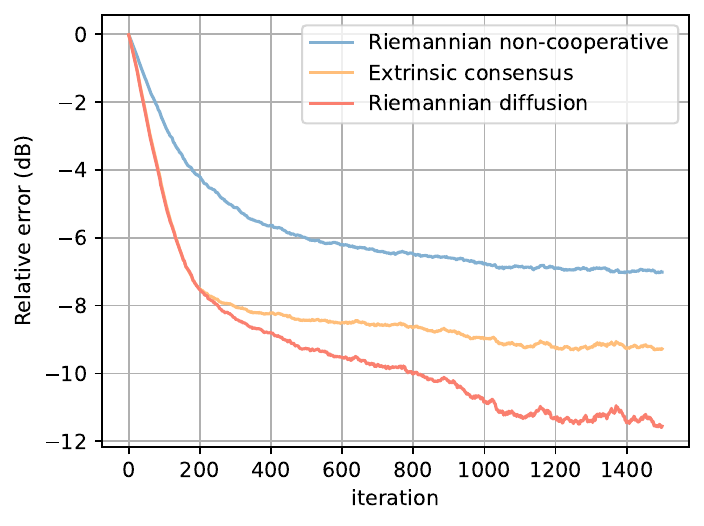}
        \caption{}
        \label{fig:re_lrmc_jester_a}
    \end{subfigure}
    \hfill
    \begin{subfigure}[b]{0.24\textwidth}
        \centering
        \includegraphics[trim=2mm 2mm 2mm 0mm, clip, scale=0.36]{./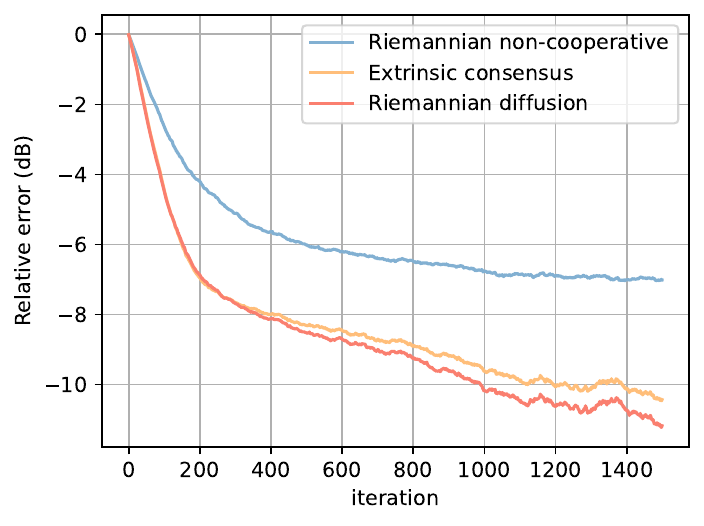}
        \caption{}
        \label{fig:re_lrmc_jester_b}
    \end{subfigure}
    \caption{\cred{Illustration of RE performance of the algorithms for distributed low-rank matrix completion on real data on different graphs. (a) Graph with Metropolis weights. (b) Graph with uniformly distributed weights}}
    \label{fig:re_lrmc_jester}
\end{figure}

\subsection{Real data}
\cred{For distributed robust PCA,} we obtain numerical results on the MNIST dataset~\cite{lecun1998mnist}, which contains 70000 hand-written images with $n = 784$ pixels. The data matrix is normalized such that the elements are in the range $[0,1]$ and then centered. 
We randomly shuffle the images, partition them into $K=20$ subsets, and then run the algorithms to compute the first $p = 5$ principal components.
The step sizes are set to $\mu = 0.006$ and $\alpha = 0.005$ for the Metropolis graph and $\mu = 0.006$ and $\alpha = 0.001$ for the uniform graph. 

\begin{color}{black}
For distributed low-rank matrix completion, we consider the Jester dataset~\cite{goldberg2001eigentaste} and randomly select $50$ users' ratings of $50$ jokes. We form an incomplete user-joke rating matrix from the observed entries, rescale the ratings to $[-1, 1]$, and then randomly shuffle and partition the data into $K=20$ subsets over the network. We then run the algorithms with $1500$ stochastic iterations to recover a low-rank approximation with rank $r=20$. The step sizes are set to $\mu = 0.8$ and $\alpha = 0.3$ for both the Metropolis and uniform graphs.

The MSD and RE of the different methods across both graphs, shown in Figure~\ref{fig:msd_pca_mnist} and~\ref{fig:re_lrmc_jester}, behave similarly to those in the experiment with synthetic data, showing similar convergence and comparable performance across the different approaches.

The improved MSD performance of Riemannian diffusion adaptation on both synthetic and real data can be attributed to the enhanced stability of diffusion strategies compared to consensus counterparts under constant step sizes~\cite{sayed2014adaptationLearningOptimization,chen2015learning}, and its intrinsic manifold updates, which avoid extrinsic distortions during adaptation and combination.
\end{color}

\section{Conclusion}
\label{sec:conclusion}
In this work, the Riemannian diffusion adaptation algorithm was studied for decentralized optimization over multi-agent networks in geodesically non-convex environments. 
We showed that the iterates of the agents achieve approximate consensus after a sufficient number of iterations. 
Building on this finding, we established the convergence of the algorithm to a stationary point for general geodesically non-convex costs and linear convergence under the additional \cred{local} Riemannian PL condition. The results showed that the algorithm can achieve an arbitrarily small steady-state error bound by choosing small step sizes. 
We applied the algorithm to the online distributed robust PCA \cred{and low-rank matrix completion problems}. Numerical simulations on both synthetic and real data illustrated the convergence and performance of the algorithm.

\appendices
\section{Proof of lemma~\ref{lemma:frechet_variance_descent_one_iteration}}
\label{appx:lemma:frechet_variance_descent_one_iteration}
\begin{proof}
Define $\bphi_{m,t}$ as the Fréchet mean of $\bphi_{t}$, we can apply the inequality~\eqref{eq:trigonometric_upper_bound} in Lemma~\ref{lemma:trigonometric_distance} to the geodesic triangle $\Delta \bw_{k, t}\bphi_{k, t}\bphi_{m,t}$ and obtain that
\begin{align}
    \label{eq:trigonometric_upper_bound_1}
    d^2(\bw_{k, t}, \bphi_{m,t}) \leq \ & \zeta_1 d^2(\bphi_{k, t}, \bw_{k, t}) + d^2(\bphi_{k, t}, \bphi_{m, t}) \nonumber \\ & - 2\langle \exp_{\bphi_{k, t}}^{-1}(\bw_{k, t}),  \exp_{\bphi_{k, t}}^{-1}(\bphi_{m, t}) \rangle\,.
\end{align}
From the combination step in~\eqref{eq:diffusion}, we know
\begin{align}
    \label{eq:opt_penalty_1}
    \exp_{\bphi_{k, t}}^{-1}(\bw_{k, t}) = \alpha \sum_{\ell=1}^K c_{\ell k} \exp^{-1}_{\bphi_{k, t}}(\bphi_{\ell, t})\,,
\end{align}
so that
\begin{align}
    \label{eq:opt_penalty_2}
    & \ 2\langle \exp_{\bphi_{k, t}}^{-1}(\bw_{k, t}),  \exp_{\bphi_{k, t}}^{-1}(\bphi_{m, t}) \rangle \nonumber \\ = & \ 2\alpha \sum_{\ell=1}^K c_{\ell k} \langle \exp^{-1}_{\bphi_{k, t}}(\bphi_{\ell, t}),  \exp_{\bphi_{k, t}}^{-1}(\bphi_{m, t}) \rangle \,.
\end{align}
Now we lower bound the item on the right-hand side (RHS) of~\eqref{eq:opt_penalty_2} by applying the inequality~\eqref{eq:trigonometric_lower_bound} in Lemma~\ref{lemma:trigonometric_distance} to the geodesic triangle $\Delta \bphi_{\ell, t}\bphi_{k, t}\bphi_{m,t}$, and thus obtain
\begin{align}
    \label{eq:trigonometric_lower_bound_1}
    d^2(\bphi_{\ell, t}, \bphi_{m,t}) & \geq \zeta_2 d^2(\bphi_{k, t}, \bphi_{\ell,t}) + d^2(\bphi_{k, t}, \bphi_{m,t}) \nonumber \\ & \quad - 2\langle \exp_{\bphi_{k, t}}^{-1}(\bphi_{\ell, t}),  \exp_{\bphi_{k, t}}^{-1}(\bphi_{m, t})\rangle\,.
\end{align}
Combine the results in~\eqref{eq:trigonometric_upper_bound_1}, ~\eqref{eq:opt_penalty_2} and~\eqref{eq:trigonometric_lower_bound_1}, we have
\begin{align}
    \label{eq:trigonometric_bound_combined}
    d^2(\bw_{k, t}, \bphi_{m,t}) & \leq \zeta_1 d^2(\bphi_{k, t}, \bw_{k, t}) + d^2(\bphi_{k, t}, \bphi_{m, t}) \nonumber \\ & \quad - \zeta_2 \alpha \sum_{\ell=1}^K c_{\ell k} d^2(\bphi_{k, t}, \bphi_{\ell,t}) \nonumber \\ & \quad + \alpha \sum_{\ell=1}^K c_{\ell k} d^2(\bphi_{\ell, t}, \bphi_{m,t}) \nonumber \\ & \quad - \alpha \sum_{\ell=1}^K c_{\ell k} d^2(\bphi_{k, t}, \bphi_{m,t}) \,.
\end{align}
Summing~\eqref{eq:trigonometric_bound_combined} over $k$ and consider $C$ is doubly stochastic as in Assumption~\ref{assumption:graph}, we have
\begin{align}
    \label{eq:trigonometric_bound_combined_1}
    \sum_{k=1}^Kd^2(\bw_{k, t}, \bphi_{m,t}) & \leq \zeta_1 \sum_{k=1}^K d^2(\bphi_{k, t}, \bw_{k, t}) + V_F(\bphi_t) \nonumber \\ 
    & \quad - \zeta_2 \alpha P(\bphi_t) \,.
\end{align}
To further upper bound the RHS of~\eqref{eq:trigonometric_bound_combined_1}, from~\eqref{eq:opt_penalty_1} we write
\begin{align}
    \label{eq:opt_penalty_3}
    d^2(\bphi_{k, t}, \bw_{k, t}) &= \alpha^2 \bigg\|\sum_{\ell=1}^K c_{\ell k} \exp^{-1}_{\bphi_{k, t}}(\bphi_{\ell, t})\bigg\|^2 \nonumber \\
    & \leq \alpha^2 \sum_{\ell=1}^K c_{\ell k}\sum_{\ell=1}^K c_{\ell k} \big\| \exp^{-1}_{\bphi_{k, t}}(\bphi_{\ell, t})\big\|^2 \nonumber \\
    & = \alpha^2 P(\bphi_t) \,,
\end{align}
where the first inequality uses the Cauchy-Schwarz inequality and the second equality follows from the fact that $C$ is doubly stochastic. Define $\bw_{m,t}$ as the Fréchet mean of $\bw_{t}$,
we can plug the result in~\eqref{eq:opt_penalty_3} into~\eqref{eq:trigonometric_bound_combined_1} to obtain
\begin{align}
\label{eq:trigonometric_bound_combined_2}
    V_F(\bw_t)  &\leq \sum_{k=1}^Kd^2(\bw_{k, t}, \bphi_{m,t}) \nonumber \\
    & \leq V_F(\bphi_t) + (\zeta_1\alpha^2 - \zeta_2\alpha) P(\bphi_t)  \,,
\end{align}
as desired.
\end{proof}

\section{Proof of lemma~\ref{lemma:graph_topology}}
\label{appx:lemma:graph_topology}
\begin{proof}

Apply Lemma~\ref{lemma:lipschitz_exponential_map} to the variables $\bphi_{k, t}, \bphi_{\ell, t}$ and $\bphi_{m, t}$, we have
\begin{align}
    \label{eq:lipschitz_exponential_map_0}
    & (1+C_\kappa B^2)^2 d^2(\bphi_{k, t}, \bphi_{\ell, t}) \nonumber \\ 
    \geq \ & \|\exp^{-1}_{\bphi_{m, t}} (\bphi_{k, t}) - \exp^{-1}_{\bphi_{m, t}} (\bphi_{\ell, t})\|^2 \,.
\end{align}
Multiply~\eqref{eq:lipschitz_exponential_map_0} by $c_{\ell, k}$ and summarize the result over $\ell$ and $k$, from the symmetric and doubly stochastic properties of $C$ in Assumption~\ref{assumption:graph} we can obtain the following result:
\begin{align}
    \label{eq:lipschitz_exponential_map_1}
    & (1+C_\kappa B^2)^2 P(\bphi_t)  \nonumber \\ 
    \geq \ & \sum_{k=1}^K\sum_{\ell=1}^K c_{\ell k}\|\exp^{-1}_{\bphi_{m, t}} (\bphi_{k, t}) - \exp^{-1}_{\bphi_{m, t}} (\bphi_{\ell, t})\|^2 \nonumber \\
    = \ & 2V_F(\bphi_t) - 2\sum_{k=1}^K\sum_{\ell=1}^K c_{\ell k}\langle \exp_{\bphi_{m, t}}^{-1}(\bphi_{k, t}),  \exp_{\bphi_{m, t}}^{-1}(\bphi_{\ell, t})\rangle\,.
\end{align}
Consider that $\sum_{\ell=1}^{K} \exp_{\bphi_{m, t}}^{-1}(\bphi_{\ell, t}) = \bzero$ since $\bphi_{m, t}$ is the Fréchet mean of $\{\bphi_{1, t}, \cdots, \bphi_{K, t}\}$, we can write:
\begin{align}
    \label{eq:graph_topology_proof}
    &\sum_{k=1}^K\sum_{\ell=1}^K c_{\ell k}\langle \exp_{\bphi_{m, t}}^{-1}(\bphi_{k, t}),  \exp_{\bphi_{m, t}}^{-1}(\bphi_{\ell, t})\rangle \nonumber \\ =\ &\sum_{k=1}^K\sum_{\ell=1}^K \left(c_{\ell k} - \frac{1}{K}\right)\langle \exp_{\bphi_{m, t}}^{-1}(\bphi_{k, t}),  \exp_{\bphi_{m, t}}^{-1}(\bphi_{\ell, t})\rangle\,.
\end{align}
By selecting an appropriate orthonormal basis, we can represent the elements $\{\exp_{\bphi_{m, t}}^{-1}(\bphi_{k, t})\}$ as matrix notation $\bU \triangleq \left[\exp_{\bphi_{m, t}}^{-1}(\bphi_{1, t}), \cdots, \exp_{\bphi_{m, t}}^{-1}(\bphi_{K, t})\right]$ in the tangent space $T_{\bphi_{m, t}}\calM^K$. Thus, we can further write the inner in matrix form and consider the symmetric property of $C$ and the fact $\sum_{\ell=1}^{K} \exp_{\bphi_{m, t}}^{-1}(\bphi_{\ell, t}) = \bzero$ to obtain:
\begin{align}
    \label{eq:graph_topology_proof_1}
    &\sum_{k=1}^K\sum_{\ell=1}^K \left(c_{\ell k} - \frac{1}{K}\right)\langle \exp_{\bphi_{m, t}}^{-1}(\bphi_{k, t}),  \exp_{\bphi_{m, t}}^{-1}(\bphi_{\ell, t})\rangle  \nonumber\\ = \ & \tr\left(\bU\left(C - \frac{1}{K}\bone\bone^T\right)\bU^T\right)\,,
\end{align}
where the matrix $C - \frac{1}{K}\bone\bone^T$ represents the deviation from consensus.
From the definition of spectral radius, we can further bound the above trace as follows:
\begin{align}
    \label{eq:graph_topology_proof_2}
    \tr\left(\bU\left(C - \frac{1}{K}\bone\bone^T\right)\bU^T\right) &\leq \lambda \tr\left(\bU\bU^T\right) \nonumber \\ &= \lambda V_F(\bphi_t)\,,
\end{align}
with the mixing rate of the network $\lambda = \rho(C-\frac{1}{K}\bone\bone^T)$ defined in Lemma~\ref{lemma:graph_mixing_rate}.
Combining the results in~\eqref{eq:lipschitz_exponential_map_1}-\eqref{eq:graph_topology_proof_2}, we obtain:
\begin{align}
    \label{eq:graph_topology_proof_3}
    P(\bphi_t) \geq \frac{2(1-\lambda)}{(1+C_\kappa B^2)^2} V_F(\bphi_t)\,,
\end{align}
Combining the result in Lemma~\ref{lemma:frechet_variance_descent_one_iteration} and~\eqref{eq:graph_topology_proof_3}, we obtain the desired result.
\end{proof}

\section{Proof of lemma~\ref{lemma:frechet_variance_descent}}
\label{appx:lemma:frechet_variance_descent}
\begin{proof}
From Lemma~\ref{lemma:graph_topology} and the fact that $\bphi_{m,t}$ is the Fréchet mean of $\bphi_{t}$, we have
\begin{align}
\label{eq:trigonometric_bound_combined_4}
    V_F(\bw_{t})  \leq \left(1 - \varepsilon \right) \sum_{k=1}^K d^2(\bphi_{k, t}, \bw_{m, t-1}) \,.
\end{align}
To further upper bound the term $d^2(\bphi_{k, t}, \bw_{m, t-1})$,
we apply the inequality~\eqref{eq:trigonometric_upper_bound} in Lemma~\ref{lemma:trigonometric_distance} to the geodesic triangle $\Delta \bphi_{k, t}\bw_{k, t-1}\bw_{m,t-1}$ and obtain that
\begin{align}
    \label{eq:trigonometric_upper_bound_2}
    & \quad \ d^2(\bphi_{k, t}, \bw_{m, t-1}) \nonumber \\ & \leq  \zeta_1 d^2(\bw_{k, t-1}, \bphi_{k, t}) + d^2(\bw_{k, t-1}, \bw_{m, t-1}) \nonumber \\ & \quad - 2\langle \exp_{\bw_{k, t-1}}^{-1}(\bphi_{k, t}),  \exp_{\bw_{k, t-1}}^{-1}(\bw_{m, t-1}) \rangle \nonumber \\
    & = \zeta_1 \mu^2 \|\widehat{\nabla J}_k(\bw_{k, t-1})\|^2  + d^2(\bw_{k, t-1}, \bw_{m, t-1}) \nonumber \\ & \quad + 2 \langle \mu\widehat{\nabla J}_k(\bw_{k, t-1}), \exp_{\bw_{k, t-1}}^{-1}(\bw_{m, t-1})\rangle \,,
\end{align}
where the equality follows that $\exp_{\bw_{k, t-1}}^{-1}(\bphi_{k, t}) = - \mu \widehat{\nabla J}_k(\bw_{k, t-1})$ from the adaptation step in~\eqref{eq:diffusion}. Take expectation on~\eqref{eq:trigonometric_upper_bound_2} w.r.t. 
$\{\bx_{k, s}\}_{s=0}^t$, we have
\begin{align}
    \label{eq:trigonometric_upper_bound_3}
    &\quad \ \bbE d^2(\bphi_{k, t}, \bw_{m, t-1}) \nonumber \\
    & \leq \zeta_1 \mu^2 \bbE\|\widehat{\nabla J}_k(\bw_{k, t-1})\|^2 + \bbE d^2(\bw_{k, t-1}, \bw_{m, t-1})  \nonumber \\ & \quad + 2 \bbE\langle \mu\bbE\{\widehat{\nabla J}_k(\bw_{k, t-1})|\calF_{t-1}\}, \exp_{\bw_{k, t-1}}^{-1}(\bw_{m, t-1}) \rangle \nonumber \\
    & = \zeta_1 \mu^2 \bbE\|\widehat{\nabla J}_k(\bw_{k, t-1})\|^2 + \bbE d^2(\bw_{k, t-1}, \bw_{m, t-1}) \nonumber \\ & \quad + 2 \bbE\langle \mu\nabla J_k(\bw_{k, t-1}), \exp_{\bw_{k, t-1}}^{-1}(\bw_{m, t-1}) \rangle \nonumber \\
    & \leq \zeta_1 \mu^2 \bbE\|\widehat{\nabla J}_k(\bw_{k, t-1})\|^2 + (1+\xi)\bbE d^2(\bw_{k, t-1}, \bw_{m, t-1})\nonumber \\ & \quad + \xi^{-1}\mu^2 \bbE\|\nabla J_k(\bw_{k, t-1})\|^2 \nonumber \\
    & \leq 2\zeta_1 \mu^2 \left(\bbE\|\nabla J_k(\bw_{k, t-1})\|^2 + \bbE\{\|\bs_{k, t}\|^2|\calF_{t-1}\}\right) \nonumber \\ & \quad + (1+\xi)\bbE d^2(\bw_{k, t-1}, \bw_{m, t-1}) + \xi^{-1}\mu^2 \bbE\|\nabla J_k(\bw_{k, t-1})\|^2 \nonumber \\
    & = (1+\xi)\bbE d^2(\bw_{k, t-1}, \bw_{m, t-1}) \nonumber \\ & \quad + \mu^2\left(2\zeta_1 G^2 + \varepsilon^{-1} G^2 + 2\zeta_1 \sigma^2\right)\,,
\end{align}
where we use the facts $2\langle a, b \rangle \leq \xi a^2 + \xi^{-1}b^2$ for $\xi > 0$ and $\frac{1}{2} (a+b)^2 \leq a^2 + b^2$ in the second and third inequalities, respectively, and consider Assumption~\ref{assumption:gradient_noise} and Lemma~\ref{lemma:gradient_bound} in the equalities. Take expectation on~\eqref{eq:trigonometric_bound_combined_4} w.r.t. 
$\{\bx_{k, s}\}_{s=0}^t$, combine the result with~\eqref{eq:trigonometric_upper_bound_3} and then select $\xi = \varepsilon$ for simplicity, we obtain the desired result.
\end{proof}

\section{Proof of Theorem~\ref{theorem:network_agreement}}
\label{appx:theorem:network_agreement}
\begin{proof}
We can iterate the result in Lemma~\ref{lemma:frechet_variance_descent}, starting from $t=0$ to obtain
\begin{align}
    \label{eq:trigonometric_bound_combined_5} 
    & \quad \ \bbE V_F(\bw_{t}) \nonumber \\ &\leq (1 - \varepsilon^2)^t V_F(\bw_{k, 0}) \nonumber \\ & \quad + \left(1 - \varepsilon\right) \mu^2K\left(2\zeta_1 G^2 + \varepsilon^{-1} G^2 + 2\zeta_1 \sigma^2\right) \sum_{s=0}^{t-1} (1 - \varepsilon^2)^s \nonumber \\
    &\leq (1 - \varepsilon^2)^t KB^2 \nonumber \\ & \quad + \left(1 - \varepsilon\right) \varepsilon^{-2} \mu^2K\left(2\zeta_1 G^2 + \varepsilon^{-1} G^2 + 2\zeta_1 \sigma^2\right) \nonumber \\
    &\leq 2\left(1 - \varepsilon\right) \varepsilon^{-2} \mu^2K\left(2\zeta_1 G^2 + \varepsilon^{-1} G^2 + 2\zeta_1 \sigma^2\right) \,,
\end{align}
where the second inequality follows from the facts $V_F(\bw_{k, 0}) = \sum_{k=1}^Kd^2(\bw_{k, 0}, \bw_{m,0}) \leq KB^2$ and $\sum_{s=0}^{t-1} (1 - \varepsilon^2)^s \leq \sum_{s=0}^{\infty} (1 - \varepsilon^2)^s \leq \varepsilon^{-2}$. The last inequality holds whenever
\begin{align}
    &(1 - \varepsilon^2)^t KB^2 \leq \left(1 - \varepsilon\right) \varepsilon^{-2} \mu^2K\left(2\zeta_1 G^2 + \varepsilon^{-1} G^2 + 2\zeta_1 \sigma^2\right) \nonumber \\
    & \Longleftrightarrow  t \log(1 - \varepsilon^2) \leq 2\log(\mu) + \calO(1) \nonumber \\
    & \Longleftrightarrow t \geq \frac{2\log(\mu)}{\log(1 - \varepsilon^2)} + \calO(1)\,.
\end{align}
We conclude that
\begin{align}
    \bbE V_F(\bw_{t}) & \leq 2\left(1 - \varepsilon\right) \varepsilon^{-2} \mu^2\left(2\zeta_1 G^2 + \varepsilon^{-1} G^2 + 2\zeta_1 \sigma^2\right) \nonumber \\ & = \calO(\mu^2)\,,
\end{align}
with small step sizes $\mu$ after sufficient iterations $t_o$, where
\begin{align}
    t_o =  \frac{2\log(\mu)}{\log(1 - \varepsilon^2)} + \calO(1) = \calO(\mu^{-1})\,,
\end{align}
where the second equality follows since $\lim_{\mu \to 0} \mu\log(\mu) = 0$, which means that the magnitude of $\log(\mu)$ can be bounded above by a constant multiple of $\mu^{-1}$ for $\mu\to 0$. 
\end{proof}

\section{Proof of lemma~\ref{lemma:cost_relation}}
\label{appx:lemma:cost_relation}
\begin{proof}
Considering the smoothness of $J_k$ in Assumption~\ref{assumption:smooth} with $\exp_{\bw_{k, t}}^{-1}(\bphi_{k, t+1}) = - \mu \widehat{\nabla J_k}(\bw_{k, t})$ from the adaptation step in~\eqref{eq:diffusion}, we can write:
\begin{align}
\label{eq:smooth_cost}
    J_k(\bphi_{k, t+1}) &\leq J_k(\bw_{k, t}) + \langle \nabla J_k(\bw_{k, t}), \exp_{\bw_{k, t}}^{-1}(\bphi_{k, t+1})\rangle \nonumber \\ & \quad + \frac{L\|\exp_{\bw_{k, t}}^{-1}(\bphi_{k, t+1})\|^2}{2} \nonumber\\
    & = J_k(\bw_{k, t}) + \langle \nabla J_k(\bw_{k, t}),  - \mu \widehat{\nabla J_k}(\bw_{k, t})\rangle \nonumber \\ & \quad + \frac{L\| - \mu \widehat{\nabla J_k}(\bw_{k, t})\|^2}{2} \,.
\end{align} 
Also, we can obtain the following bound from the geodesic smoothness of $J_k$:
\begin{align}
\label{eq:smooth_grad_cost}
\|\nabla J_k(\bw_{k, t})  - \Gamma_{\bphi_{k, t+1}}^{\bw_{k, t}} \nabla J_k(\bphi_{k, t+1})\| \leq L\mu\|\widehat{\nabla J}_k(\bw_{k, t})\|\,.
\end{align}  
Taking the expectation on~\eqref{eq:smooth_cost} w.r.t. $\{\bx_{k, s}\}_{s=0}^t$ and considering~\eqref{eq:gradient_mean} in Assumption~\ref{assumption:gradient_noise}, we have for each agent $k$:
\begin{align}
\label{eq:smooth_cost_expectation}
    &\quad \bbE J_k(\bphi_{k, t+1}) \nonumber \\ &\leq \bbE J_k(\bw_{k, t}) + \bbE\{\langle \nabla J_k(\bw_{k, t}),  - \mu \widehat{\nabla J}_k(\bw_{k, t})\rangle \} \nonumber \\ & \quad + \frac{L\bbE \| - \mu \widehat{\nabla J}_k(\bw_{k, t})\|^2}{2} \nonumber \\
    & = \bbE J_k(\bw_{k, t}) + \bbE\{\langle \bbE\{\widehat{\nabla J}_k(\bw_{k, t})|\calF_t\},  - \mu\widehat{\nabla J}_k(\bw_{k, t})\rangle \} \nonumber \\ & \quad + \frac{L\mu^2}{2}\bbE \| \widehat{\nabla J}_k(\bw_{k, t})\|^2 \nonumber \\
    & = \bbE J_k(\bw_{k, t}) - \epsilon \bbE \| \widehat{\nabla J}_k(\bw_{k, t})\|^2 \,.
\end{align} 
where $\epsilon \triangleq \mu\Big( 1 - \frac{\mu L}{2}\Big) > 0$ since $\mu \in (0, \frac{1}{L}]$.
Again, considering the smoothness of $J_k$ in Assumption~\ref{assumption:smooth} with $\exp^{-1}_{\bphi_{k, t+1}}(\bw_{k, t+1}) = \alpha \sum_{\ell=1}^K c_{\ell k} \exp^{-1}_{\bphi_{k, t+1}}(\bphi_{\ell, t+1})$ from the combination step in~\eqref{eq:diffusion}, we obtain:
\begin{align}
\label{eq:smooth_penalty}
    & \quad \ J_k(\bw_{k, t+1}) \nonumber \\ 
    &\leq J_k(\bphi_{k, t+1}) + \langle \nabla J_k(\bphi_{k, t+1}), \exp^{-1}_{\bphi_{k, t+1}}(\bw_{k, t+1})\rangle \nonumber \\ & \quad + \frac{L\|\exp^{-1}_{\bphi_{k, t+1}}(\bw_{k, t+1})\|^2}{2} \nonumber\\
    & = J_k(\bphi_{k, t+1}) + \langle \nabla J_k(\bphi_{k, t+1}),  \alpha \sum_{\ell=1}^K c_{\ell k} \exp^{-1}_{\bphi_{k, t+1}}(\bphi_{\ell, t+1})\rangle \nonumber \\ & \quad + \frac{L\| \alpha \sum_{\ell=1}^K c_{\ell k} \exp^{-1}_{\bphi_{k, t+1}}(\bphi_{\ell, t+1})\|^2}{2} \nonumber \\
    & \leq J_k(\bphi_{k, t+1}) + \frac{\xi}{2}\|\nabla J_k(\bphi_{k, t+1})\|^2 \nonumber \\ & \quad + \left(\frac{1}{2\xi}  + \frac{L}{2}\right)\alpha^2 \bigg\| \sum_{\ell=1}^K c_{\ell k} \exp^{-1}_{\bphi_{k, t+1}}(\bphi_{\ell, t+1})\bigg\|^2\,,
\end{align} 
where the second inequality uses the fact $\langle a, b \rangle \leq \frac{\xi}{2} a^2 + \frac{1}{2\xi}b^2$. Then, we take the expectation on~\eqref{eq:smooth_penalty} w.r.t. $\{\bx_{k, s}\}_{s=0}^t$, and combine the result with~\eqref{eq:smooth_cost_expectation} to obtain
\begin{align}
\label{eq:smooth_both}
    & \quad \ \bbE J_k(\bw_{k, t+1}) \nonumber \\ &\leq \bbE J_k(\bw_{k, t}) - \epsilon \bbE \| \widehat{\nabla J}_k(\bw_{k, t})\|^2 + \frac{\xi}{2}\bbE\|\nabla J_k(\bphi_{k, t+1})\|^2 \nonumber \\ & \quad + \left(\frac{1}{2\xi}  + \frac{L}{2}\right)\alpha^2 \bbE\bigg\|\sum_{\ell=1}^K c_{\ell k} \exp^{-1}_{\bphi_{k, t+1}}(\bphi_{\ell, t+1})\bigg\|^2 \nonumber \\
    & \leq \bbE J_k(\bw_{k, t}) - \epsilon \bbE \| \widehat{\nabla J}_k(\bw_{k, t})\|^2 + \frac{\xi}{2}\bbE\|\nabla J_k(\bphi_{k, t+1})\|^2  \nonumber \\ & \quad + \left(\frac{1}{2\xi}  + \frac{L}{2}\right)\alpha^2 \sum_{\ell=1}^K c_{\ell k} \bbE d^2(\bphi_{k, t+1}, \bphi_{\ell, t+1})
    \,,
\end{align} 
where the second inequality uses the Cauchy-Schwarz inequality.
Now we need to upper bound $\bbE\|\nabla J_k(\bphi_{k, t+1})\|^2$. Let us consider
\begin{align}
\label{eq:bound_2}
    &\quad \ \bbE\|\nabla J_k(\bphi_{k, t+1})\|^2 \nonumber \\
    &= \bbE\|\nabla J_k(\bphi_{k, t+1})  - \Gamma_{\bw_{k, t}}^{\bphi_{k, t+1}} \nabla J_k(\bw_{k, t}) + \Gamma_{\bw_{k, t}}^{\bphi_{k, t+1}} \nabla J_k(\bw_{k, t}) \|^2 \nonumber \\
    &\leq 2\bbE\|\nabla J_k(\bphi_{k, t+1})  - \Gamma_{\bw_{k, t}}^{\bphi_{k, t+1}} \nabla J_k(\bw_{k, t})\|^2 \nonumber \\ & \quad + 2\bbE\| \nabla J_k(\bw_{k, t}) \|^2 \nonumber \\
    &\leq 2(\mu^2 L^2+1) \bbE\|\widehat{\nabla J}_k(\bw_{k, t})\|^2\,,
\end{align} 
where the first inequality uses the fact that the parallel transport is isometric and $\frac{1}{2}(a+b)^2 \leq a^2 + b^2$, and the second inequality uses~\eqref{eq:smooth_grad_cost} and the fact $\bbE\|{\nabla J_k}(\bw_{k, t})\|^2 \leq \bbE\|\widehat{\nabla J}_k(\bw_{k, t})\|^2$. 
Plugging the upper bound of $\frac{1}{2}\bbE\|\nabla J_k(\bphi_{k, t+1})\|^2$ provided in~\eqref{eq:bound_2} into~\eqref{eq:smooth_both} and reordering, we have
\begin{align}
\label{eq:smooth_both_2}
      & \quad \ \bbE J_k(\bw_{k, t+1}) \nonumber \\ & \leq \bbE J_k(\bw_{k, t}) - \left(\epsilon - \xi (\mu^2 L^2+1)\right) \bbE \| \widehat{\nabla J_k}(\bw_{k, t})\|^2 \nonumber \\ & \quad + \left(\frac{1}{2\xi}  + \frac{L}{2}\right)\alpha^2 \sum_{\ell=1}^K c_{\ell k} \bbE d^2(\bphi_{k, t+1}, \bphi_{\ell, t+1}) \nonumber \\
      &  = \bbE J_k(\bw_{k, t}) - \frac{\epsilon}{2} \bbE \| \widehat{\nabla J_k}(\bw_{k, t})\|^2 \nonumber \\ & \quad + \left(\frac{\mu^2L^2+1}{\epsilon}  + \frac{L}{2}\right)\alpha^2 \sum_{\ell=1}^K c_{\ell k} \bbE d^2(\bphi_{k, t+1}, \bphi_{\ell, t+1}) \,,
\end{align} 
where in the equality we select $\xi = \frac{\epsilon}{2(\mu^2L^2+1)}$ for simplicity. Since $\mu \in (0, \frac{1}{L}]$, we have $L \leq \mu^{-1}$ and $\epsilon \geq \frac{\mu}{2}$, and thus we can further simplify~\eqref{eq:smooth_both_2} as
\begin{align}
\label{eq:smooth_both_3}
      \bbE J_k(\bw_{k, t+1}) &\leq  \bbE J_k(\bw_{k, t}) - \frac{\mu}{4} \bbE \| \widehat{\nabla J_k}(\bw_{k, t})\|^2 \nonumber \\ & \quad + \frac{9\alpha^2}{2\mu} \sum_{\ell=1}^K c_{\ell k} \bbE d^2(\bphi_{k, t+1}, \bphi_{\ell, t+1}) \nonumber \\
      & \leq \bbE J_k(\bw_{k, t}) - \frac{\mu}{4} \bbE \| {\nabla J_k}(\bw_{k, t})\|^2 \nonumber \\ & \quad + \frac{9\alpha^2}{2\mu} \sum_{\ell=1}^K c_{\ell k} \bbE d^2(\bphi_{k, t+1}, \bphi_{\ell, t+1})
      \,,
\end{align} 
where the second inequality uses the fact $\bbE\|{\nabla J_k}(\bw_{k, t})\|^2 \leq \bbE\|\widehat{\nabla J}_k(\bw_{k, t})\|^2$.
Taking the average of~\eqref {eq:smooth_both_3} over $k$, and considering the fact $\|{\nabla J}(\bw_t)\|^2= \frac{1}{K^2}\sum_{k=1}^K \| {\nabla J_k}(\bw_{k, t})\|^2$, we obtain the desired result. 
\end{proof} 

\section{Proof of lemma~\ref{lemma:taylor_expansion}}
\label{appx:lemma:taylor_expansion}
\begin{proof}
\begin{color}{black}
Let
$\bvarphi_{\ell,t+1} \triangleq \exp_{\bw_{\ell,t}}\!\big(-\mu \nabla J_\ell(\bw_{\ell,t})\big)$ for notation simplicity and define the perturbation term
\begin{align}
\bdelta_{\ell,t}
\triangleq \exp_{\bw_{k,t}}^{-1}(\bphi_{\ell,t+1})-\exp_{\bw_{k,t}}^{-1}(\bvarphi_{\ell,t+1})\,.
\end{align}
Then,
\begin{align}
\exp_{\bw_{k,t}}^{-1}(\bphi_{\ell,t+1})
= \exp_{\bw_{k,t}}^{-1}(\bvarphi_{\ell,t+1})+\bdelta_{\ell,t}\,.
\end{align}
Moreover, by applying Lemma~\ref{lemma:lipschitz_exponential_map} twice and using the definition of the gradient noise process, we obtain
\begin{align}
\|\bdelta_{\ell,t}\|
&= \big\|\exp_{\bw_{k,t}}^{-1}(\bphi_{\ell,t+1})-\exp_{\bw_{k,t}}^{-1}(\bvarphi_{\ell,t+1})\big\| \nonumber\\
&\leq (1+C_\kappa B^2)\, d(\bphi_{\ell,t+1},\bvarphi_{\ell,t+1}) \nonumber\\
&\leq (1+C_\kappa B^2)^2 \mu \big\|\widehat{\nabla J}_\ell(\bw_{\ell,t})-\nabla J_\ell(\bw_{\ell,t})\big\| \nonumber\\
&= (1+C_\kappa B^2)^2 \mu \|\bs_{\ell,t+1}(\bw_{\ell,t})\|\,.
\end{align}
\end{color}
Since the exponential map is a diffeomorphism close to $\bzero$, we can compute the Taylor series of 
\[
\cred{\exp_{\bu_{k, t}}^{-1}(\bphi_{\ell, t+1}) 
= \exp_{\bw_{k, t}}^{-1}\Big(\exp_{\bw_{\ell, t}}\big( - \mu {\nabla J}_\ell(\bw_{\ell, t})\big)\Big)}
\] 
around $\bzero$. 
Let us define a function $F(\bx) \triangleq \exp^{-1}_{\bw_{k,t}}(\exp_{\bw_{\ell,t}}(\bx))$
to simplify the presentation of the composite exponential maps. Using \emph{Taylor's theorem} (see Appendix A.6 of~\cite{absil2009manoptBook}), and considering the Taylor expansion at $\bzero$, evaluated at \cred{$\bx=- \mu {{\nabla J}_\ell}(\bw_{\ell,t})$}, we have:
\begin{align}
\label{eq:tayloe_expansion_F}
    F\big( \cred{- \mu {{\nabla J}_\ell}(\bw_{\ell,t})} \big) & = F(\bzero) + DF(\bzero)\big[\bx \big]  + R_{\ell, t}\,.
\end{align}
We now upper bound the three terms on the RHS of~\eqref{eq:tayloe_expansion_F}. 
The first term is simply the value of $F$ at $\bzero$:
\begin{align}
    F(\bzero) = \exp_{\bw_{k,t}}^{-1}(\bw_{\ell,t})\,.
\end{align}
For the second term, we compute the differential of $F$ at $\bzero$ using the chain rule:
\begin{align}
    DF(\bzero) &= D\big(\exp^{-1}_{\bw_{k,t}}\circ \exp_{\bw_{\ell,t}}(\bzero) \big) \nonumber \\
    & = D \exp^{-1}_{\bw_{k,t}}(\exp_{\bw_{\ell,t}}(\bzero)) \circ D\exp_{\bw_{\ell,t}}(\bzero) \nonumber 
    \\
    &= [D \exp_{\bw_{k,t}}( \exp^{-1}_{\bw_{k,t}}(\exp_{\bw_{\ell,t}}(\bzero)))]^{-1} \circ I_{T_{\bw_{\ell,t}}\calM} \nonumber \\
    &= [D \exp_{\bw_{k,t}}( \exp^{-1}_{\bw_{k,t}}(\bw_{\ell,t}))]^{-1}  \nonumber \\
    &= [\Lambda_{\bw_{k,t}}^{\bw_{\ell,t}}]^{-1} \,,
\end{align}
where the last equality follows the definition of the vector transport.
The third term can be written as:
\begin{align}
    \quad \ R_{\ell, t} \triangleq  \int_0^1 (1-s) D^2 F(s\bx) \big[\bx, \bx \big] ds\,.
\end{align}
Since $F$ is locally smooth, the operator norm of its Hessian tensor is bounded on a neighborhood of $\bzero$ by some constant $C_F>0$. Thus, the norm of $R_{\ell, t}$ can be bounded as:
\begin{align}
    \cred{\|R_{\ell, t}\| \leq C_F\|\bx\|^2 = \mu^2 C_F \cred{\|{{\nabla J}_\ell}(\bw_{\ell,t})\|^2}\,.}
\end{align}
In summary, we obtain the desired result.
\end{proof}

\section{Proof of lemma~\ref{lemma:consensus_bias_bound}}
\label{appx:lemma:consensus_bias_bound}
\begin{proof}
From Lemma~\ref{lemma:lipschitz_exponential_map}, we can upper bound the term $d^2(\bphi_{k, t+1}, \bphi_{\ell, t+1})$ as follows:
\begin{align}
    \label{eq:upper_bound_disagreement}
    & \quad \ d^2(\bphi_{k, t+1}, \bphi_{\ell, t+1}) \nonumber \\
    &\leq \left(1+C_{\kappa}B^2\right)^2
    \left\|\exp_{\bw_{k, t}}^{-1}(\bphi_{\ell, t+1}) - \exp_{\bw_{k, t}}^{-1}(\bphi_{k, t+1})\right\|^2 \nonumber \\
    &= \left(1+C_{\kappa}B^2\right)^2
    \left\|\exp_{\bw_{k, t}}^{-1}(\bphi_{\ell, t+1}) + \mu \widehat{\nabla J}_k(\bw_{k, t}) \right\|^2 
    \nonumber \\
    &\leq \cred{2\left(1+C_{\kappa}B^2\right)^2
    \left\|\exp_{\bw_{k, t}}^{-1}(\bphi_{\ell, t+1}) + \mu {\nabla J}_k(\bw_{k, t}) \right\|^2}
    \nonumber \\
    &\quad + \cred{2\mu^2\left(1+C_{\kappa}B^2\right)^2 \left\| \bs_{k,t+1}(\bw_{k,t}) \right\|^2}\,.
\end{align}
\cred{where first equality follows from the adaptation step in~\eqref{eq:diffusion} and the last inequality is due to $\frac{1}{2}(a+b)^2\leq a^2+b^2$}. 
Combining the Taylor expansion~\eqref{eq:taylor_expansion} in Lemma~\ref{lemma:taylor_expansion} with~\eqref{eq:upper_bound_disagreement} we can obtain~\eqref{eq:upper_bound_disagreement_2}, shown at \cred{the top of the next page},
where we use the fact that parallel transport is isometric.

To further upper bound the term $d^2(\bphi_{k, t+1}, \bphi_{\ell, t+1})$ in~\eqref{eq:upper_bound_disagreement_2}, it remains to show that the composite transport $[\Lambda_{\bw_{k,t}}^{\bw_{\ell,t}}]^{-1} \Gamma_{\bw_{k,t}}^{\bw_{\ell,t}}$ is locally close to the identity as in Theorem A.2.9 of~\cite{waldmann2012geometric} and~\cite{han2023riemannian}. This is also verified in Lemma 6 of~\cite{tripuraneni2018averaging} for general retractions. Consider the function $H(u) \triangleq [\Lambda_{x}^{\exp_x(u)}]^{-1} \Gamma_{x}^{\exp_x(u)}$, for any $v\in T_x\calM$, its evaluation is given by $H(u)[v] \in \cred{\mathcal{L}}(T_x\calM)$, where $\cred{\mathcal{L}}(T_x\calM)$ denotes the set of linear maps on $T_x\calM$. Let us apply Taylor's theorem for $H$ up to first order,  from \emph{Taylor's theorem} (see Appendix A.6 of~\cite{absil2009manoptBook}) we have
\begin{align}
    H(u)[v] = v + R_u\,,
\end{align}
where $\|R_u\| = C_R\|u\|^2$ with $C_R$ being a constant that is relevant to the Riemann curvature tensor~\cite{waldmann2012geometric, han2023riemannian} and depends on the smoothness of the exponential mapping, see the proof of Theorem 7 in~\cite{tripuraneni2018averaging}. Let $x = \bw_{k,t}$  and $u = \exp_{\bw_{k,t}}^{-1}(\bw_{\ell,t})$, then for \cred{$v  = \Gamma_{\bw_{\ell,t}}^{\bw_{k,t}} {{\nabla J}_\ell}(\bw_{\ell,t})$}, we have
\begin{align}
    \label{eq:taylor_expansion_H}
    \cred{[\Lambda_{\bw_{k,t}}^{\bw_{\ell,t}}]^{-1} \Gamma_{\bw_{k,t}}^{\bw_{\ell,t}}\Gamma_{\bw_{\ell,t}}^{\bw_{k,t}} {{\nabla J}_\ell}(\bw_{\ell,t}) = \Gamma_{\bw_{\ell,t}}^{\bw_{k,t}} {{\nabla J}_\ell}(\bw_{\ell,t})} + R_u\,,
\end{align}
and the norm of $R_u$ can be bounded as 
\begin{align}
    \label{eq:residual_bound_H}
    \|R_u\| \leq C_R d^2(\bw_{k, t}, \bw_{\ell, t})\,.
\end{align}
\begin{figure*}[ht]
\noindent\hrulefill
\begin{align}
    \label{eq:upper_bound_disagreement_2}
    d^2(\bphi_{k, t+1}, \bphi_{\ell, t+1})
    & \leq  2\left(1+C_{\kappa}B^2\right)^2
    \left(\left\|\exp^{-1}_{\bw_{k, t}}(\bw_{\ell, t}) + \mu {\nabla J}_k(\bw_{k, t}) - \mu[\Lambda_{\bw_{k,t}}^{\bw_{\ell,t}}]^{-1} \Gamma_{\bw_{k,t}}^{\bw_{\ell,t}}\Gamma_{\bw_{\ell,t}}^{\bw_{k,t}} {{\nabla J}_\ell}(\bw_{\ell,t}) + R_{\ell, t} + \cblue{\bdelta_{\ell,t}} \right\|^2 \right) \nonumber\\
    & \quad + \cred{2\mu^2\left(1+C_{\kappa}B^2\right)^2 \left\| \bs_{k,t+1}(\bw_{k,t}) \right\|^2}\,.
\end{align}
\noindent\hrulefill
\begin{align}
    \label{eq:upper_bound_disagreement_3}
    d^2(\bphi_{k, t+1}, \bphi_{\ell, t+1})
    & \leq \cred{8}\left(1+C_{\kappa}B^2\right)^2
    \left(d^2(\bw_{k, t}, \bw_{\ell, t}) + \cred{\mu^2\left\|{\nabla J}_k(\bw_{k, t}) - \Gamma_{\bw_{\ell,t}}^{\bw_{k,t}} {{\nabla J}_\ell}(\bw_{\ell,t})\right\|^2} + \mu^2\|R_u\|^2 + \|R_{\ell, t} + \cblue{\bdelta_{\ell,t}}\|^2 \right) \nonumber\\
    & \quad + \cred{2\mu^2\left(1+C_{\kappa}B^2\right)^2 \left\| \bs_{k,t+1}(\bw_{k,t}) \right\|^2}\,.
\end{align}
\end{figure*}
From the Taylor expansion in~\eqref{eq:taylor_expansion_H} we can rewrite~\eqref{eq:upper_bound_disagreement_2} as in~\eqref{eq:upper_bound_disagreement_3}, shown at \cred{the top of the next page},
where we use the fact $\frac{1}{4}(a+b+c+d)^2 \leq a^2 + b^2 + c^2 + d^2$. 
We now upper bound the weighted summation and expectation of the four terms on the RHS of~\eqref{eq:upper_bound_disagreement_3}, using the fact that $C$ is symmetric and doubly stochastic.  \\
(i) For the first term on the RHS of~\eqref{eq:upper_bound_disagreement_3}, we have:
\begin{align}
    \label{eq:term_first}
    &\quad \ \bbE\sum_{k=1}^K\sum_{\ell=1}^K c_{\ell k}  d^2(\bw_{k, t}, \bw_{\ell, t}) \nonumber \\
    &\leq \sum_{k=1}^K\sum_{\ell=1}^K c_{\ell k} 
    \Big(\bbE d^2(\bw_{k, t}, \bw_{m, t}) + \bbE d^2(\bw_{\ell, t}, \bw_{m, t})\Big) \nonumber \\
    &= 2\sum_{k=1}^K \bbE d^2(\bw_{k, t}, \bw_{m,t}) 
    = 2\bbE V_F(\bw_{t})\,.
\end{align}\\
(ii) For the second term on the RHS of~\eqref{eq:upper_bound_disagreement_3}, we have:
\begin{align}
    \label{eq:term_second}
    &\quad \  \bbE\sum_{k=1}^K\sum_{\ell=1}^K c_{\ell k} \cred{\mu^2\left\|{\nabla J}_k(\bw_{k, t}) - \Gamma_{\bw_{\ell,t}}^{\bw_{k,t}} {{\nabla J}_\ell}(\bw_{\ell,t})\right\|^2} \nonumber \\
    & \cred{\leq \mu^2 L^2\bbE\sum_{k=1}^K\sum_{\ell=1}^K c_{\ell k}  d^2(\bw_{k, t}, \bw_{\ell, t})} \nonumber \\
    & \cred{=2\mu^2 L^2 \bbE V_F(\bw_{t}) } \,,
\end{align}
\cred{where the frist inequality holds from the Lipschitz gradient in Assumption~\ref{assumption:smooth} and the second inequality follows from the result in~\eqref{eq:term_first}.}\\
\noindent (iii) For the third term on the RHS of~\eqref{eq:upper_bound_disagreement_3}, we have:
\begin{align}
\label{eq:term_third}
    \bbE\sum_{k=1}^K\sum_{\ell=1}^K c_{\ell k} \mu^2\|R_u\|^2 
    & \leq \sum_{k=1}^K\sum_{\ell=1}^K c_{\ell k} \mu^2C_R^2\bbE d^4(\bw_{k, t}, \bw_{\ell, t}) \nonumber \\
    & \leq \mu^2 C_R^2 B^2\sum_{k=1}^K\sum_{\ell=1}^K c_{\ell k}\bbE d^2(\bw_{k, t}, \bw_{\ell, t})\nonumber\\
    & \leq 2\mu^2 C_R^2 B^2\bbE V_F(\bw_{t})
    \,,
\end{align}
where the first inequality follows from the bound on $\|R_u\|$ in~\eqref{eq:residual_bound_H}, the second uses the fact $d^2(\bw_{k, t}, \bw_{\ell, t})\leq B^2$ and the last holds from~\eqref{eq:term_first}. \\
\begin{color}{black}
(\cred{iv}) For the fourth term on the RHS of~\eqref{eq:upper_bound_disagreement_3}, we have:
\begin{align}
\label{eq:term_fourth}
    &\quad \ \bbE \sum_{k=1}^K\sum_{\ell=1}^K c_{\ell k} (R_{\ell, t} \cblue{+ \bdelta_{\ell,t}})^2 
    \nonumber \\
    & = 2\bbE \sum_{\ell=1}^K R^2_{\ell, t}  + 2\bbE \sum_{\ell=1}^K \bdelta^2_{\ell,t} \nonumber \\
    & \leq \cblue{2} \mu^4 C_F^2 \bbE\sum_{k=1}^K \|{\nabla J}_k(\bw_{k,t})\|^4 
    \nonumber \\
    & \quad \cblue{+2 \sum_{\ell=1}^K (1+C_\kappa B^2)^4 \mu^2  \bbE\big\{ \bbE\{ \| \bs_{\ell,t+1}(\bw_{\ell,t})\|^2| \calF_t\}\big\}}  \nonumber \\ 
    & \leq \cblue{2} \mu^4 C_F^2 KG^4 \cblue{+ 2 (1+C_\kappa B^2)^4 \mu^2 K\sigma^2 } \,,
\end{align}
where the first inequality follows from the fact $\frac{1}{2}(a+b)^2 \leq a^2 + b^2$ and the second inequality holds by the facts~\eqref{eq:residual_bound} and~\eqref{eq:perturbation_bound} in Lemma~\ref{lemma:taylor_expansion}, and the last holds from Lemma~\ref{lemma:gradient_bound} and the bound on the gradient noise in Assumption~\ref{assumption:gradient_noise}.\\
\noindent (v) For the fifth term on the RHS of~\eqref{eq:upper_bound_disagreement_3}, we have:
\begin{align}
\label{eq:term_fifth}
    &\quad \  \bbE\sum_{k=1}^K\sum_{\ell=1}^K c_{\ell k}  \frac{\mu^2}{4} \left\| \bs_{k,t+1}(\bw_{k,t}) \right\|^2
    \nonumber \\
    & = \frac{\mu^2}{4} \sum_{k=1}^K \bbE\big\{ \bbE\{\|\bs_{k, t+1}\|^2|\calF_{t}\} \big\} \leq \frac{\mu^2 K \sigma^2}{4} 
\end{align}
where the last inequality follows the bound on the gradient noise in Assumption~\ref{assumption:gradient_noise}.
\end{color}

Combining the above five results in~\eqref{eq:term_first}-\eqref{eq:term_fifth} with the weighted summation and expectation of~\eqref{eq:upper_bound_disagreement_3}, we obtain the desired result.
\end{proof}

\section{Proof of Theorem~\ref{theorem:convergence}}
\label{appx:theorem:convergence}
\begin{proof}
Consider Theorem~\ref{theorem:network_agreement}, we can bound $\bbE V_F(\bw_{t})$ after sufficient number of iterations $t_o$ and thus rewrite the result in Lemma~\ref{lemma:consensus_bias_bound} as
\begin{align}
    \label{eq:consensus_bias_bound_2}
     \bbE P(\bphi_{t+1}) \leq \calO(\mu^2) + \calO(\mu^4)\,,\quad \forall \, t\geq t_o\,.
\end{align}
From Lemma~\ref{lemma:cost_relation}, we have:
\begin{align}
    \label{eq:cost_relation_2}
    \bbE \| {\nabla J}(\bw_{t})\|^2 & \leq \frac{4}{\mu K} \bbE \left[J(\bw_{t}) - J(\bw_{t+1})\right] \nonumber \\ & \quad  + \frac{18\alpha^2}{\mu^2 K^2}\bbE P(\bphi_{t+1})\,.
\end{align}
Summarize~\eqref{eq:cost_relation_2} from $t = t_o+1$ to $T$, we have:
\begin{align}
    \label{eq:cost_relation_3}
    & \quad \
    \frac{1}{T-t_o}\sum_{t=t_o+1}^T\bbE \| {\nabla J}(\bw_{t})\|^2 \nonumber \\  &\leq \frac{4}{\mu K(T-t_0)} \bbE \left[J(\bw_{t_o+1}) - J(\bw_{T+1})\right] \nonumber \\ & \quad  +  \frac{18\alpha^2}{\mu^2 K^2(T-t_0)}\sum_{t=t_o+1}^{T} \bbE P(\bphi_{t+1})\,.
\end{align}
Combining the results in~\eqref{eq:consensus_bias_bound_2} and~\eqref{eq:cost_relation_3}, we obtain the desired result. 
\end{proof} 

\section{Proof of lemma~\ref{lemma:cost_relation_pl}}
\label{appx:lemma:cost_relation_pl}
\begin{proof}
    Using the \cred{local} Riemannian PL condition in the descent inequality~\eqref{eq:cost_relation} of Lemma~\ref{lemma:cost_relation}, \cred{for $t\geq t_p+1$}, we have:
\begin{align}
    \bbE J(\bw_{t+1}) & \leq \bbE J(\bw_{t}) - \frac{\mu K}{4\tau} \bbE \left\{J(\bw_{t}) - J(\bw^*)\right\}\nonumber \\ & \quad  + \frac{9\alpha^2}{2\mu K}\bbE P(\bphi_{t+1}) \,.
\end{align}
The proof follows by subtracting $J(\bw^*)$ from both sides of the inequality above.
\end{proof}

\section{Proof of Theorem~\ref{theorem:convergence_pl}}
\label{appx:theorem:convergence_pl}
\begin{proof}
We recursively apply~\eqref{eq:cost_relation_pl} in Lemma~\ref{lemma:cost_relation_pl} for $t \geq \cred{\underline t} + 1$ to obtain:
\begin{align}
    &\quad \ \bbE \left\{J(\bw_{t}) - J(\bw^*)\right\} \nonumber \\ &\leq \left(1 - \frac{\mu K}{4\tau}\right)^{t-\cred{\underline t}}\bbE \left\{J(\bw_{\cred{\underline t}}) - J(\bw^*)\right\} \nonumber \\ & \quad  + \frac{9\alpha^2}{2\mu K} \sum_{s=\cred{\underline t}+1}^t \left(1 - \frac{\mu K}{4\tau}\right)^{s-\cred{\underline t}} \bbE P(\bphi_{s+1}) \nonumber \\
    &\leq \left(1 - \frac{\mu K}{4\tau}\right)^{t-\cred{\underline t}}\bbE \left\{J(\bw_{\cred{\underline t}}) - J(\bw^*)\right\} \nonumber \\ & \quad  + \calO(\alpha^2) + \calO(\alpha^2\mu^2) \,,
\end{align}
where the second inequality holds from the fact that $\sum_{s=\cred{\underline t}+1}^t \left(1 - \frac{\mu K}{4\tau}\right)^{s-\cred{\underline t}} \leq \sum_{s=\cred{\underline t}+1}^\infty \left(1 - \frac{\mu K}{4\tau}\right)^{s-\cred{\underline t}} \leq \frac{4\tau}{\mu K}$ and the result in~\eqref{eq:consensus_bias_bound_2}.
\end{proof} 

\section{Grassmann manifold}
\label{appx:grassmann_manifold}
The Grassmann manifold $\GR$, a set of $p$-dimensional linear subspaces of $\bbR^n$, can be regarded as a smooth quotient manifold of the Stiefel manifold $\ST = \{\bU\in\bbR^{n\times p}: \bU^T\bU = \bI_p\}$, i.e., $\GR = \ST / \OG = \{\pi(\bU): \bU \in \ST\}$ where $\OG = \{\bO \in \bbR^{p\times p}: \bO^T\bO = \bI_p\}$ is the orthogonal group and $\pi: \ST \to \GR$ is the map $\pi(\bU) = \{\bU \bO:\bO \in \OG\}$.
The geodesic distance between two subspaces $\pi(\bU_1)$ and $\pi(\bU_2)$ of $\GR$, spanned by orthonormal matrices $\bU_{\!1}$ and $\bU_{\!2}$, is defined as follows~\cite{edelman1998geometry}:
\begin{equation}
    \label{eq:grassmann_distance}
	d_{\GR}(\bU_{\!1}, \bU_{\!2}) = \|\cos^{-1}(\btheta)\|_2 \,,
\end{equation}
where $\btheta\in\amsmathbb{R}^{p}$ contains the singular values of $\bU_{\!1}^T\bU_{\!2}$, namely, it 
is related to its singular value decomposition (SVD) as $\bU_{\!1}^T\bU_{\!2}=\bV_{\!1}^T\diag(\btheta)\bV_{\!2}$.
Define $\bar f: \ST \to \bbR$, we have $f(\pi(\bU)) = \bar f (\bU)$ for all $\pi(\bU)\in \GR$. The Riemannian gradient $\nabla f$ at $\pi(\bU)\in \GR$ is given by
\begin{equation}
    \label{eq:grassmann_gradient}
	\nabla f(\pi(\bU)) = \nabla \bar f(\bU) = \bP_{\bU}^{\GR}(\bG)\,,
\end{equation}
with $\bP_{\bU}^{\GR}(\bG) = (\bI - \bU\bU^T)\bG$, 
where $\bG \in \bbR^{n\times p}$ is the Euclidean gradient of $\bar f$ at $\bU$. 
Let $\bxi \in T_{\pi(\bU)}\GR$ and $\bX\bSigma\bY=\bxi$ be the thin SVD of $\bU + \bxi\in\amsmathbb{R}^{n\times p}$. 
Then, the exponential mapping is represented as~\cite{boumal2023introduction}
\begin{equation}
    \label{eq:grassmann_exp}
	\exp_{\pi(\bU)}(\bxi) = \bU\bY\cos(\bSigma) + \bX\sin(\bSigma) \,.
\end{equation}

\begin{color}{black}
\section{Fixed-rank manifold}
\label{appx:fixrank_manifold}
The fixed-rank manifold of rank-$r$ matrices in $\bbR^{m\times n}$ is defined by $\calM_r \triangleq \{\bA\in\bbR^{m\times n}: \mathrm{rank}(\bA)=r\}$.
Any point $\bA\in\calM_r$ admits a compact SVD representation $\bA = \bU\bS\bV^T$,
where $\bU\in\bbR^{m\times r}$ and $\bV\in\bbR^{n\times r}$ satisfy $\bU^T\bU=\bI_r$ and $\bV^T\bV=\bI_r$, while $\bS\in\bbR^{r\times r}$ is nonsingular.
Define $f: \calM_r \to \bbR$, and let $\bG\in\bbR^{m\times n}$ be the Euclidean gradient of $f$ at $\bA$. The Riemannian gradient $\nabla f$ at $\bA$ is represented by~\cite{vandereycken2013low}
\begin{equation}
    \label{eq:fixedrank_gradient}
    \nabla f(\bA)= \bU\bM\bV^T + \bU_p\bV^T + \bU\bV_p^T\,,
\end{equation}
where $\bM=\bU^T\bG^T\bV$, $\bU_p=\bG\bV-\bU\bM$ and $\bV_p=\bG^T\bU-\bV\bM^T$.
Let $\bxi \in T_{\bA}\calM_r$ with its unique representation $\bxi = \bU\dot{\bS}\bV^T + \bU_p \bV^T + \bU\bV_p^T$ where $\bU^T\bU_p=\boldsymbol{0}$ and $\bV^T\bV_p=\boldsymbol{0}$, then the orthographic retraction is given by~\cite{absil2012projection, absil2015low}
\begin{equation}
    \label{eq:fixedrank_retraction}
    R_{\bA}(\bxi) = \big(\bU(\bS+\dot{\bS}) + \bU_p\big)(\bS+\dot{\bS})^{-1}\big((\bS+\dot{\bS})\bV^T + \bV_p^T\big)\,.
\end{equation}
Given a nearby point $\bB\in\calM_r $, the inverse orthographic retraction at $\bA$ is given by~\cite{absil2015low}
\begin{equation}
    \label{eq:fixedrank_inv_retraction}
    R_{\bA}^{-1}(\bB) = \bB\bV\bV^T + \bU\bU^T\bB - \bU\bU^T\bB\bV\bV^T - \bA\,.
    \end{equation}
\end{color}

\bibliographystyle{IEEEtran}
\bibliography{manifoldOptRefs}


 





\end{document}